\documentclass[11pt,a4paper]{article}
\usepackage[T1]{fontenc}
\usepackage[utf8]{inputenc}
\usepackage{lmodern}

\usepackage[a4paper,margin=2.35cm]{geometry}
\usepackage{amsmath,amssymb,mathtools,bm}
\usepackage{graphicx}
\usepackage{float}
\usepackage{booktabs,tabularx,array}
\usepackage{enumitem}
\usepackage{xcolor}
\usepackage{hyperref}
\usepackage{microtype}
\usepackage{fancyhdr}
\usepackage{tikz-cd}
\usepackage{amsthm}

\newtheorem{theorem}{Theorem}[section]
\newtheorem{lemma}{Lemma}[section]

\definecolor{linkblue}{RGB}{27,79,114}
\definecolor{muted}{RGB}{90,96,102}
\hypersetup{
  colorlinks=true,
  linkcolor=linkblue,
  citecolor=linkblue,
  urlcolor=linkblue,
  pdftitle={Statistics, 't Hooft Anomaly, and the Else--Nayak Index: a careful comparison of concepts},
  pdfauthor={Hanyu Xue}
}

\newenvironment{gaugetetrahedron}[4]{%
  \begin{tikzpicture}[baseline=(current bounding box.center),
      every node/.style={font=\small,inner sep=2pt},>=latex,
      gauge edge/.style={->,shorten <=2pt,shorten >=2pt}]
    \node[anchor=east] (g0) at (-2.25,0) {$#1$};
    \node[anchor=south] (g1) at (0,1.35) {$#2$};
    \node[anchor=west] (g2) at (2.25,0) {$#3$};
    \node[anchor=north] (g3) at (0,-1.35) {$#4$};
}{%
  \end{tikzpicture}%
}

\newcommand{\gaugetetrahedronedges}[6]{%
    \draw[gauge edge] (g0.east) -- node[above left] {$#1$} (g1.south);
    \draw[gauge edge] (g1.south) -- node[above right] {$#2$} (g2.west);
    \draw[gauge edge] (g2.west) -- node[below right] {$#3$} (g3.north);
    \draw[gauge edge] (g0.east) -- node[below left] {$#6$} (g3.north);
    \draw[gauge edge] (g1.south) -- node[right,pos=0.23] {$#5$} (g3.north);
    \draw[gauge edge,preaction={draw=white,line width=3pt}]
      (g0.east) -- node[below,pos=0.45,anchor=north east] {$#4$} (g2.west);
}

\newcommand{\gaugetriangle}[6]{%
  \begin{tikzpicture}[baseline=(current bounding box.center),
      every node/.style={font=\small,inner sep=2pt},>=latex]
    \node[anchor=east] (g0) at (-2.25,0) {$#1$};
    \node[anchor=south] (g1) at (0,1.35) {$#2$};
    \node[anchor=west] (g2) at (2.25,0) {$#3$};
    \draw[->,shorten >=2pt] (g0.east) -- node[above left] {$#4$} (g1.south);
    \draw[->,shorten <=2pt,shorten >=2pt] (g1.south) -- node[above right] {$#5$} (g2.west);
    \draw[->,shorten >=2pt] (g0.east) -- node[below] {$#6$} (g2.west);
  \end{tikzpicture}%
}

\setlist{nosep,leftmargin=2.2em}
\renewcommand{\arraystretch}{1.25}
\newcommand{\ZZ}{\mathbb Z}
\newcommand{\RR}{\mathbb R}
\newcommand{\RZ}{\RR/\ZZ}
\newcommand{\U}{\mathrm U}
\newcommand{\ii}{\mathrm i}
\newcommand{\one}{\mathbf 1}
\newcommand{\supp}{\operatorname{supp}}

\newcommand{\Hom}{\operatorname{Hom}}

\newcommand{\Ad}{\operatorname{Ad}}

\newcommand{\ket}[1]{\lvert #1\rangle}
\newcommand{\bra}[1]{\langle #1\rvert}
\newcommand{\confstate}[1]{\ket{#1}^{\mathrm{conf}}}
\newcommand{\e}{\mathrm e}
\newcommand{\dd}{\mathrm d}

\title{\bfseries Statistics, 't Hooft Anomaly, and the Else--Nayak Index: a careful comparison of concepts}
\author{Hanyu Xue\\[0.35em]\normalsize Massachusetts Institute of Technology\\[0.2em]\normalsize\texttt{xhy2002@mit.edu}}
\date{}

\begin{document}

\maketitle

\begin{abstract}
Generalized symmetries and topological excitations, as well as symmetry anomalies and the statistics of topological excitations, are widely believed to be related. There are, however, pitfalls in how this relation is established. A lattice truncation of a symmetry transformation to a finite patch gives a symmetry patch operator that creates symmetry defects at its boundary. This geometric picture resembles a hopping operator creating topological excitations at the boundary of its support, but does not provide well-defined statistics, let alone guarantee agreement with the symmetry anomaly. A more natural and robust relation is that the hopping operators of topological excitations are symmetric: they commute with symmetry transformations. Under suitable assumptions, this condition yields a one-to-one correspondence between statistics and anomalies. We further couple boundary matter to a DW gauge field in one higher dimension to explain this relation from the perspective of gauging. A hopping operator is, in essence, a gauge-invariant operator acting on the physical degrees of freedom after gauging. Once the gauge-field background is fixed, global symmetry comes from gauge transformations that preserve that background, while the symmetric condition on hopping is precisely the remaining gauge invariance. This distinction clarifies potential misconceptions in the literature and provides a more reliable framework for comparing symmetries and topological excitations.
\end{abstract}

\tableofcontents

\section{Introduction}
\label{sec:intro}

Symmetries and topological excitations are often discussed together in the study of topological order. Two important concepts associated with them are symmetry anomalies and the statistics of topological excitations. Classic field-theoretic examples of quantum anomalies include the 1969 studies of chiral-current conservation by Adler and Bell--Jackiw~\cite{Adler1969AxialAnomaly,BellJackiw1969PCAC}. The 't Hooft anomaly considered here comes more specifically from the idea of anomaly matching proposed by 't Hooft in 1980: a global symmetry can hold in the original theory yet fail to admit consistent gauging, and this obstruction must match along the renormalization-group flow~\cite{tHooft1980AnomalyMatching}. This perspective was later extended to higher-form symmetries and the boundaries of SPT phases~\cite{GaiottoEtAl2015GeneralizedSymmetries,ChenGuLiuWen2013,ElseNayak2014}.

Statistics, on the other hand, originates from bosonic, fermionic, and more general anyonic statistics of quasiparticles~\cite{Wilczek1982FractionalSpin,Kitaev2003ToricCode,Kitaev2006Anyons}, and has been extended to higher-dimensional extended topological excitations~\cite{WangLevin2014LoopBraiding,Fidkowski_2022,Kobayashi_2026,Xue2026Statistics,Feng_2026,xue2026bocksteinbraidingstatisticsversus}. Recent developments in generalized symmetries and extended topological excitations have revealed that 't Hooft anomalies and statistics share the same classification and may be related~\cite{GaiottoEtAl2015GeneralizedSymmetries,FengEtAl2025HigherFormAnomalies}. However, in the absence of rigorous and widely accepted definitions of anomalies, topological excitations, and statistics, attempts to establish this relation are often vague and difficult to test. Particular caution is needed: conclusions drawn from the simplest models may describe special cases rather than a universal relation between the two concepts.

Consider the toric code~\cite{Kitaev2003ToricCode}. From the symmetry perspective, an $X$-loop operator commutes with every term in the Hamiltonian, so these loop operators form a 1-form global symmetry. From the perspective of topological excitations, an $X$-string operator supported on an open string is a hopping operator for $m$ anyons. For the discussion below, we use the language of cochains and denote the cell complex of the primal lattice by $X$. We place a qubit on each primal edge. The resulting many-body Hilbert space is spanned by $Z$-basis states $\ket{v}$, where $v\in C^1(X,\ZZ_2)$ is interpreted as a matter field. An $X$ string runs along the dual lattice; the primal edges it crosses define $s\in C^1(X,\ZZ_2)$, while $\dd s$ records the $m$ anyons on the primal faces containing its endpoints. A closed string therefore corresponds to $z\in Z^1(X,\ZZ_2)$, whereas a general string corresponds to $s\in C^1(X,\ZZ_2)$. The corresponding symmetry transformation and hopping operator are
\begin{equation*}
  \mathcal S(z)\ket v=\ket{v+z},\qquad
  U(s)\ket v=\ket{v+s}.
\end{equation*}
This seems to suggest the following correspondence: truncating a symmetry transformation to a patch gives a hopping operator for topological excitations; conversely, a hopping operator with closed support is a symmetry transformation.

However, symmetry transformations and hopping operators are conceptually different. In the toric-code example, the $X$ string above merely translates basis labels without an additional quantum phase, so the two actions happen to have the same formula when $s=z$. This is a coincidence in the simplest case. In twisted cases, the same translation of basis labels can carry different quantum phases, exposing the conceptual distinction between the two operators. An important illustration is that \textbf{when the same formula is used to construct a symmetry transformation and a hopping operator, its symmetry anomaly and statistics are generally unequal}.

We first outline the definitions in the two theories; detailed discussions appear in Sections~\ref{sec:else-nayak} and \ref{sec:statistics}. Consider a global symmetry $G$ on a one-dimensional lattice system, defined as a family of finite-depth local quantum circuits $\{\mathcal S(g)\mid g\in G\}$ satisfying $\mathcal{S}(g)\mathcal{S}(h)=\mathcal{S}(gh)$. To restrict the symmetry to an interval $I$, we truncate this family to $\{P_I(g)\mid g\in G\}$, with arbitrary choices of truncation at the two endpoints. We call such an operator a symmetry patch operator and say that it creates symmetry defects at the endpoints. The truncated operators generally fail to obey the group multiplication law and instead acquire endpoint corrections:
\begin{equation}\label{eq: symmetry patch multiplication}
	P_I(g)P_I(h)=\Omega_a(g,h)\Omega_b(g,h)P_I(gh),
\end{equation}
where $a,b$ are the endpoints of $I$, and $\Omega_a(g,h)$ and $\Omega_b(g,h)$ commute. We can regard $\Omega_a(g,h)$ as the fusion of symmetry defects at $a$. Its associator determines a cohomology class in $H^3(BG,\U(1))$, called the Else--Nayak index~\cite{ElseNayak2014}. This class is independent of the choice of truncation $P_I(g)$ and thus gives an invariant of the full symmetry $\{\mathcal S(g)\mid g\in G\}$. It is precisely the 't Hooft anomaly of the global symmetry, or its obstruction to gauging~\cite{GaiottoEtAl2015GeneralizedSymmetries}. This notion has been generalized to higher-form symmetries, including $(q-1)$-form symmetries in $d$ spatial dimensions~\cite{GaiottoEtAl2015GeneralizedSymmetries}. For example, Ref.~\cite{FengEtAl2025HigherFormAnomalies} argues that lattice truncation of a global symmetry, together with certain coherence data for composition, yields a 't Hooft anomaly classified by $H^{d+2}(B^qG,\RZ)$, where $B^qG=K(G,q)$ is an Eilenberg--MacLane space.

We next discuss the definitions of topological excitations and statistics. In the geometric picture of a lattice model, a hopping operator supported on a $(p+1)$-dimensional cell creates topological excitations on its $p$-dimensional boundary. For example, a string operator creates quasiparticles at its endpoints. In this paper, it is more convenient to use the dual-cell picture: a hopping operator associated with a $(q-1)$-cell creates excitations on its $q$-dimensional coboundary. Levin--Wen already characterized the statistics of lattice particles through algebraic relations among hopping operators~\cite{LevinWen2003}. In Refs.~\cite{KawagoeLevin2020,Fidkowski_2022}, a sequence of hopping operators $U(s)$ and their inverses $U(s)^{-1}$ is applied in a prescribed order to an excited state $\confstate{a}$. The sequence returns to the same state, and one studies its Berry phase
\begin{equation}
	{}^{\mathrm{conf}}\!\bra{a}\,
	U(s_n)^{\varepsilon_n}\cdots U(s_1)^{\varepsilon_1}
	\confstate{a}\in \U(1),\qquad \varepsilon_i=\pm 1.
\end{equation}
Geometrically, the configuration of topological excitations moves and deforms in space. For suitably designed sequences, the Berry phase is invariant under certain local perturbations of $U(s)$, giving a well-defined statistical invariant.

We adopt the precise definition of topological excitations in \textit{Statistics of Abelian Topological Excitations}~\cite{Xue2026Statistics}, which abstracts and generalizes Refs.~\cite{KawagoeLevin2020,Fidkowski_2022}. Namely, we extract the assumptions used in Refs.~\cite{KawagoeLevin2020,Fidkowski_2022} to prove invariance of the Berry phase under certain local perturbations of $U(s)$, and take these assumptions directly as the definition of topological excitations.\footnote{There are other definitions of topological excitations in the literature, such as the entanglement bootstrap. They differ substantially but appear to have physical meaning as well; see Section~\ref{sec:statistics}.} We do not begin with a Hamiltonian and solve for its ground and excited states. Instead, we define topological excitations directly as a family of states and hopping operators that jointly satisfy two axioms. At the level of states and operators, these axioms express two properties: hopping operators have local support, and they create excitations at the boundary. We place the lattice model on a triangulated sphere $X\simeq S^d$ and take the fusion rule of the excitations to be a finite Abelian group $G$, with codimension $q=d-p$. A hopping operator is then parameterized by a $(q-1)$-simplex $\alpha$ and a coefficient $g\in G$; the excitations it creates correspond geometrically to the coboundary $\dd(g\alpha)\in B^q(X,G)$. For these fixed geometric inputs, all solutions to the topological excitation axioms form a topological space $\mathcal{M}^q(X,G)$. We define statistics directly as its connected components $\pi_0\mathcal{M}^q(X,G)$. For every triangulation $X$ of the sphere $S^d$, one can prove that
\begin{equation}\label{eq: classification of statistics}
	\pi_0\mathcal{M}^q(X,G)\simeq H^{d+2}(B^{q}G,\RZ)
\end{equation}
This approach naturally yields the notion of statistics and its classification by $H^{d+2}(B^{q}G,\RZ)$.\footnote{We have completed the full proof and formally verified it in Lean; the paper is still in preparation. Refs.~\cite{Xue2026Statistics,XueWen2026Holographic} prove some special cases.} When $d=q=2$, the classification $H^4(B^2G,\RZ)$ agrees with the braided fusion categories formed by the braiding and half-braiding of Abelian anyons~\cite{EilenbergMacLane1954,JoyalStreet1993}. Crucially, the axioms \textbf{do not require a hopping operator to implement a symmetry transformation in its interior}.

We do not dispute the existence of a relation between symmetries and topological excitations, or between anomalies and statistics. Rather, we argue that this relation does not come from using the same formula to construct symmetry transformations and hopping operators. The correct relation is to \textbf{regard hopping operators as symmetric operators that commute with the global symmetry}: the fusion rule of topological excitations reflects a conservation law associated with the symmetry, while their superselection sectors characterize its eigenspaces. This supports the view that a symmetry can be characterized not only by symmetry transformations but also by the algebra of local symmetric operators.\footnote{This resembles the approach of Chatterjee--Wen, who define global symmetry through local symmetric operators, but our starting point is different: we do not assume that symmetry patch operators belong to the algebra generated by local symmetric operators~\cite{ChatterjeeWen2023}.} In the arguments below, we assume that the matter field is described by $v\in C^{q-1}(X,G)$, and that $\mathcal{S}(z)$ and $U(s)$ attach quantum phases constructed from natural cochain operations to the ordinary translations $v\mapsto v+z$ and $v\mapsto v+s$, respectively. Within this \textit{cochain ansatz}, we prove that $U$ satisfies the topological excitation axioms if and only if it is symmetric, and that its statistics then agrees exactly with the anomaly of $\mathcal{S}$. Section~\ref{sec:relation} treats the special case of a one-dimensional $\ZZ_2$ symmetry, and Section~\ref{sec:what-is-hopping} extends the result to the general case; see Theorem~\ref{thm:symmetric-hopping-criterion}. We summarize the relation between symmetries and topological excitations in the following table.
\begin{center}
	\small
	\begin{tabularx}{0.96\textwidth}{@{}>{\bfseries\raggedright\arraybackslash}p{0.18\textwidth}
			>{\raggedright\arraybackslash}p{0.27\textwidth}
			>{\raggedright\arraybackslash}p{0.24\textwidth}
			>{\raggedright\arraybackslash}X@{}}
		\toprule
		Operator role & Relation to $\mathcal S$ & Associated invariant & Allowed perturbation\\
		\midrule
		symmetry patch operator
		& Implements $\mathcal S$ in the interior, but need not commute with $\mathcal S$
		& boundary coherence data (Else--Nayak index)
		& boundary perturbation\\
		hopping operator
		& Commutes with $\mathcal S$, but its interior action is not fixed by $\mathcal S$
		& statistics
		& symmetric perturbation\\
		\bottomrule
	\end{tabularx}
\end{center}

This relation can also be understood through gauging. In Section~\ref{subsec:gauging-perspective}, we couple the matter field $v$ on the boundary $X=\partial M$ to a DW gauge field $A$ in the bulk, and use gauge averaging to construct the physical space satisfying all Gauss conditions. The gauge-invariant variable $a=A_X+\dd v$ of the coupled system is precisely the configuration of topological excitations, where $A_X$ is the restriction of the gauge field to the boundary. Gauge transformations relate different representations $(A,v)$ of the same $a$, whereas hopping acts on this physical degree of freedom by $a\mapsto a+\dd s$. After fixing the gauge field to $A=b$, gauge transformations that preserve this background act on the matter degrees of freedom as global symmetry. The remaining Gauss conditions require the matter state to be symmetric. The symmetric subspace therefore describes gauge-invariant states in a fixed background.

Physical hopping in the coupled system commutes with the full Gauss action. In a fixed background, this becomes the commutation relation between $U(s)$ and $\mathcal S(z)$. On the gauge-invariant space, the Gauss action is the identity, while hopping still acts on the remaining physical degrees of freedom. Gauging thus explains both why hopping must be symmetric and why it has a different physical role from global symmetry.

The fundamental distinction between the two concepts, however, manifests itself only in quantum phases and is not geometrically transparent. This leaves room for confusion. For a $(q-1)$-form symmetry, a symmetry patch operator acts on a disk $D$ of codimension $q-1$ and creates symmetry defects on its codimension-$q$ boundary $\partial D$. This does fit the geometric intuition for topological excitations, tempting one to identify symmetry patch operators with hopping operators and conjecture that their Else--Nayak index and statistics give the same element of $H^{d+2}(B^qG,\RZ)$. For example, Section V of Ref.~\cite{Kobayashi_2026} claims that the 't Hooft anomaly of a given global symmetry can be obtained by truncating it and computing statistics. However, the condition on symmetry patch operators in Eq.~\eqref{eq: symmetry patch multiplication} and the topological excitation axioms for hopping operators are two conditions neither of which implies the other. Truncating a global symmetry produces a symmetry patch operator, which generally does not satisfy the topological excitation axioms and therefore has no well-defined statistics. One possible repair is to impose these axioms as additional conditions on the truncation. In other words, we can consider an operator that is both a symmetry patch operator and a hopping operator, which we call a \textit{double-role operator}. Both its anomaly and its statistics can then be computed. The problem is that, in most cases, the statistics and anomaly of a double-role operator differ. In the one-dimensional $\ZZ_2$ example of Section~\ref{sec:relation}, we use the same formula to construct symmetry transformations and hopping operators and find that all four combinations of anomaly and statistics can occur.

To provide enough examples to rule out possible counterarguments, Section~\ref{sec:double-role-operators} systematically constructs many double-role operators within the cochain ansatz for various $d$ and $B^qG$, and computes their statistics--anomaly pairs $([\omega_{\mathrm{stat}}],[\omega_{\mathrm{anom}}])\in \mathsf H\times\mathsf H$, where $\mathsf H$ abbreviates $H^{d+2}(B^qG,\RZ)$. We call $\mathsf H\times\mathsf H$ the statistics--anomaly table. The realizable pairs form a subgroup
\begin{equation}
	\mathcal F\subseteq\mathsf H\times\mathsf H,
\end{equation}
called the \emph{filling pattern} of the table. A larger $\mathcal F$ means that statistics and anomaly can vary more independently. If the statistics and anomaly of a double-role operator were always the same, the filling pattern would be the diagonal $\mathcal F=\{([\omega],[\omega])\mid[\omega]\in \mathsf H\}$. Our results are different. For $H^3(B\ZZ_2^2,\RZ)$ and $H^4(B\ZZ_2^2,\RZ)$, we find $\mathcal F=\mathsf H\times\mathsf H$. For $H^3(B\ZZ_2^3,\RZ)$, $\mathcal F$ is a subgroup of index $2$. For $H^4(B^2G,\RZ)$ and the finite Abelian groups $G$ we have computed, $\mathcal F$ is the anti-diagonal $\{([\omega],-[\omega])\mid[\omega]\in\mathsf H\}$. For $H^6(B^2\ZZ_N,\RZ)$, the computed cases $N=2,3,4$ give $\mathcal F=\{([\omega],-2[\omega])\mid[\omega]\in\mathsf H\}$. In the familiar case of $H^4(B^2G,\RZ)$, corresponding to two-dimensional anyons and 1-form symmetry, statistics and anomaly do agree up to a minus sign. Our more general results, however, show that this behavior is difficult to extend. In our view, this intricate relation indicates that the double-role operator is not the right concept for relating symmetries and topological excitations.

Since we have established that being symmetric is an important property of hopping operators, a reader may suggest the following: for a given global symmetry $\mathcal{S}$, we could additionally require its truncation to be symmetric. That is, we require the symmetry patch operator $P_I(g)$ to satisfy
\begin{equation}\label{eq:symmetric truncation}
	P_I(g)\mathcal{S}(h)=\mathcal{S}(h)P_I(g),\quad \forall g,h\in G.
\end{equation}

This is indeed the approach attempted in Ref.~\cite{HsinChen2026Bockstein}. Such a requirement excludes off-diagonal entries of double-role operators in the statistics--anomaly table. The problem is that some types of anomalous symmetry admit no symmetric truncation at all. This is precisely why the corresponding diagonal pairs $([\omega],[\omega])$ are absent from the filling pattern. In particular, Section~\ref{subsec:lcw-type-iii} constructs a specific one-dimensional anomalous symmetry with $G=\ZZ_2^3$ and proves rigorously that it admits no symmetric truncation. Section~\ref{sec:statistical-obstruction} further explains this obstruction and its generalizations through the statistics of topological excitations. Thus, imposing symmetric truncation does not give a general way to relate statistics and anomaly.

\paragraph{Notation and conventions.}

\begin{itemize}
	\item Unless stated otherwise, we work with lattice models in space rather than spacetime. We denote the spatial dimension by $d$ and take space to be a triangulated $d$-sphere $X$.
	\item We consider only symmetries or fusion rules described by a finite Abelian group $G$.
	\item $C^k(X,G)$, $Z^k(X,G)$, and $B^k(X,G)$ denote simplicial cochains, cocycles, and coboundaries on $X$, respectively; $\dd$ is the simplicial coboundary. Throughout, $s,v$ denote cochains and $z$ denotes a cocycle.
\end{itemize}

\section{Symmetry transformations and the Else--Nayak index}
\label{sec:else-nayak}

\subsection{Symmetry patch operators and endpoint associators}
\label{subsec:symmetry-patch-endpoint}

The Else--Nayak index has its most rigorous definition in one spatial dimension. Consider a lattice model on $S^1$ whose Hilbert space $\mathcal{H}$ has a tensor-product decomposition and carries a global symmetry
\begin{equation}
  \mathcal S(g)\mathcal S(h)=\mathcal S(gh),\qquad g,h\in G.
\end{equation}
Let $I$ be an interval with endpoints $a,b$. A symmetry patch operator $P_I(g)$
acts as $\mathcal S(g)$ in the interior of $I$ and acts trivially outside $I$ away from $a,b$. This condition
does not specify its implementation near the endpoints, so
\begin{equation}
  P_I(g)\longmapsto P'_I(g)=B_a(g)B_b(g)P_I(g),
  \label{eq:patch-ambiguity}
\end{equation}
is still a valid choice, where $B_a(g)$ and $B_b(g)$ are arbitrary unitary operators supported near the two endpoints, respectively. We will construct a $3$-cocycle $\omega\in Z^3(BG,\RZ)$ from $P_I$ and prove that the modifications $B_a$, $B_b$ change it only by a coboundary. Thus, $[\omega]\in H^3(BG,\RZ)$ is an invariant that depends only on the symmetry $\mathcal{S}$ itself and characterizes its anomaly. In this construction, $B_a(g)$ and $B_b(g)$ need not be symmetric operators, nor does $P_I(g)$ need to commute with the
global symmetry. We fix the symmetry patch operator for the identity element $e\in G$ to be $P_I(e)=\operatorname{id}_{\mathcal{H}}$.

Since the group law holds exactly in the interior of the patch, any deviation can occur only at the two endpoints:
\begin{equation}
  P_I(g)P_I(h)=\Omega_a(g,h)\Omega_b(g,h)P_I(gh),
  \label{eq:patch-fusion}
\end{equation}
where $\Omega_a(g,h)$ and $\Omega_b(g,h)$ are supported near $a,b$, respectively. Multiplication of the full operators, of course,
always satisfies associativity. Comparing the two ways of associating
$P_I(g)P_I(h)P_I(k)$ and retaining only the left endpoint defines a phase
$\nu(g,h,k)$:
\begin{equation}
  \Omega_a(g,h)\Omega_a(gh,k)
  =\nu(g,h,k)\,
   {}^{P_I(g)}\!\Omega_a(h,k)\,\Omega_a(g,hk),
  \qquad {}^U\!O:=UOU^{-1}.
  \label{eq:EN-cocycle}
\end{equation}
The right endpoint contributes the inverse phase, so there is no violation of associativity on the full Hilbert space; the obstruction
appears only when one attempts to isolate the boundary operation at a single endpoint. Pentagon consistency implies that
$\nu$ is a $3$-cocycle.

We now verify that changing the truncation changes only the cocycle representative. For
$P'_I(g)$ in Eq.~\eqref{eq:patch-ambiguity}, define
\begin{equation}
  \widetilde\Omega_x(g,h)
  :=B_x(g)\,{}^{P_I(g)}\!B_x(h)\,\Omega_x(g,h)B_x(gh)^{-1},
  \qquad x\in\{a,b\}.
  \label{eq:transported-endpoint-factor}
\end{equation}
A direct calculation gives
\begin{equation}
  P'_I(g)P'_I(h)
  =\widetilde\Omega_a(g,h)\widetilde\Omega_b(g,h)P'_I(gh).
  \label{eq:transported-patch-fusion}
\end{equation}
For an operator $O$ supported near $a$, Eq.~\eqref{eq:patch-fusion} gives
\begin{equation}
  {}^{P_I(g)}\!\bigl({}^{P_I(h)}\!O\bigr)
  =\Omega_a(g,h)\,{}^{P_I(gh)}\!O\,\Omega_a(g,h)^{-1}.
  \label{eq:endpoint-action-composition}
\end{equation}
Substituting Eqs.~\eqref{eq:transported-endpoint-factor}--\eqref{eq:endpoint-action-composition}
into Eq.~\eqref{eq:EN-cocycle} yields
\begin{equation}
  \widetilde\Omega_a(g,h)\widetilde\Omega_a(gh,k)
  =\nu(g,h,k)\,
   {}^{P'_I(g)}\!\widetilde\Omega_a(h,k)
   \widetilde\Omega_a(g,hk).
  \label{eq:transported-EN-cocycle}
\end{equation}
Thus, $\nu$ itself is unchanged when the endpoint factors are modified together in this way. Since the intersection of the local operator algebras
at the two endpoints consists only of scalars, any other factorization can differ only by a phase-valued $2$-cochain $\beta$:
\begin{equation}
  \Omega'_a(g,h)=\beta(g,h)^{-1}\widetilde\Omega_a(g,h),
  \qquad
  \Omega'_b(g,h)=\beta(g,h)\widetilde\Omega_b(g,h).
\end{equation}
The corresponding cocycle satisfies
\begin{equation}
  \nu'(g,h,k)
  =\nu(g,h,k)
   \frac{\beta(h,k)\beta(g,hk)}{\beta(gh,k)\beta(g,h)}
  =\nu(g,h,k)(\delta\beta)(g,h,k).
  \label{eq:truncation-coboundary}
\end{equation}
Hence, changing the truncation does not change the cohomology class~\cite{ElseNayak2014}. The Else--Nayak
index is
\begin{equation}
  [\nu]\in H^3(BG,\U(1)).
\end{equation}
\subsection{Explicit calculation for an anomalous \texorpdfstring{$\ZZ_2$}{Z2} symmetry}
\label{subsec:z2-anomalous-symmetry}

We now triangulate $S^1$ to obtain an $n$-gon $X$ with oriented edges $[i,i+1]$. We place a qubit at each vertex, so the full Hilbert space has basis
$\{\ket v\mid v\in C^0(X,\ZZ_2)\}$.

We construct the standard anomalous $\ZZ_2$ symmetry, defining the symmetry transformation $\mathcal S$ associated with the nontrivial element $g\in \ZZ_2$ by
\begin{equation}
  \mathcal S\ket{v}=(-1)^{f(v)}\ket{v+\one},
  \qquad
  f(v):=\int_X v\smile\dd v.
  \label{eq:anomalous-S}
\end{equation}
Here, $\one\in Z^0(X,\ZZ_2)$ denotes the cocycle that takes the value $1$ at every vertex, and
\begin{equation}
	(\dd v)_{i,i+1}=v_{i+1}-v_i,
	\qquad
	(v\smile\dd v)_{i,i+1}=v_i(v_{i+1}-v_i).
\end{equation}

It is easy to verify that
\begin{equation}
	f(v+\one)+f(v)=\int_X\dd v=0\pmod 2,
	\label{eq:S-square}
\end{equation}
so $\mathcal{S}$ is indeed a valid $\ZZ_2$ symmetry:
\begin{equation}
	\mathcal S^2=1.
\end{equation}
$f(v) \pmod 2$ counts, modulo 2, the number of indices $i$ satisfying $v_i=1,\ v_{i+1}=0$. This number also equals half the total number of ``domain walls''
$v_i\ne v_{i+1}$. Here and below, $\overline{\dd v}\in C^1(X,\ZZ)$ denotes the standard
$\{0,1\}$-valued integer lift of $\dd v$. Thus $f(v)\equiv\frac{1}{2}\int_X \overline{\dd v} \pmod 2$, and an alternative expression is
\begin{equation}\label{eq:divide-by-2}
	\mathcal{S}\ket{v}=\ii^{\int_X \overline{\dd v}}\ket{v+\one}.
\end{equation}

Take an interval $I=\{a,a+1,\ldots,b\}$ that does not cross the periodic cut, and denote its characteristic
$0$-cochain by $s_I$. We construct a
symmetry patch operator $P_I$ for $\mathcal{S}$ on $I$, defined by
\begin{equation}
  P_I\ket{v}=(-1)^{f_I(v)}\ket{v+s_I},
  \qquad
  f_I(v):=\sum_{i=a}^{b-1}v_i(v_i+v_{i+1}).
  \label{eq:chosen-patch}
\end{equation}
It truncates Eq.~\eqref{eq:anomalous-S} to the strict interior of $I$; other
conventions for gates crossing the endpoints give only the modifications allowed by Eq.~\eqref{eq:patch-ambiguity}. A direct calculation gives
\begin{equation}
  f_I(v+s_I)+f_I(v)
  =\sum_{i=a}^{b-1}(\dd v)_{i,i+1}
  =v_a+v_b\pmod 2.
\end{equation}
Writing $Z_i\ket{v}=(-1)^{v_i}\ket{v}$, we have
\begin{equation}
  P_I^2=Z_aZ_b.
  \label{eq:patch-square}
\end{equation}
Taking the trivial truncation $P_I(e)=1$ for the identity symmetry transformation $e$, the only nontrivial boundary obstruction in this example satisfies
\begin{equation*}
  \Omega_a(g,g)\Omega_b(g,g)=Z_aZ_b.
\end{equation*}
We may choose $\Omega_a(g,g)=Z_a$ and $\Omega_b(g,g)=Z_b$. Since $P_I$ flips the endpoint degrees of freedom,
\begin{equation}
  {}^{P_I}\!Z_a=-Z_a,
  \qquad {}^{P_I}\!Z_b=-Z_b.
\end{equation}
Setting all three group elements to $g$ in Eq.~\eqref{eq:EN-cocycle}, the left endpoint gives
\begin{equation}
  Z_a=\nu(g,g,g)\,{}^{P_I}\!Z_a
  =-\nu(g,g,g)Z_a,
\end{equation}
and hence
\begin{equation}
  \boxed{\nu(g,g,g)=-1.}
  \label{eq:EN-answer}
\end{equation}
This indeed corresponds to the generator of $H^3(B\ZZ_2,\U(1))\cong\ZZ_2$; its normalized representative is
\begin{equation*}
  \nu(g^\alpha,g^\beta,g^\gamma)=(-1)^{\alpha\beta\gamma},
  \qquad \alpha,\beta,\gamma\in\{0,1\}.
\end{equation*}

From Eq.~\eqref{eq:chosen-patch},
we obtain
\begin{equation}
	P_I\mathcal S\ket{v}
	=(-1)^{1+v_{a-1}+v_{b+1}}\mathcal S P_I\ket{v}.
	\label{eq:patch-not-symmetric}
\end{equation}
Thus, this particular symmetry patch operator does not commute with $\mathcal{S}$. For this anomalous $\ZZ_2$ symmetry, we can find a special symmetry patch operator that commutes with $\mathcal{S}$, such as $U(s)\Phi(s)$ in Table~\ref{tab:z2-statistics-anomaly}. In the next subsection, however, we will prove that such operators do not exist for certain symmetries.

\subsection{A type-III anomalous symmetry with no symmetric lattice truncation}
\label{subsec:lcw-type-iii}

Let $V=\ZZ_2^3$. In Eq.~(41) of Ref.~\cite{LessaChengWang}, Lessa--Cheng--Wang presented the following model and proved that its global symmetry has what is commonly called a type-III anomaly, represented by
$\frac12 a_1a_2a_3\in H^3(BV,\RZ)$, where $a_1,a_2,a_3$ are the coordinate $1$-cocycles on $BV=(B\ZZ_2)^3$, namely the pullbacks of the canonical $1$-cocycles on the three factors. In Section~\ref{subsec:fano-cyclic-processes}, we will study the special properties of the type-III class more systematically from the perspective of statistics.

Consider a periodic qubit chain of even length $L$, with lattice indices understood modulo $L$. Define three operators of order $2$,
\begin{equation}
  \mathsf O=\prod_{j\,\mathrm{odd}}X_j,
  \qquad
  \mathsf E=\prod_{j\,\mathrm{even}}X_j,
  \qquad
  \mathsf C=\prod_{j=1}^{L}\mathrm{CZ}_{j,j+1}.
  \label{eq:lcw-global-generators}
\end{equation}
Using
\begin{equation}
  \mathsf C X_j\mathsf C^{-1}=Z_{j-1}X_jZ_{j+1},
  \label{eq:lcw-C-conjugates-X}
\end{equation}
it is straightforward to show that
$\mathsf O,\mathsf E,\mathsf C$ commute pairwise. Thus,
\begin{equation}
  \mathcal S(a,b,c)=\mathsf O^a\mathsf E^b\mathsf C^c,
  \qquad (a,b,c)\in\ZZ_2^3,
  \label{eq:lcw-global-symmetry}
\end{equation}
defines a $\ZZ_2^3$ global symmetry on the full Hilbert space.
\begin{theorem}
  \label{thm:lcw-no-symmetric-truncation}
  No lattice truncation $P_I$ of $\mathsf O$ to a sufficiently long interval $I$ can commute with both $\mathsf E$ and $\mathsf C$.
\end{theorem}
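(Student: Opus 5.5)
The plan is to turn the two commutation requirements into local constraints on the endpoint dressing of $P_I$, and then show these constraints contradict each other at a single endpoint. By the definition in Section~\ref{subsec:symmetry-patch-endpoint} and Eq.~\eqref{eq:patch-ambiguity}, any truncation of $\mathsf O$ can be written as
\begin{equation*}
  P_I=B_aB_b\,P^{(0)}_I,\qquad P^{(0)}_I:=\prod_{j\in I,\ j\ \mathrm{odd}}X_j,
\end{equation*}
where $B_a,B_b$ are unitaries supported in bounded neighbourhoods $N_a,N_b$ of the two endpoints. Taking $I$ sufficiently long guarantees that $N_a$ and $N_b$, together with their images under the depth-one circuit $\mathsf C$, are far apart.

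First, I will compute the action of the global generators on the bare string $P^{(0)}_I$. Since $\mathsf E$ is a product of $X$'s, it commutes with $P^{(0)}_I$. For $\mathsf C$, I use Eq.~\eqref{eq:lcw-C-conjugates-X}. Each odd $X_j$ acquires $Z_{j-1}Z_{j+1}$ on the neighbouring even sites. For consecutive odd sites in $I$ these $Z$'s cancel in pairs, leaving
\begin{equation*}
  {}^{\mathsf C}P^{(0)}_I=Z_{a'}Z_{b'}\,P^{(0)}_I,
\end{equation*}
where $a'\in N_a$ and $b'\in N_b$ are even sites at the two ends of the string.

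Next, I impose the hypotheses ${}^{\mathsf E}P_I=P_I$ and ${}^{\mathsf C}P_I=P_I$ and localize them. For $\mathsf E$, we get ${}^{\mathsf E}B_a\,{}^{\mathsf E}B_b=B_aB_b$. Because $\mathsf E$ is on-site, ${}^{\mathsf E}B_a$ is still supported in $N_a$. Since operators supported at two distant regions with a product equal to $B_aB_b$ can only differ from $B_a,B_b$ by a scalar (the same argument used before Eq.~\eqref{eq:truncation-coboundary}), we obtain ${}^{\mathsf E}B_a=\mu B_a$ for some phase $\mu$. For $\mathsf C$, we get ${}^{\mathsf C}B_a\,{}^{\mathsf C}B_b\,Z_{a'}Z_{b'}=B_aB_b$. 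Here ${}^{\mathsf C}B_a$ is supported near $a$, since $\mathsf C$ has depth one, so the same splitting gives
\begin{equation*}
  {}^{\mathsf C}B_a=\lambda\,B_aZ_{a'}
\end{equation*}
for some phase $\lambda$.

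The contradiction comes from conjugating this last identity by $\mathsf E$, using that $\mathsf E$ and $\mathsf C$ commute globally. On the left-hand side, ${}^{\mathsf E}\,{}^{\mathsf C}B_a={}^{\mathsf C}\,{}^{\mathsf E}B_a=\mu\,{}^{\mathsf C}B_a=\mu\lambda B_aZ_{a'}$. On the right-hand side, ${}^{\mathsf E}(\lambda B_aZ_{a'})=\lambda\mu B_a\,{}^{\mathsf E}Z_{a'}=-\lambda\mu B_aZ_{a'}$, because $a'$ is even and so $\mathsf E$ anticommutes with $Z_{a'}$. This gives $1=-1$, so no such $P_I$ exists.

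I expect the main obstacle to be the bookkeeping in the first and second steps. One must check that the leftover $Z$'s really sit on \emph{even} sites for every admissible endpoint convention, which holds whatever the parity of the endpoints of $I$ because the string consists of odd $X$'s. One must also justify rigorously that the localized identities hold up to a scalar, namely that the splitting into endpoint factors is unique up to phases when the supports are separated; this is exactly where the hypothesis that $I$ is ``sufficiently long'' is used.
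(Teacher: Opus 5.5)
Your proposal is correct and follows the paper's proof essentially step for step. You write $P_I$ as endpoint dressings of the bare odd-$X$ string and use ${}^{\mathsf C}P^{(0)}_I=Z_{a'}Z_{b'}P^{(0)}_I$; you then split both commutation conditions at one endpoint up to phases and obtain a sign contradiction from $\Ad_{\mathsf E}\circ\Ad_{\mathsf C}=\Ad_{\mathsf C}\circ\Ad_{\mathsf E}$, because $\mathsf E$ anticommutes with $Z$ on an even site. The differences are cosmetic: you work at the left endpoint rather than the right, and you note that the residual $Z$'s sit on even sites for any endpoint parity, whereas the paper fixes $I=\{1,\ldots,2n\}$.
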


\begin{proof}
Take an even-length interval $I=\{1,2,\ldots,2n\}$ that does not cross the periodic cut. A direct truncation of $\mathsf O$ is
\begin{equation}
  \mathsf O_I
  =\prod_{\substack{1\leq j\leq 2n\\j\,\mathrm{odd}}}X_j.
  \label{eq:lcw-literal-patch}
\end{equation}
It commutes with $\mathsf E$, but Eq.~\eqref{eq:lcw-C-conjugates-X} gives
\begin{equation}
  \mathsf C\mathsf O_I\mathsf C^{-1}
  =Z_0Z_{2n}\mathsf O_I.
  \label{eq:lcw-global-endpoint-residual}
\end{equation}
Thus, $\mathsf O_I$ is a valid symmetry patch operator but is not symmetric;
the two residual $Z$ operators in this equation lie at the left and right endpoints of the interval, respectively.

According to Eq.~\eqref{eq:patch-ambiguity}, the most general choice of local endpoint modifications can be written as
\begin{equation}
  P_I=L_IR_I\mathsf O_I,
  \label{eq:lcw-dressed-patch}
\end{equation}
where $L_I$ and $R_I$ are supported in two disjoint endpoint neighborhoods. Suppose, for a contradiction, that $P_I$ commutes with both
$\mathsf E$ and $\mathsf C$. Write
\begin{equation*}
  \Ad_{\mathsf E}(A):=\mathsf E A\mathsf E^{-1},
  \qquad
  \Ad_{\mathsf C}(A):=\mathsf C A\mathsf C^{-1}.
\end{equation*}
Since $\Ad_{\mathsf E}(P_I)=P_I$ and
$\Ad_{\mathsf E}(\mathsf O_I)=\mathsf O_I$, substituting
Eq.~\eqref{eq:lcw-dressed-patch} into the first equality gives
\begin{align*}
  \Ad_{\mathsf E}(L_I)\Ad_{\mathsf E}(R_I)\mathsf O_I
  &=L_IR_I\mathsf O_I,\\
  \Ad_{\mathsf E}(L_I)\Ad_{\mathsf E}(R_I)
  &=L_IR_I,\\
  \bigl(L_I^{-1}\Ad_{\mathsf E}(L_I)\bigr)
  \bigl(\Ad_{\mathsf E}(R_I)R_I^{-1}\bigr)
  &=1.
\end{align*}
The two factors in the last line are supported at the left and right endpoints, respectively. The first factor equals the inverse of the second,
so it belongs to both disjoint endpoint matrix algebras, whose
intersection consists only of scalar operators. Therefore, there exists $\lambda\in\U(1)$ such that
\begin{equation}
  \Ad_{\mathsf E}(R_I)=\lambda R_I.
  \label{eq:lcw-E-endpoint-condition}
\end{equation}

Conjugation by $\mathsf C$ can be expanded in exactly the same way. From
$\Ad_{\mathsf C}(P_I)=P_I$ and
Eq.~\eqref{eq:lcw-global-endpoint-residual}, we obtain
\begin{align*}
  \Ad_{\mathsf C}(L_I)\Ad_{\mathsf C}(R_I)
  Z_0Z_{2n}\mathsf O_I
  &=L_IR_I\mathsf O_I,\\
  \bigl(L_I^{-1}\Ad_{\mathsf C}(L_I)Z_0\bigr)
  \bigl(\Ad_{\mathsf C}(R_I)Z_{2n}R_I^{-1}\bigr)
  &=1.
\end{align*}
Separating the left and right endpoints again, there exists $\mu\in\U(1)$ such that
\begin{equation}
  \Ad_{\mathsf C}(R_I)=\mu R_IZ_{2n}.
  \label{eq:lcw-C-endpoint-condition}
\end{equation}

The global symmetries $\mathsf E$ and $\mathsf C$ commute, so their adjoint actions must also
commute:
\begin{equation*}
  \Ad_{\mathsf E}\circ\Ad_{\mathsf C}
  =\Ad_{\mathsf C}\circ\Ad_{\mathsf E}.
\end{equation*}
However, $\mathsf E$ acts as $X_{2n}$ at the even site $2n$, so
$\Ad_{\mathsf E}(Z_{2n})=-Z_{2n}$. Using
Eqs.~\eqref{eq:lcw-E-endpoint-condition}--\eqref{eq:lcw-C-endpoint-condition},
we calculate the two compositions separately:
\begin{align}
  \bigl(\Ad_{\mathsf E}\circ\Ad_{\mathsf C}\bigr)(R_I)
  &=-\lambda\mu R_IZ_{2n},\notag\\
  \bigl(\Ad_{\mathsf C}\circ\Ad_{\mathsf E}\bigr)(R_I)
  &=+\lambda\mu R_IZ_{2n}.
  \label{eq:lcw-endpoint-contradiction}
\end{align}
This contradicts the commutativity of the two adjoint actions.
\end{proof}

In fact, a similar proof yields a stronger conclusion: no truncation $P_I$ of $\mathsf O$ makes both $\Ad_{\mathsf E}(P_I)P_I^{-1}$ and $\Ad_{\mathsf C}(P_I)P_I^{-1}$ pure phases.

\section{Topological excitations and statistics}
\label{sec:statistics}

The key to discussing statistics is to make the definition of topological excitations precise. We use a description that takes states and hopping operators as input, following Levin--Wen's approach to studying statistics through operator algebra~\cite{LevinWen2003}. As in the previous section, we consider only $\ZZ_2$ quasiparticles on an $n$-gon $X$. They lie on the edges $[i,i+1]$, and the total number of particles is conserved modulo 2. We consider all geometric configurations of quasiparticles in the vacuum sector; these form the Abelian group $B^1(X,\ZZ_2)$.

We define topological excitations by a family of excited states $\confstate{a}$ labeled by configurations $a\in B^1(X,\ZZ_2)$, called configuration states, together with a family of hopping operators $U(s_i)$, where $s_i$ is the elementary $0$-cochain associated with vertex $i$ ($0\le i<n$) of $X$. They must satisfy two axioms:

\begin{enumerate}[label=\textbf{(\arabic*)}]
  \item \textbf{Configuration axiom.}
  For every $a\in B^1(X,\ZZ_2)$ and vertex $i$, there exists
  $\theta(s_i,a)\in\RZ$ such that
  \begin{equation}
    U(s_i)\confstate{a}
    =\e^{2\pi\ii\theta(s_i,a)}\confstate{a+\dd s_i}.
    \label{eq:configuration-axiom}
  \end{equation}

  \item \textbf{Locality axiom.}
  If the vertices $i_1,\ldots,i_k$ do not all lie on a common edge, then
  \begin{equation}
    [U(s_{i_k}),[\cdots,[U(s_{i_2}),U(s_{i_1})]]]=1,
    \qquad
    [A,B]:=A^{-1}B^{-1}AB.
    \label{eq:locality-axiom}
  \end{equation}

\end{enumerate}

The first axiom means that a hopping operator $U(s)$ creates excitations on the two adjacent edges $\dd s$, with $s$ viewed as a $0$-cochain. The second axiom characterizes the locality of hopping operators in the many-body Hilbert space $\mathcal{H}$; roughly speaking, it requires the hopping operators associated with two vertices at distance $\ge 2$ to commute. These two axioms constrain the phases $\{\theta(s_i,a)\}$ of the topological excitations to satisfy certain linear equations. If each $\theta(s_i,a)\in \RZ$ is viewed as a continuous variable, the solution space is a finite-dimensional topological space. One can prove that, for any
$n$, the connected components of this solution space are labeled by
$H^3(B\ZZ_2,\RZ)\cong \ZZ_2$, corresponding to two types of statistics. The statistics of a given family of topological excitations can be measured by the sign of
\begin{equation}
	\lambda=[U(s_{i+1})^2,U(s_i)]=\pm 1.
\end{equation}
Both sides of this equality are understood as operators restricted to the subspace spanned by all configuration states. Moreover, the same sign $\pm1$ is obtained for any choice of vertex $i$.

These axioms readily generalize to arbitrary dimensions. Let the space $X$ be a $d$-dimensional combinatorial sphere, and consider codimension-$q$ excitations on it with a finite Abelian fusion group $G$. The configuration states and hopping operators are then labeled by $B^q(X,G)$
and $s\in  G\times X_{q-1}$, respectively\footnote{In fact, $G$ can be replaced by a choice of its generators.}, where a pair $(g,\sigma)$ denotes the
elementary cochain that takes the value $g$ on the
$(q-1)$-simplex $\sigma$ and $0$ on all other simplices. The two axioms now read

\begin{enumerate}[label=\textbf{(\arabic*)}]
	\item \textbf{Configuration axiom.}
	For every $a\in B^q(X,G)$ and $s\in G\times X_{q-1}$, there exists
	$\theta(s,a)\in\RZ$ such that
	\begin{equation}
		U(s)\confstate{a}
		=\e^{2\pi\ii\theta(s,a)}\confstate{a+\dd s}.
		\label{eq:configuration-axiom-general}
	\end{equation}
	
	\item \textbf{Locality axiom.}
	If the $(q-1)$-simplices associated with $s_1,\ldots,s_k\in G\times X_{q-1}$ do not all lie in a common $d$-simplex, then
	\begin{equation}
		[U(s_k),[\cdots,[U(s_2),U(s_1)]]]=1,
		\qquad
		[A,B]:=A^{-1}B^{-1}AB.
		\label{eq:locality-axiom-general}
	\end{equation}
	
\end{enumerate}

We denote the solution space of phases $\{\theta(s,a)\}$ satisfying these two axioms by
$\mathcal{M}^q(X,G)$. Statistics are defined as the connected components of this space and are classified by

\begin{equation}
	\pi_0\mathcal{M}^q(X,G)\simeq H^{d+2}(B^{q}G,\RZ).
\end{equation}

The general method for probing the statistics of topological excitations is to calculate a Berry phase
\begin{equation}
	{}^{\mathrm{conf}}\!\bra{a}\,
	U(s_n)^{\varepsilon_n}\cdots U(s_1)^{\varepsilon_1}
	\confstate{a}\in \U(1),\qquad \varepsilon_i=\pm 1.
\end{equation}
Here, $s_n^{\varepsilon_n}\cdots s_1^{\varepsilon_1}$ is a special sequence of elementary cochains
$s_i\in G\times X_{q-1}$ for which the initial and final states have the same geometric configuration,
\begin{equation}
	\sum_i \varepsilon_i\dd s_i=0
\end{equation}
and the total phase is constant on each connected component. Such sequences are called \textit{statistical processes} and are classified by

\begin{equation}
	\Hom(\pi_0\mathcal{M}^q(X,G),\U(1))\simeq H_{d+2}(B^qG,\ZZ)
\end{equation}
Recent studies of statistics have mainly focused on constructing geometrically meaningful statistical processes for excitations in different dimensions~\cite{xue2026bocksteinbraidingstatisticsversus,Feng_2026,feng2026paulistabilizerformalismtopological}.

Care is needed because very different definitions of topological excitations appear in the literature. The most strikingly different framework is the entanglement bootstrap. It also aims to define topological excitations from lattice models, but its sole input is a ground state in a many-body Hilbert space with a tensor-product structure. Quantum-information methods are then used to determine the anyons and their fusion rules, ultimately deriving a unitary modular tensor category~\cite{ShiKatoKim2020,Shi2020Verlinde,KimRanard2024}. In contrast, our theory takes excited states together with hopping operators as input, while locality does not rely on a tensor-product structure and is expressed purely through commutation relations among hopping operators. The braided fusion category it yields for anyon statistics need not be modular, either. Thus, the natural setting for our notion of topological excitations is not topological order in the same dimension, but subspaces of lattice models restricted by low-energy or symmetry conditions, or boundaries of topological orders in one higher dimension. The degree $d+2$ of the cohomology group $H^{d+2}(B^{q}G,\RZ)$ is also consistent with this viewpoint.

We make several important remarks about the axioms for topological excitations. First, each element $s\in G\times X_{q-1}$ should be viewed as a patch slightly larger than the interaction range, rather than as the smallest unit of the microscopic lattice model; $X$ is precisely such a mesoscopic triangulation. Indeed, if $U(s)$ on a microscopic triangulation involves several lattice sites near $s$, the locality axiom need not hold. In that case, one must first coarse-grain appropriately so that the locality axiom holds before applying the axiomatic framework. For example, in the $1$-dimensional case, we can combine several adjacent vertices $s_i,\ldots,s_j$ into $s=s_i+\cdots+s_j\in C^0(X,\ZZ_2)$, choose an order of multiplication, and define $U(s)=U(s_i)\cdots U(s_j)$. The resulting $U(s)$ automatically satisfies
Eq.~\eqref{eq:configuration-axiom}.

Second, when we construct topological excitations using methods such as field theory, we usually obtain a family of $U(s)$ naturally parameterized by $s\in C^{q-1}(X,G)$. In defining statistics, however, we are concerned only with the $U(s)$ associated with elementary cochains in $G\times X_{q-1}$\footnote{The symbols $a$ and $s$ that we use for configurations and hopping operators are the initials of Abelian and set, respectively, emphasizing that the former are additive whereas the latter are not.}. The $U(s)$ associated with extended cochains are given by products of elementary hopping operators together with modifications that do not affect statistics. In field-theoretic constructions, these usually differ by higher homotopies. If the $\theta(s,a)$ for these extended cochains were also included in the solution space, the number of connected components would exceed the number of elements in $H^{d+2}(B^qG,\RZ)$ and would have no physical meaning.

Third, we have taken configuration states and hopping operators together as the input to the axioms for topological excitations. Interestingly, however, knowing either all the hopping operators or all the configuration states is sufficient to determine the statistics completely. The proofs of these two properties are highly nontrivial; see Ref.~\cite{Xue2026Statistics}. Indeed, when $s_n^{\varepsilon_n}\cdots s_1^{\varepsilon_1}$ is a statistical process, its Berry phase
\begin{equation*}
	{}^{\mathrm{conf}}\!\bra{a}\,
	U(s_n)^{\varepsilon_n}\cdots U(s_1)^{\varepsilon_1}
	\confstate{a}\in \U(1)
\end{equation*}
  is independent of the choice of $a\in B^q(X,G)$. This is the \textbf{Initial-State Independence Theorem} (Ref.~\cite{Xue2026Statistics}, Theorem VI.4). Thus, when restricted to the space spanned by the configuration states (and, in most cases, also on the full many-body Hilbert space), the statistics can be measured directly by
\begin{equation*}
	U(s_n)^{\varepsilon_n}\cdots U(s_1)^{\varepsilon_1}\in \U(1)
\end{equation*}
For this reason, we sometimes simply say that a family of hopping operators satisfies the axioms for topological excitations and discuss its statistics; we implicitly assume the existence of a compatible family of configuration states.

On the other hand, the full set of configuration states also determines the statistics completely. This is the \textbf{Operator Independence Theorem} (Ref.~\cite{Xue2026Statistics}, Theorem VI.6). Suppose that we start with a family $\{\confstate{a},U(s)\}$ satisfying
\begin{equation*}
	U(s)\confstate{a}=\e^{2\pi \ii \theta(s,a)}\confstate{a+\dd s}.
\end{equation*}
Now fix all the $\confstate{a}$ and replace the hopping operators by a family $\{U'(s)\}$ satisfying
\begin{equation*}
	U'(s)\confstate{a}=\e^{2\pi \ii \theta'(s,a)}\confstate{a+\dd s}.
\end{equation*}
If these $U(s),U'(s)$ also satisfy the relations of the locality axiom with one another, the Operator Independence Theorem states that the two sets of phases $\{\theta(s,a)\}, \{\theta'(s,a)\}$ belong to the same connected component and hence have the same statistics.

\section{The relation between topological excitations and symmetries}\label{sec:relation}

The axiomatic definition of topological excitations does not involve any symmetry. In this section, for a $1$-dimensional $\ZZ_2$ symmetry, we will argue that establishing a connection between symmetries and topological excitations requires viewing hopping operators as symmetric operators that commute with the symmetry transformation $\mathcal{S}$ and identifying the
subspace spanned by the configuration states $\{\confstate{a}\}$ with the symmetric subspace $\mathcal{S}=1$.

More concretely, we place one qubit at each vertex of an $n$-gon $X$, so the full many-body Hilbert space $\mathcal{H}$ has $\dim \mathcal{H}=2^n$ and a basis
$\ket{v}$ labeled by $v\in C^0(X,\ZZ_2)$. Consider a $\ZZ_2$ symmetry $\mathcal S$ and assume that it has the general form
\begin{equation}\label{eq:S-v}
	\mathcal S\ket{v}=(-1)^{f(v)}\ket{v+\one}.
\end{equation}
Here, $f(v)$ (for example, Eq.~\eqref{eq:anomalous-S}) is a function that depends locally on $v$, and $\one\in Z^0(X,\ZZ_2)$ is the cocycle that is identically
$1$.
$\mathcal S^2=1$ implies
\begin{equation}
	f(v)+f(v+\one)=0\pmod 2.
\end{equation}

We now attempt to realize $\ZZ_2$ topological excitations starting from $\mathcal{S}$. The first step is to construct $2^{n-1}$ configuration states corresponding to $B^1(X,\ZZ_2)$. Each configuration $a\in B^1(X,\ZZ_2)$ can be written as a coboundary in two ways:
\begin{equation}
	a=\dd v=\dd(v+\one),\quad v\in C^0(X,\ZZ_2).
\end{equation}
Because $\mathcal{S}$ has the special form in Eq.~\eqref{eq:S-v}, it preserves the two-dimensional linear space $\mathcal{H}_a$ spanned by $\ket{v},\ket{v+\one}$. Its two eigenvectors with eigenvalues $\pm1$ are $\ket{v}\pm(-1)^{f(v)}\ket{v+\one}$, respectively. We take the normalized eigenvector of $\mathcal{S}$ with eigenvalue $+1$ as the configuration state:
\begin{equation}
	\confstate{a}
	\propto\ket{v}+(-1)^{f(v)}\ket{v+\one}.
	\label{eq:configuration-state}
\end{equation}

The configuration states thus span a $2^{n-1}$-dimensional subspace of $\mathcal{H}$, corresponding precisely to the superselection sector $\mathcal{S}=1$.

The next step is to construct hopping operators. We assume the following form:
\begin{equation}
	U(s)\ket{v}
	=\exp \left(2\pi\ii\int_X L(s,v)\right)\ket{v+s},
	\label{eq:hopping-ansatz}
\end{equation}
where $s,v\in C^0(X,\ZZ_2)$ and $L(s,v)\in C^1(X,\RZ)$ is a natural cochain operation satisfying
$L(0,v)=0$. A hopping operator defined this way automatically satisfies the locality axiom and maps each
configuration state $\confstate{a}$ into the two-dimensional subspace $\mathcal{H}_{a+\dd s}$. The remaining configuration axiom requires $U(s)\confstate{a}$ to lie in the $\mathcal{S}=1$ eigenspace, which is equivalent to $\mathcal{S}$ commuting with $U(s)$:
\begin{equation}
	U(s)\mathcal S=\mathcal S U(s).
\end{equation}

Combining these observations with the Operator Independence Theorem, we obtain the following conclusion.
\begin{itemize}
	\item \textbf{Fix a symmetry $\mathcal{S}$ of the form in Eq.~\eqref{eq:S-v} and the
  configuration states in Eq.~\eqref{eq:configuration-state}. An elementary $U(s)$ of the form in
  Eq.~\eqref{eq:hopping-ansatz} is a valid hopping operator for topological excitations
  if and only if it commutes with $\mathcal S$; all choices satisfying this condition yield the same statistics.}
\end{itemize}
This statement can be compared with the corresponding statement for symmetry patch operators:
\begin{itemize}
	\item \textbf{Given a symmetry $\mathcal{S}$, $P_I$ is a valid symmetry patch
  operator if and only if it implements $\mathcal{S}$ in the interior of $I$; all valid choices yield the same
  Else--Nayak index.}
\end{itemize}

Thus, hopping operators and symmetry patch operators are both closely related to symmetries. However,
the ways in which they are related to symmetries are not only entirely different but can also be called ``dual.''

As a concrete example, consider the anomalous $\ZZ_2$ symmetry
$f(v)=\int_Xv\smile\dd v$. Substituting Eqs.~\eqref{eq:anomalous-S}
and~\eqref{eq:hopping-ansatz} into $U(s)\mathcal S=\mathcal S U(s)$ gives the equation
determining $L$:
\begin{align}
  \int_X\!\bigl[L(s,v+\one)-L(s,v)\bigr]
  &=\frac12\bigl[f(v+s)-f(v)\bigr]\pmod1\notag\\
  &=\frac12\int_X
    \bigl(\dd v\smile s+s\smile\dd v+s\smile\dd s\bigr)\pmod1.
  \tag{G}\label{eq:G}
\end{align}

Using $\dd(v+\one)=\dd v$, one solution to this equation is
\begin{equation}
  L(s,v)
  =\frac12\,v\smile
   \bigl(\dd v\smile s+s\smile\dd v+s\smile\dd s\bigr).
  \label{eq:L0}
\end{equation}
The corresponding hopping operator is therefore
\begin{equation}
  U(s)\ket{v}
  =(-1)^{\int_Xv\smile
    (\dd v\smile s+s\smile\dd v+s\smile\dd s)}\ket{v+s}.
  \label{eq:U-s-action}
\end{equation}

For arbitrary $s,t\in C^0(X,\ZZ_2)$, we find
\begin{equation}
  [U(s)^2,U(t)]\ket{v}=(-1)^{\kappa(s,t)}\ket{v},
  \label{eq:general-statistical-commutator}
\end{equation}
where
\begin{equation}
  \kappa(s,t)
  :=\int_X\bigl(t\smile\dd s\smile s
    +s\smile\dd s\smile t\bigr).
  \label{eq:kappa}
\end{equation}
This result is independent of $v$. Taking $s=s_i$, where $s_i$ is the
$0$-cochain that takes the value $1$ only at vertex $i$, and then taking $t=s_{i+1}$ or $t=s_{i-1}$, we obtain
\begin{equation}
  [U(s)^2,U(t)]=-1.
\end{equation}
Thus, this family of topological excitations has nontrivial statistics.

Notice that setting the hopping operator parameter $s$ to $\one$ over the entire space gives
\begin{equation}
  U(\one)\ket{v}=\ket{v+\one}.
\end{equation}
Although this is a $\ZZ_2$ symmetry, it differs from $\mathcal{S}$ and is anomaly-free. This example shows that a hopping operator can indeed sometimes be viewed as a symmetry patch operator for a global symmetry, but the anomaly of that global symmetry may differ from the statistics of the hopping operator. The global symmetry that actually corresponds to the statistics must be sought through the equation $\mathcal{S}U(s)=U(s)\mathcal{S}$. We will discuss their relationship further in Section~\ref{sec:what-is-hopping}.

Another argument demonstrates the mismatch more clearly. Define the operator
\begin{equation}
	\Phi(s)\ket{v}
	=\ii^{\int_X\overline{\dd v}\smile s}\ket{v}.
\end{equation}
It is easy to see that $\Phi(s)$ commutes with any symmetry $\mathcal{S}$ of the form in Eq.~\eqref{eq:S-v}. Thus,
\begin{equation}
	U(s)\mapsto U(s)\Phi(s)
\end{equation}
is a valid redefinition of the hopping operators and, by the Operator Independence Theorem, does not affect the statistics. On the other hand, Eq.~\eqref{eq:divide-by-2} gives
\begin{equation}
	\Phi(\one)\ket{v}=(-1)^{\int_X v\smile\dd v}\ket{v},
\end{equation}
so $U(\one)$ and $U(\one)\Phi(\one)$ have different Else--Nayak indices when each is viewed as a $\ZZ_2$ global symmetry. The following table shows that the Else--Nayak index and statistics can not only differ but also occur in every possible combination; in other words, the filling pattern in this example is the entire statistics--anomaly table. Here, the hopping operator $U(s)$ is given by Eq.~\eqref{eq:U-s-action}, while $X(s)$ simply shifts the field values and is defined by
\begin{equation}
	X(s)\ket{v}=\ket{s+v}.
\end{equation}

\begin{table}[H]
  \centering
	\caption{Else--Nayak indices and statistics of $\ZZ_2$ double-role operators.}
	\label{tab:z2-statistics-anomaly}
	\begin{tabularx}{0.94\textwidth}{@{}>{\bfseries}p{0.3\textwidth}
			>{\raggedright\arraybackslash}X>{\raggedright\arraybackslash}X@{}}
		\toprule Double-role operator
		& Else--Nayak index & Statistics \\
		\midrule
		$X(s)$ & $1$
		& $1$ \\
		$X(s)\Phi(s)$ & $-1$
		& $1$ \\
		$U(s)$ & $1$
		& $-1$ \\
		$U(s)\Phi(s)$ & $-1$
		& $-1$ \\
		\bottomrule
	\end{tabularx}
\end{table}

These examples allow us to examine the argument of Ref.~\cite{Kobayashi_2026} more deeply. In Eq.~(104), its authors first assume that a global symmetry $\mathcal{S}$ has a product form in terms of local operators (which need not commute). They then claim that these $U(s)$ satisfy the axioms for hopping operators of topological excitations and that their statistics characterize the 't Hooft anomaly of the global symmetry. In our terminology, the local operators they assume are symmetry patch operators, which generally do not satisfy the configuration axiom. Even if they find a special lattice truncation that satisfies the configuration axiom and hence gives double-role operators, the table shows that its statistics need not bear any relation to the Else--Nayak index. For example, if they truncate the $1$-dimensional anomalous $\ZZ_2$ symmetry to $X(s)\Phi(s)$ in the table, they will incorrectly identify the symmetry as anomaly-free; if they truncate an anomaly-free symmetry to $U(s)$, they will incorrectly identify the symmetry as anomalous. Thus, the quantity they calculate has no direct relation to the original symmetry.

For a one-dimensional $\ZZ_2$ symmetry $\mathcal{S}$, there is a remedy: when truncating $\mathcal{S}$ to obtain a symmetry patch operator $P_I$, require not only that it satisfy the axioms for topological excitations but also that it be a symmetric truncation, meaning that it commutes with $\mathcal{S}$. In Table~\ref{tab:z2-statistics-anomaly}, this condition does exclude the two cases in which the Else--Nayak index and statistics differ. However, as we proved in Section~\ref{subsec:lcw-type-iii}, certain anomalous symmetries admit no symmetric truncation. This remedy is therefore not universal.

\section{Comparing topological excitations and symmetry from the gauge-theory perspective}
\label{sec:what-is-hopping}

In Section~\ref{sec:relation}, we used a one-dimensional example to establish the relation between a hopping operator $U$ and a global symmetry $\mathcal{S}$: to identify the statistics of $U$ with the anomaly of $\mathcal{S}$, the essential property of $U$ is that it commutes with $\mathcal{S}$, rather than that it implements the symmetry. In this section, we consider a $(q-1)$-form symmetry in $d$ spatial dimensions with anomaly $[\omega]\in H^{d+2}(B^qG,\RZ)$, and explain the distinct physical roles of hopping operators and global symmetry from the perspective of gauging. Throughout, we assume $1\leq q\leq d$.

\paragraph{Physical meaning of the cochain ansatz.}
We describe the classical matter field by $v\in C^{q-1}(X,G)$, whose global symmetry is translation by a closed cochain, $v\mapsto v+z$, with $z\in Z^{q-1}(X,G)$.
The quantum theory has matter basis states $\ket v_X$; a symmetry transformation preserving this classical action may carry a quantum phase that depends on $v$:
\begin{equation}
  \mathcal S(z)\ket v_X
  =\exp\bigl(2\pi\ii f(z,v)\bigr)\ket{v+z}_X,
  \qquad z\in Z^{q-1}(X,G).
  \label{eq:quantum-translation-symmetry-ansatz}
\end{equation}
Here $f(z,v)\in\RZ$ depends locally on $v$, and the group law imposes compatibility conditions on these phases.
In the same matter space, we also consider translations by arbitrary $s\in C^{q-1}(X,G)$, and adopt the following cochain ansatz for a candidate hopping operator:
\begin{equation}\label{eq: ansatz}
  U(s)\ket v_X
  =\exp\left(2\pi\ii\int_XL(s,v)\right)\ket{v+s}_X,
  \qquad s,v\in C^{q-1}(X,G),
\end{equation}
where $L:C^{q-1}(-,G)\times C^{q-1}(-,G)\longrightarrow C^d(-,\RZ)$
is a normalized natural local cochain operation satisfying $L(0,v)=0$.
That is, we use the same local formula on every simplex, with its value depending only on the field and translation parameter within that simplex.
\textbf{This ansatz fixes the classical translation and retains the freedom in the quantum action through the local phase.}
A closed parameter $z$ leaves $\dd v$ unchanged, whereas a general parameter $s$ gives $\dd v\mapsto\dd v+\dd s$.
Even when $s=z$, $\mathcal S(z)$ and $U(z)$ merely translate the same basis-state labels; their quantum phases must still be determined separately.
Below, we construct both types of phases from the same $[\omega]$ and explain how they are constrained by the symmetry group law and the hopping axioms.

It is well known that when $[\omega]\ne 0$, the global symmetry cannot be gauged in the same dimension, but can only be coupled to the boundary of a Dijkgraaf--Witten gauge field in one higher dimension. Accordingly, in this section the matter field lives on a triangulated $d$-dimensional sphere $X\simeq S^d$, while the flat gauge field lives on a $(d+1)$-disk $M\simeq D^{d+1}$, with $X=\partial M$.

The gauge field in one higher dimension is described by $A\in Z^q(M,G)$. After coupling the matter and gauge fields into a complete gauge theory, the joint gauge transformation, parametrized by $t\in C^{q-1}(M,G)$, takes the form
\begin{equation}
	A\longmapsto A-\dd t,\qquad
	v\longmapsto v+t_X.
	\label{eq:coupled-classical-gauge-overview}
\end{equation}
The gauge-invariant quantity under this transformation is
\begin{equation}
	a:=A_X+\dd v\in B^q(X,G).
	\label{eq:gauge-invariant-configuration-overview}
\end{equation}
This $a$ is precisely the configuration of topological excitations, describing the physical degrees of freedom that remain after gauging. Hopping is a physical operation on these degrees of freedom. One way to represent it is to hold the background gauge field fixed and change the matter field:
\begin{equation}\label{eq:coupled-classical-hopping-overview}
		A\longmapsto A, v\longmapsto v+s.
\end{equation}
Here $s\in C^{q-1}(X,G)$, and correspondingly $a\mapsto a+\dd s$. When $\dd s=0$, hopping leaves the configuration unchanged, but in the quantum theory it can still act on physical states through a phase that depends on $a$.

The difference between Eq.~\eqref{eq:coupled-classical-gauge-overview} and \eqref{eq:coupled-classical-hopping-overview} reflects the subtle distinction between global symmetry and closed-support hopping. If we set $t\in Z^{q-1}(M,G)$ in Eq.~\eqref{eq:coupled-classical-gauge-overview}, the resulting gauge transformation
\begin{equation*}
	A\longmapsto A-\dd t=A,\qquad
	v\longmapsto v+t_X.
\end{equation*}
is precisely the global symmetry associated with the background gauge field $A$; if we set $s\in Z^{q-1}(X,G)$ in Eq.~\eqref{eq:coupled-classical-hopping-overview}, the resulting hopping
\begin{equation*}
	A\longmapsto A,\qquad
	v\longmapsto v+s.
\end{equation*}
is precisely closed-support hopping. If we identify $s=t_X\in Z^{q-1}(X,G)$, the two transformations have the same initial and final states and naively appear identical. However, $A\mapsto A-\dd t=A$ and $A\mapsto A$ are fundamentally different transformations: even with the same initial and final states, the transformation processes described by $t$ remain different. Another essential point is that simple Abelian addition obscures the distinction between left and right composition. In the classical theory and in the case $[\omega]=0$, these subtle conceptual differences have no evident consequences; in the quantum theory with $[\omega]\ne0$, however, they can lead to measurable phase differences. This is what distinguishes global symmetry from a closed-support hopping operator.

We begin with the quantum phases of DW gauge transformations and construct boundary states labeled by $v$. Since boundary states with different $v$ labels for the same configuration may differ by a phase, we then take the gauge transformations themselves as SPT matter degrees of freedom and extract symmetry and hopping from the boundary states. Finally, in Section~\ref{subsec:gauging-perspective}, we directly couple the bulk gauge field to the boundary matter field, unifying the two descriptions in terms of the full gauge action $G$ and physical hopping $\mathcal U$.

The logical structure of this section can be summarized as
\[
\begin{tikzcd}[
  row sep=2.3em,
  column sep=3.8em,
  cells={nodes={font=\small}}
]
  & {[\omega]\in H^{d+2}(B^qG,\RZ)}
    \arrow[dl]\arrow[dr] & \\
  {\substack{\text{DW theory}\\(\mathcal H_{\mathrm{DW}},T)}}
    \arrow[d,"\Pi_{\mathrm{DW},\partial}"'] &&
  {\substack{\text{SPT boundary}\\(\mathcal H_{\mathrm{matter}},\mathcal S)}}
    \arrow[d,"\Pi_{\mathrm{sym}}"] \\
  {\mathcal H_{\mathrm{DW},\partial}} &
  {\mathcal H_{\mathrm{gauged}}}
    \arrow[l,"v=0"',"\cong"]
    \arrow[r,"A=b","\cong"'] &
  {\mathcal H_{\mathrm{sym}}}
\end{tikzcd}
\]
The two sides impose the interior Gauss condition and the symmetry condition, respectively; the gauge-invariant states in the middle yield these two descriptions through the gauge choices $v=0$ and $A=b$. The precise extraction of components is given in Section~\ref{subsec:gauging-perspective}. The same $[\omega]$ thus determines both the anomaly of $\mathcal S$ and the statistics of the topological excitations on either side.

We also give a geometric explanation of why the hopping operator $U(s)$ commutes with $\mathcal{S}(z)$. Their phases arise from left and right composition, respectively. Rather than the commutation relation
\begin{equation*}
  U(s)\mathcal{S}(z)\ket{\psi}=\mathcal{S}(z)U(s)\ket{\psi}
\end{equation*}
a more illuminating form is the associativity relation
\begin{equation}\label{eq:bimodule}
  U(s)\bigg(\ket{\psi}\mathcal{S}(z)\bigg)=\bigg(U(s)\ket{\psi}\bigg)\mathcal{S}(z).
\end{equation}
This gives a bimodule picture: hopping operators and global symmetry transformations act from opposite sides, with their compatibility guaranteed by the same set of descent identities. For the geometric interpretation, see \hyperref[app:tetrahedron-coherence]{Appendix~\ref*{app:tetrahedron-coherence}}. For the symmetry $\mathcal{S}(z)$ constructed below, the elementary hopping operators in Eq.~\eqref{eq: ansatz} satisfy the topological excitation axioms if and only if they are symmetric, as we prove in Theorem~\ref{thm:symmetric-hopping-criterion}.

\subsection{Gauge transformations in higher-form Dijkgraaf--Witten theory}
\label{subsec:dw-gauge-transformations}

Let $G$ be a finite Abelian group, let $q\geq1$, and denote the bulk spacetime dimension by
\begin{equation}
  D:=d+2.
\end{equation}

In Dijkgraaf--Witten theory, we consider only flat gauge fields, described by
$A\in Z^q(N,G)$, where $N$ is a $D$-dimensional triangulated spacetime. At the classical level, a gauge transformation is specified by a parameter
$t\in C^{q-1}(N,G)$ and acts as
\begin{equation}
	t:A\longrightarrow A+\dd t,
	\qquad A\in Z^q(N,G).
	\label{eq:classical-gauge-transformation}
\end{equation}
Under successive applications of $t_1$ and $t_2$, the field evolves along
\begin{equation}
	A\xrightarrow{\ t_1\ }A+\dd t_1
	\xrightarrow{\ t_2\ }A+\dd(t_1+t_2)
	\label{eq:two-classical-gauge-transformations}
\end{equation}
which is equivalent to applying the gauge transformation labeled by $t_1+t_2$ directly. The two paths form a commutative triangle:
\[
\gaugetriangle{A}{A+\dd t_1}{A+\dd(t_1+t_2)}
  {t_1}{t_2}{t_1+t_2}
\]

At the quantum level, the gauge field $A$ also carries a cocycle $\omega(A)\in Z^D(N,\RZ)$, where
\begin{equation}
  \omega:Z^q(-,G)\longrightarrow Z^D(-,\RZ).
  \label{eq:twist-cocycle-operation}
\end{equation}
is a cocycle operation. Its integral over the $D$-dimensional spacetime determines the action phase
\begin{equation}
	\exp \left(2\pi\ii\int_N\omega(A)\right).
	\label{eq:twisted-gauge-action}
\end{equation}
When $A$ undergoes a gauge transformation, $\omega(A)$ must transform accordingly. By the properties of a cocycle operation, $\omega(A+\dd t)$ and $\omega(A)$ belong to the same cohomology class, so they differ at most by a coboundary, or equivalently a homotopy:
\begin{equation}
	\omega(A+\dd t)-\omega(A)=\dd \Theta^1(t;A).
\end{equation}
Here $\Theta^1$ is a natural map $C^{q-1}(-,G)\times Z^q(-,G)\to C^{D-1}(-,\RZ)$. Corresponding to $A\xrightarrow{t}A+\dd t$, we have
\begin{equation}
	\omega(A)\xrightarrow{\Theta^1(t;A)}\omega(A+\dd t).
\end{equation}
When $t_1,t_2$ are applied successively, the sum of the phases on the two edges generally differs from the phase on the direct edge; their difference is the next coboundary:
\begin{equation}
	\dd \Theta^2(t_2,t_1;A)
	=\Theta^1(t_1;A)+\Theta^1(t_2;A+\dd t_1)-\Theta^1(t_1+t_2;A).
\end{equation}
This coboundary is supplied by $\Theta^2(t_2,t_1;A)$ in the interior of the triangle:
\[
\begin{tikzpicture}[baseline=(current bounding box.center),
    every node/.style={font=\small,inner sep=2pt},>=latex]
  \node[anchor=east] (g0) at (-3.0,0) {$\omega(A)$};
  \node[anchor=south] (g1) at (0,1.85) {$\omega(A+\dd t_1)$};
  \node[anchor=west] (g2) at (3.0,0) {$\omega(A+\dd(t_1+t_2))$};
  \draw[->,shorten >=2pt] (g0.east) --
    node[above left,font=\footnotesize] {$\Theta^1(t_1;A)$} (g1.south);
  \draw[->,shorten <=2pt,shorten >=2pt] (g1.south) --
    node[above right,font=\footnotesize] {$\Theta^1(t_2;A+\dd t_1)$} (g2.west);
  \draw[->,shorten >=2pt] (g0.east) --
    node[below,font=\footnotesize] {$\Theta^1(t_1+t_2;A)$} (g2.west);
  \node at (0,1.0) {$\Big\Uparrow$};
  \node[font=\footnotesize] at (0,0.43) {$\Theta^2(t_2,t_1;A)$};
\end{tikzpicture}
\]
The quantum phases in the composition of gauge transformations thus differ by a $2$-homotopy. The two ways of composing three gauge transformations $t_3\circ t_2\circ t_1$ form the four faces of a tetrahedron; they differ by a $3$-homotopy described by the associator $\Theta^3(t_3,t_2,t_1;A)$.
Including these coherence data, the gauge fields and gauge transformations mathematically form an $\infty$-groupoid. These data are essential for the symmetry anomaly and statistics on the boundary. We will systematically construct them from $[\omega]$ using algebraic topology, ultimately obtaining a solution to Eqs.~\eqref{eq:theta-1-descent}--\eqref{eq:normalized}. Readers may skip the algebraic-topological derivation and take these equations as given.

Starting from $A\in Z^q(N,G)$, consider the composition of $k$ successive gauge transformations $t_i\in C^{q-1}(N,G)$, written as
\begin{equation}
  A\xrightarrow{t_1} A+\dd t_1\xrightarrow{t_2}\ldots\xrightarrow{t_k}A+\dd(t_1+\cdots+t_k).
  \label{eq:gauge-simplex-vertices}
\end{equation}
Geometrically, we place this sequence of data on a simplex $\Delta^k$, labeling the $i$th vertex by $A+\dd(t_1+\cdots+t_i)$ and the edge $[i-1,i]$ by $t_i$. Since $A,t_i$ are already cochains on $N$, we construct a cocycle on $N\times \Delta^k$. Let
$\iota_i\in C^0(\Delta^k,\ZZ)$ take the value $0$ at vertices $0,\ldots,i-1$ and the value $1$ at vertices
$i,\ldots,k$. The integer-valued $\iota_i$ acts as a scalar on the $G$-valued cochain
$t_i$. Define
\begin{equation}
  \operatorname{N}(t_k,\cdots,t_1;A)
  :=A+
  \dd
  \left(
    \sum_{i=1}^k \iota_i\smile t_i
  \right)
  \in Z^q(N\times\Delta^k,G).
  \label{eq:interpolating-gauge-field}
\end{equation}
Here the cochains $A,t_i$ on $N$ and $\iota_i$ on $\Delta^k$ are implicitly pulled back to $N\times \Delta^k$. One readily checks that the restriction of $\operatorname{N}(t_k,\ldots,t_1;A)$ to the $j$th vertex of $\Delta^k$ is
$A+\dd(t_1+\cdots+t_j)$. This construction is called the nerve of $(t_k,\cdots,t_1;A)$ and geometrically encodes the entire history of the gauge transformations. If
\(
\delta_j\operatorname{N}(t_k,\cdots,t_1;A)
\in Z^q(N\times\Delta^{k-1},G)
\)
denotes its restriction to $N\times \partial_j\Delta^k$, then
\begin{equation}
\delta_j\operatorname{N}(t_k,\ldots,t_1;A)
=
\begin{cases}
\operatorname{N}(t_k,\ldots,t_2;A+\dd t_1),
  &j=0,\\
\operatorname{N}(t_k,\ldots,t_{j+2},t_j+t_{j+1},t_{j-1},\ldots,t_1;A),
  &1\leq j\leq k-1,\\
\operatorname{N}(t_{k-1},\ldots,t_1;A),
  &j=k.
\end{cases}
\label{eq:gauge-nerve-faces}
\end{equation}

Applying the twist to this field immediately gives
\begin{equation}
\omega\big(\operatorname{N}(t_k,\cdots,t_1;A)\big)\in Z^D(N\times \Delta^k,\RZ).
  \label{eq:twist-on-gauge-groupoid}
\end{equation}
This cocycle contains all the information about the gauge transformations of $\omega(A)$. The $k$th level of coherence data is given by the prism integral over $\Delta^k$.
Using the product orientation, write
\begin{equation}
  \varepsilon_k:=(-1)^{\,kD-k(k+1)/2}.
\end{equation}
Integration over the parameter simplex defines
\begin{equation}
  \Theta^k(t_k,\ldots,t_1;A)
  :=\varepsilon_k\int_{\Delta^k}\omega\big(\operatorname{N}(t_k,\cdots,t_1;A)\big)
  \in C^{D-k}(N,\RZ),
  \qquad 0\leq k\leq D.
  \label{eq:theta-k-definition}
\end{equation}
That is, for any $(D-k)$-simplex $\sigma$ in $N$,
\begin{equation}
  \int_\sigma\Theta^k(t_k,\ldots,t_1;A)
  =\varepsilon_k
  \int_{\sigma\times\Delta^k}\omega\big(\operatorname{N}(t_k,\cdots,t_1;A)\big).
  \label{eq:theta-k-evaluation}
\end{equation}
In particular, $\Theta^0(A)=\omega(A)$.
Stokes' formula yields the full set of descent relations
\begin{equation}
  \dd\Theta^k=\delta\Theta^{k-1},
  \qquad 1\leq k\leq D,
  \label{eq:theta-descent-hierarchy}
\end{equation}
where $\delta=\sum_{i=0}^k(-1)^i\delta_i$ is the alternating sum over the faces of the gauge simplex. The first three orders are
\begin{align}
  \dd\Theta^1(t;A)
  &=\omega(A+\dd t)-\omega(A),
  \label{eq:theta-1-descent}\\
  \dd\Theta^2(t_2,t_1;A)
  &=\Theta^1(t_1;A)
    +\Theta^1(t_2;A+\dd t_1)
    -\Theta^1(t_1+t_2;A),
  \label{eq:theta-2-descent}\\
  \dd\Theta^3(t_3,t_2,t_1;A)
  &=\Theta^2(t_3,t_2;A+\dd t_1)
    -\Theta^2(t_3,t_1+t_2;A)\notag\\
  &\quad+\Theta^2(t_2+t_3,t_1;A)
    -\Theta^2(t_2,t_1;A).
  \label{eq:theta-3-descent}
\end{align}

$\Theta^1$ gives the phase change under a single gauge transformation, $\Theta^2$ gives the phase difference between two successive transformations and
their composite, and $\Theta^3$ constrains the four triangles formed by three transformations. In general,
$\Theta^k$ records the phase in the interior of the $k$-simplex spanned by $k$ composable gauge transformations;
Eq.~\eqref{eq:theta-descent-hierarchy} ensures that all lower-dimensional faces fit together. Below, we need only
$\Theta^1,\Theta^2,\Theta^3$.

We take $\omega$ to be a normalized cocycle operation, so that the following identity holds:
\begin{equation}\label{eq:normalized}
	\Theta^k(t_k,\cdots,t_{i+1},0,t_{i-1},\cdots,t_1;A)=0,\forall i.
\end{equation}

\subsection{Gauge-invariant states on closed manifolds}
\label{subsec:gauge-invariant-time-slice}

We now construct the gauge-invariant Hilbert space of Dijkgraaf--Witten theory on a spatial manifold $M$. Let
$\ket{A}$ denote the basis state for any flat gauge field $A\in Z^q(M,G)$. These states span the kinematic space of the gauge field,
\begin{equation}
  \mathcal H_{\mathrm{DW}}:=\operatorname{span}\{\ket A:A\in Z^q(M,G)\}.
  \label{eq:DW-kinematic-Hilbert-space}
\end{equation}
The Gauss condition selects a subspace of this space. A gauge parameter
$t\in C^{q-1}(M,G)$ acts on the gauge field $A$ as an arrow $A\xrightarrow{t}A+\dd t$, which upon quantization becomes the action of an operator $T(t)$ on basis states:
\begin{equation}
  T(t)\ket{A}=
  \exp \left(2\pi\ii\int_M\Theta^1(t;A)\right)
  \ket{A+\dd t}.
  \label{eq:quantum-gauge-transformation}
\end{equation}
The phase here is the time integral of the twist action on the spacetime cylinder. We now compare $T(t_2)T(t_1)\ket{A}$ with $T(t_2+t_1)\ket{A}$. They have the same gauge field, while their phases correspond to the three edges of the following triangle:
\[
\gaugetriangle{A}{A+\dd t_1}{A+\dd(t_1+t_2)}
  {t_1}{t_2}{t_1+t_2}
\]
By Eq.~\eqref{eq:theta-2-descent}, these phases differ by the $2$-homotopy $\dd \Theta^2(t_2,t_1;A)$. Therefore,

\begin{equation}
  T(t_2)T(t_1)\ket{A}
  =\exp \left(2\pi\ii\int_{\partial M}\Theta^2(s_2,s_1;A_{\partial M})\right)
   T(t_1+t_2)\ket{A}.
  \label{eq:spatial-gauge-composition-operator}
\end{equation}
Here $s_i:=(t_i)_{\partial M}$ is the boundary restriction of the bulk parameter. By Eq.~\eqref{eq:normalized}, whenever at least one of $t_1,t_2$ vanishes on the boundary,
the boundary phase in Eq.~\eqref{eq:spatial-gauge-composition-operator} vanishes:
\begin{equation}
  T(t_2)T(t_1)=T(t_1+t_2),
  \qquad
  (t_1)_{\partial M}=0\quad\text{or}\quad(t_2)_{\partial M}=0.
  \label{eq:gauge-composition-trivial-boundary-leg}
\end{equation}
Below, we use this multiplication property directly whenever one gauge parameter has zero boundary value.

When $M$ is closed, the boundary phase on the right-hand side vanishes, and $T(t)$ gives a strict representation of the additive group
$C^{q-1}(M,G)$. We define gauge averaging directly by
\begin{equation}
  \Pi_{\mathrm{DW}}
  :=\frac{1}{|C^{q-1}(M,G)|}
    \sum_{t\in C^{q-1}(M,G)}T(t).
  \label{eq:DW-closed-projector}
\end{equation}
The strict group law and unitarity ensure that this is an orthogonal projector, whose image is precisely the space satisfying all Gauss conditions:
\begin{equation}
  \operatorname{im}\Pi_{\mathrm{DW}}
  =\{\ket\psi\in\mathcal H_{\mathrm{DW}}:
      T(t)\ket\psi=\ket\psi\ \text{for all }t\in C^{q-1}(M,G)\}.
  \label{eq:DW-closed-invariant-space}
\end{equation}

To construct states in this space, we can directly evaluate $\Pi_{\mathrm{DW}}\ket{b}$ for a reference field $b\in Z^q(M,G)$. The result is a superposition of the basis states in $[b]\in H^q(M,G)$ obtained by gauge transformations. The parameters preserving $b$ are closed cochains
$t\in Z^{q-1}(M,G)$, which act as
\begin{equation}\label{eq:T(z)|b>}
  T(t)\ket b=\exp\left(2\pi\ii\int_M\Theta^1(t;b)\right)\ket b.
\end{equation}
When all these stabilizer phases are $1$, the orbit contributes a one-dimensional gauge-invariant subspace; if any phase is nontrivial, the gauge average vanishes, and the orbit contributes zero. Thus, the dimension of $\operatorname{im}\Pi_{\mathrm{DW}}$ depends on $[\omega]$ and the global topology of $M$. This is precisely the Hilbert space $Z(M)$ of Dijkgraaf--Witten theory viewed as a topological field theory, and also its ground-state subspace when viewed as a topological order.

In what follows, we only consider the sphere $M\simeq S^{d+1}$, for which $\operatorname{im}\Pi_{\mathrm{DW}}$ is one-dimensional and its state can be constructed by projecting any reference field $b$. Omitting the normalization factor, take
\begin{equation}
  \ket{\Psi}
  :=\sum_{r\in C^{q-1}(M,G)}T(r)\ket b
  \ \propto\ \Pi_{\mathrm{DW}}\ket b.
  \label{eq:closed-gauge-orbit-wavefunction}
\end{equation}

States obtained from different choices of $b$ may differ by a phase.

\subsection{Boundary excitations on manifolds with boundary}
\label{subsec:boundary-excitation-statistics}

We now turn to $M\simeq D^{d+1}$ with $X:=\partial M\simeq S^d$. We perform gauge averaging only in the bulk interior:
\begin{equation}
  \Pi_{\mathrm{DW},\partial}
  :=\frac{1}{|\{t\in C^{q-1}(M,G):t_X=0\}|}
    \sum_{\substack{t\in C^{q-1}(M,G)\\t_X=0}}T(t).
  \label{eq:DW-boundary-projector}
\end{equation}
By Eq.~\eqref{eq:gauge-composition-trivial-boundary-leg}, this is an orthogonal projector. We define the Hilbert space of DW boundary states by
\begin{equation}
  \begin{aligned}
  \mathcal H_{\mathrm{DW},\partial}
  &:=\operatorname{im}\Pi_{\mathrm{DW},\partial}\\
  &=\{\ket\phi\in\mathcal H_{\mathrm{DW}}:
      T(t)\ket\phi=\ket\phi\text{ for all }t_X=0\}.
  \end{aligned}
  \label{eq:DW-boundary-Hilbert-space}
\end{equation}
This projector imposes the Gauss condition in the interior but no constraint on the boundary, and therefore gives a charge-condensation boundary. Interior gauge transformations average over all interior gauge fields while leaving the boundary gauge field unchanged, so the remaining boundary degrees of freedom are described by $A_X\in B^q(X,G)$. This $A_X$ is precisely the configuration $a$ in the topological excitation axioms, while the ``gauge transformation'' $T(t)$ with $t_X\ne0$ acts on the boundary degrees of freedom and is precisely a hopping operator. Below, we fix a reference field $b$ and construct a family of states in this space; different choices of $b$ are essentially equivalent, and the reader may take $b=0$.

For the closed manifolds considered in the previous subsection, the gauge-invariant state in Eq.~\eqref{eq:closed-gauge-orbit-wavefunction} was constructed by averaging over all gauge transformations $r\in C^{q-1}(M,G)$. Now that $M$ has a boundary, we must fix the boundary value of the gauge transformation to $r|_X=v\in C^{q-1}(X,G)$ and average within this subset. The boundary value of the gauge field is therefore fixed to
\begin{equation}
	a_v:=b_X+\dd v.
  \label{eq:boundary-gauge-potential}
\end{equation}
The corresponding boundary state is
\begin{equation}
  \begin{aligned}
  \ket{\Psi_v}
  &:=\sum_{\substack{r\in C^{q-1}(M,G)\\r_X=v}}T(r)\ket{b}\\
  &=\sum_{r_X=v}
    \exp \left(2\pi\ii\int_M\Theta^1(r;b)\right)
    \ket{b+\dd r}.
  \end{aligned}
  \label{eq:relative-gauge-orbit-wavefunction}
\end{equation}
Each summand starts from the same $b$ and restricts on the boundary to the same arrow
\begin{equation}
  v:b_X\longrightarrow a_v=b_X+\dd v.
  \label{eq:frame-arrow-from-fixed-base}
\end{equation}
If $t_X=0$, Eq.~\eqref{eq:gauge-composition-trivial-boundary-leg} gives
$T(t)T(r)=T(r+t)$. Hence
\begin{equation}
	T(t)\ket{\Psi_v}=\ket{\Psi_v},
	\qquad t_X=0.
	\label{eq:relative-gauge-invariance}
\end{equation}

Thus $\ket{\Psi_v}\in\mathcal H_{\mathrm{DW},\partial}$. This family spans the entire boundary-state space $\mathcal H_{\mathrm{DW},\partial}$, but the states $\ket{\Psi_v}$ corresponding to the same configuration $a_v=b_X+\dd v$ differ only by phases. We show that, for any $z\in Z^{q-1}(X,G)$,
\begin{equation}\label{eq:different frame same configuration state}
	\ket{\Psi_{v+z}}=\exp \left(2\pi\ii\left[\int_M\Theta^1(\widetilde z;b)
	-\int_X\Theta^2(v,z;b_X)
	\right]\right)\ket{\Psi_v}
\end{equation}
where $\widetilde z\in Z^{q-1}(M,G)$ is any closed extension of $z$. Indeed,
\begin{equation}
	\begin{aligned}
		\ket{\Psi_{v+z}}
		&=\sum_{\substack{r\in C^{q-1}(M,G)\\r_X=v+z}}T(r)\ket{b}\\
		&=\sum_{\substack{t\in C^{q-1}(M,G)\\t_X=v}}T(t+\widetilde{z})\ket{b}\\
		&=\sum_{\substack{t\in C^{q-1}(M,G)\\t_X=v}}\exp \left(-2\pi\ii\int_X\Theta^2(v,z;b_X)\right)T(t)T(\widetilde{z})\ket{b}\\
		&=\sum_{\substack{t\in C^{q-1}(M,G)\\t_X=v}}\exp \left(2\pi\ii\left[\int_M\Theta^1(\widetilde z;b)
		-\int_X\Theta^2(v,z;b_X)
		\right]\right)T(t)\ket{b}\\
		&=\exp \left(2\pi\ii\left[\int_M\Theta^1(\widetilde z;b)
		-\int_X\Theta^2(v,z;b_X)
		\right]\right)\ket{\Psi_v}.
	\end{aligned}
\end{equation}
where we used Eqs.~\eqref{eq:spatial-gauge-composition-operator} and \eqref{eq:T(z)|b>}.

Next, given $s\in C^{q-1}(X,G)$, choose $t\in C^{q-1}(M,G)$ with $t_X=s$. Using Eq.~\eqref{eq:spatial-gauge-composition-operator}, we obtain
\begin{equation}
	T(t)\ket{\Psi_v}
	=\exp \left(2\pi\ii\int_X\Theta^2(s,v;b_X)\right)
	\ket{\Psi_{v+s}}.
	\label{eq:relative-frame-hopping}
\end{equation}
Since the boundary state $\ket{\Psi_v}$ is invariant under gauge transformations in the bulk interior, this equality is independent of the choice of $t$ in the interior of $M$, and we therefore abbreviate $T(t)$ as $T(s)$.

Given a boundary configuration
$a\in B^{q}(X,G)$, Eq.~\eqref{eq:different frame same configuration state} shows that the states $\ket{\Psi_v}$ for which $a=b_X+\dd v$ differ at most by a phase. We may choose any representative $v$ to define the configuration state
\begin{equation}
	\confstate{a}:=\ket{\Psi_v}.
\end{equation}

Equation~\eqref{eq:relative-frame-hopping} then immediately gives the configuration axiom. Its local integral form also immediately implies the locality axiom. Thus $\{\confstate{a},T(s)\}$, as boundary topological excitations, admit well-defined statistics, and their statistics class is precisely $[\omega]\in H^{d+2}(B^qG,\RZ)$.

These excitations should be interpreted as gauge fluxes on a charge-condensation boundary. In our setting, $A\in Z^q(M,G)$ is a flat gauge field. We may, however, imagine gluing $M\simeq D^{d+1}$ to another disk to form a sphere $S^{d+1}$, and extending the gauge field by zero to $\widetilde{A}\in C^q(S^{d+1},G)$. Then $\widetilde{A}$ is no longer flat, and its differential $\dd \widetilde{A}$ corresponds precisely to $A_X$ on the boundary. Another interpretation is that, in the bulk, flux is the differential of the gauge field, or equivalently the boundary of its Poincar\'e dual: the transformation $A\mapsto A+\alpha$ creates flux at the boundary of $\operatorname{PD}(\alpha)$, whereas a gauge transformation $A\mapsto A+\dd t$ creates flux on $\partial\operatorname{PD}(t)$ and then annihilates it, effectively doing nothing. When $\partial\operatorname{PD}(t)$ meets $X=\partial M$, however, the flux is pushed to the boundary, leaving visible topological excitations.

\subsection{From SPT wavefunctions to boundary operators}
\label{subsec:spt-anomalous-boundary-symmetry}

In the previous subsection, we labeled DW boundary states by $v$, but the actual configuration is $a_v=b_X+\dd v$.
When $\dd z=0$, $v$ and $v+z$ give the same configuration,
yet the corresponding $\ket{\Psi_v}$ and $\ket{\Psi_{v+z}}$ differ by
the phase in Eq.~\eqref{eq:different frame same configuration state}. This seems somewhat inconvenient. In this subsection, we turn to the boundary states of an SPT phase, for which the boundary parameter $v$ does constitute an independent degree of freedom.

To this end, we take the gauge transformation $r\in C^{q-1}(M,G)$ itself as an independent degree of freedom,
replacing the original basis state $\ket{b+\dd r}$ by $\ket r$ while retaining the same amplitudes.
This gives a larger matter space. On the closed sphere $M\simeq S^{d+1}$, the corresponding wavefunction is
\begin{equation}
  \ket{\Psi}_{\mathrm{SPT}}
  :=\sum_{r\in C^{q-1}(M,G)}
  \exp\left(2\pi\ii\int_M\Theta^1(r;b)\right)\ket r.
  \label{eq:spt-wavefunction}
\end{equation}
This is precisely the SPT wavefunction constructed from $\omega$; the replacement $\ket{r}\mapsto\ket{b+\dd r}$ gives the gauge-theory wavefunction. Similarly, on $M\simeq D^{d+1}$ with
$X=\partial M\simeq S^d$, fixing the boundary parameter $r_X=v$ gives
\begin{equation}
  \ket{\Psi_v}_{\mathrm{SPT}}
  :=\sum_{r_X=v}
  \exp\left(2\pi\ii\int_M\Theta^1(r;b)\right)\ket r.
  \label{eq:spt-boundary}
\end{equation}
The states $\ket{\Psi_v}_{\mathrm{SPT}}$ form a family of mutually orthogonal boundary states.

The two gauge transformations $r$ and $r+\widetilde z$ acting on $b$ produce the same gauge field $b+\dd r$;
they are related by the global symmetry of the SPT phase:
\begin{equation}
  \mathcal S_{\mathrm{SPT}}(\widetilde z)\ket r
  =\ket{r+\widetilde z},
  \qquad \widetilde z\in Z^{q-1}(M,G).
  \label{eq:spt-onsite-translation}
\end{equation}
This is a pure on-site translation. On the closed sphere, it preserves the SPT wavefunction,
\begin{equation}
  \mathcal S_{\mathrm{SPT}}(\widetilde z)\ket{\Psi}_{\mathrm{SPT}}
  =\ket{\Psi}_{\mathrm{SPT}}.
  \label{eq:spt-sphere-symmetry}
\end{equation}
For the boundary states of the disk, set $z=\widetilde z_X$. Repeating
the comparison of coefficients in Eq.~\eqref{eq:different frame same configuration state}, we obtain
\begin{equation}
  \mathcal S_{\mathrm{SPT}}(\widetilde z)\ket{\Psi_v}_{\mathrm{SPT}}
  =\exp\left(2\pi\ii\left[
    \int_X\Theta^2(v,z;b_X)-\int_M\Theta^1(\widetilde z;b)
  \right]\right)\ket{\Psi_{v+z}}_{\mathrm{SPT}}.
  \label{eq:spt-global-transformation}
\end{equation}
The phase relation between DW boundary states now becomes the symmetry action between different SPT boundary states,
with the phase precisely inverted. Pure translation carries no phase on the original basis states $\ket r$,
but produces the above phase on boundary states carrying the SPT amplitudes.
Closed translations with zero boundary value act as the identity on this family of states, so their effective action depends only on $z$.

Projecting these closed translations to the identity returns us to the DW theory. Define the symmetry projector
\begin{equation}
  \Pi_{\mathrm{SPT}}
  :=\frac{1}{|Z^{q-1}(M,G)|}
  \sum_{\widetilde z\in Z^{q-1}(M,G)}\mathcal S_{\mathrm{SPT}}(\widetilde z).
  \label{eq:spt-symmetry-projector}
\end{equation}
Two parameters $r,r'$ give the same $b+\dd r$ if and only if they differ by a closed cochain.
Thus, in the present sphere--disk setting, the projected degrees of freedom can be relabeled by flat gauge fields,
yielding $\mathcal H_{\mathrm{DW}}$:
\begin{equation}
  \Pi_{\mathrm{SPT}}\ket r
  \quad\longleftrightarrow\quad\ket{b+\dd r}.
  \label{eq:spt-gauge-field-identification}
\end{equation}
This symmetry projection implements gauging from SPT to DW. It is straightforward to check that it also projects the boundary states of the SPT phase to DW boundary states:
\begin{equation}
  \Pi_{\mathrm{SPT}}\ket{\Psi_v}_{\mathrm{SPT}}
  \quad\longleftrightarrow\quad\ket{\Psi_v}.
  \label{eq:spt-dw-boundary-projection}
\end{equation}
Since $\Pi_{\mathrm{SPT}}\mathcal S_{\mathrm{SPT}}=\Pi_{\mathrm{SPT}}$,
after projection, Eq.~\eqref{eq:spt-global-transformation} gives precisely
Eq.~\eqref{eq:different frame same configuration state}.

Next, we define the operator in the SPT phase corresponding to the gauge transformation $T(t)$ of the original gauge theory:
\begin{equation}
  \widetilde T(t)\ket r
  :=\exp\left(2\pi\ii\int_M\Theta^1(t;b+\dd r)\right)\ket{r+t}.
  \label{eq:spt-lifted-hopping}
\end{equation}
It retains the phase of $T(t)$, merely rewriting the translation of basis-state labels from $b+\dd r\mapsto b+\dd r+\dd t$ to $r\mapsto r+t$. Since the phase depends only on $\dd r$, we immediately obtain
\begin{equation}
  \widetilde T(t)\mathcal S_{\mathrm{SPT}}(\widetilde z)
  =\mathcal S_{\mathrm{SPT}}(\widetilde z)\widetilde T(t).
  \label{eq:spt-lifted-hopping-symmetry}
\end{equation}
Thus $\widetilde T$ preserves the symmetric subspace and, under
Eq.~\eqref{eq:spt-gauge-field-identification}, corresponds exactly to the gauge transformation $T$ within that subspace.
Its action on SPT boundary states follows the same calculation as Eq.~\eqref{eq:relative-frame-hopping}:
Eq.~\eqref{eq:theta-2-descent} combines the two bulk phases, leaving
\begin{equation}
  \widetilde T(t)\ket{\Psi_v}_{\mathrm{SPT}}
  =\exp\left(2\pi\ii\int_X\Theta^2(s,v;b_X)\right)
  \ket{\Psi_{v+s}}_{\mathrm{SPT}},\qquad s=t_X.
  \label{eq:spt-lifted-boundary-hopping}
\end{equation}
This action depends only on the boundary parameter $s$;
after applying $\Pi_{\mathrm{SPT}}$ to the boundary states, the above becomes the hopping formula on the DW boundary.

\subsection{Realizing statistics in the symmetric subspace of an anomalous symmetry}
\label{subsec:symmetric-subspace-statistics}

In the previous subsection, Eq.~\eqref{eq:spt-global-transformation} gave the symmetry on SPT boundary states, and Eq.~\eqref{eq:spt-lifted-boundary-hopping} gave the hopping operator.
Since these states and the operator actions on them can all be described using only the boundary parameter $v$, we can directly regard $v$ as a matter field on $X$ without introducing bulk degrees of freedom:
\begin{equation}
  \mathcal H_{\mathrm{matter}}
  :=\operatorname{span}\{\ket v_X:v\in C^{q-1}(X,G)\},
  \qquad
  \ket v_X\quad\longleftrightarrow\quad\ket{\Psi_v}_{\mathrm{SPT}}.
  \label{eq:spt-boundary-matter-correspondence}
\end{equation}
Directly carrying over Eqs.~\eqref{eq:spt-global-transformation} and \eqref{eq:spt-lifted-boundary-hopping} gives the anomalous symmetry
\begin{equation}
  \mathcal S(z)\ket v_X
  :=\exp\left(2\pi\ii\left[
    \int_X\Theta^2(v,z;b_X)-\int_M\Theta^1(\widetilde z;b)
  \right]\right)\ket{v+z}_X,
  \qquad z\in Z^{q-1}(X,G),
  \label{eq:SPT anomalous symmetry}
\end{equation}
and the hopping operator
\begin{equation}
  U(s)\ket v_X
  :=\exp\left(2\pi\ii\int_X\Theta^2(s,v;b_X)\right)\ket{v+s}_X,
  \qquad s\in C^{q-1}(X,G).
  \label{eq:hopping operator formula}
\end{equation}
In other words, the boundary matter space and the SPT boundary-state space have the operator correspondence
\begin{equation}
  \begin{array}{c@{\quad\longleftrightarrow\quad}c}
    \mathcal S(z)&\mathcal S_{\mathrm{SPT}}(\widetilde z)\\[3pt]
    U(s)&\widetilde T(t)
  \end{array}
  \qquad \widetilde z_X=z,\quad t_X=s.
  \label{eq:spt-boundary-operator-correspondence}
\end{equation}
Here the operators on the right are all restricted to the SPT boundary-state space.
The group law obeyed by pure translations $\mathcal S_{\mathrm{SPT}}(\widetilde z)$ and Eq.~\eqref{eq:spt-lifted-hopping-symmetry} then give, respectively,
\begin{equation}
  \mathcal S(z')\mathcal S(z)=\mathcal S(z'+z),
  \qquad U(s)\mathcal S(z)=\mathcal S(z)U(s).
  \label{eq:boundary-hopping-symmetry-inherited}
\end{equation}
In Appendix~\ref{app:hopping-symmetry-coherence}, we give another, geometric proof of these two properties directly from Eqs.~\eqref{eq:SPT anomalous symmetry} and \eqref{eq:hopping operator formula}.

Similarly, we define the projector for this anomalous symmetry,
\begin{equation}
  \Pi_{\mathrm{sym}}
 :=\frac{1}{|Z^{q-1}(X,G)|}\sum_z\mathcal S(z).
  \label{eq:boundary-symmetry-projector}
\end{equation}
whose image $\mathcal H_{\mathrm{sym}}:=\operatorname{im}\Pi_{\mathrm{sym}}$ is precisely the symmetric subspace of $\mathcal H_{\mathrm{matter}}$.

Combining the discussions of the preceding two subsections, we find a one-to-one correspondence between symmetric states of the matter field on $X$ and boundary states of the DW theory, labeled by the same configuration $a_v=b_X+\dd v$. More precisely, the correspondence is
\begin{itemize}
	\item configuration states:
	\begin{equation}
		\Pi_{\mathrm{sym}}\ket v_X
		\quad\longleftrightarrow\quad
		\Pi_{\mathrm{SPT}}\ket{\Psi_v}_{\mathrm{SPT}}
		\quad\longleftrightarrow\quad\ket{\Psi_v}.
		\label{eq:spt-symmetric-dw-correspondence}
	\end{equation}
	\item hopping operators:
	\begin{equation}
		U(s)
		\quad\longleftrightarrow\quad\widetilde{T}(t)\quad\longleftrightarrow\quad
		T(t),\qquad t_X=s.
		\label{eq:spt-projected-hopping-correspondence}
	\end{equation}
\end{itemize}

We have thus constructed, from the same cohomology class $[\omega]\in H^{d+2}(B^qG,\RZ)$, a global symmetry $\mathcal{S}$ on $X\simeq S^d$ and a hopping operator $U$ that commutes with it. The former has anomaly $[\omega]$, since in Section~\ref{subsec:spt-anomalous-boundary-symmetry} we realized it as the boundary symmetry of an SPT phase; the latter also has statistics $[\omega]$, with a partial proof given in \cite{XueWen2026Holographic} and our full proof as yet unpublished.

Equation~\eqref{eq:hopping operator formula} gives one explicit construction of the hopping operator $U(s)$. More generally, once the global symmetry is specified by $\mathcal{S}(z)$, a symmetric operator of the appropriate form necessarily satisfies the topological excitation axioms and is therefore a valid hopping operator, with different choices giving the same statistics.
\begin{theorem}[General criterion for symmetric hopping operators]
\label{thm:symmetric-hopping-criterion}
Use $\mathcal H_{\mathrm{matter}}$, $\mathcal S$, and $\Pi_{\mathrm{sym}}$ as defined in this section,
and take $\Pi_{\mathrm{sym}}\ket v_X$ as the state corresponding to the configuration $a_v=b_X+\dd v$.
Suppose a family of operators $U(s)$ has the form of Eq.~\eqref{eq: ansatz}, and consider only those corresponding to elementary cochains $s\in G\times X_{q-1}$. This family satisfies the configuration axiom
\begin{equation}
  U(s)\Pi_{\mathrm{sym}}\ket v_X
  \propto\Pi_{\mathrm{sym}}\ket{v+s}_X,
  \qquad\forall\,v\in C^{q-1}(X,G),
  \label{eq:configuration-axiom-on-symmetric-subspace}
\end{equation}
and the locality axiom if and only if every elementary $U(s)$ commutes with all $\mathcal S(z)$.
Their statistics class is then $[\omega]$.
\end{theorem}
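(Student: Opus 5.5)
The plan is to prove the two directions of the equivalence separately, and then get the statistics class from the Operator Independence Theorem.

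\emph{Locality.} Under the cochain ansatz Eq.~\eqref{eq: ansatz}, locality is automatic. Each $U(s)$ is a local translation dressed by a phase $\exp(2\pi\ii\int_X L(s,v))$, where $L$ is a natural local operation. So the phase of $U(s)$ depends only on $v$ restricted to the $d$-simplices containing $\supp s$. An iterated commutator $[U(s_k),[\cdots,[U(s_2),U(s_1)]]]$ acts diagonally on $\ket v_X$, because the translations cancel. Its phase is an alternating sum of finite differences of the local phase functions. Such a sum vanishes on every simplex not containing all the $s_i$: it is a $k$-th mixed difference of a function that is independent of at least one of the shifts. When the $s_i$ do not share a common $d$-simplex, every simplex's contribution cancels.

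\emph{Configuration axiom.} Everything therefore reduces to showing that Eq.~\eqref{eq:configuration-axiom-on-symmetric-subspace} is equivalent to $[U(s),\mathcal S(z)]=1$ for all $z$.
\begin{itemize}
\item[($\Leftarrow$)] If $U(s)$ commutes with every $\mathcal S(z)$, it commutes with $\Pi_{\mathrm{sym}}$. Then $U(s)\Pi_{\mathrm{sym}}\ket v_X=\Pi_{\mathrm{sym}}U(s)\ket v_X\propto\Pi_{\mathrm{sym}}\ket{v+s}_X$.
\item[($\Rightarrow$)] Use the structure of $\mathcal S$ from Eq.~\eqref{eq:SPT anomalous symmetry}. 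On the span $\mathcal H_a$ of $\{\ket{v+z}_X: z\in Z^{q-1}(X,G)\}$, the operators $\mathcal S(z)$ act as a strict representation of $Z^{q-1}(X,G)$, and this action is regular (free and transitive on the labels). Hence $\mathcal H_a\cap\mathcal H_{\mathrm{sym}}$ is one-dimensional, spanned by $\Pi_{\mathrm{sym}}\ket v_X\neq0$. This vector has nonzero overlap with every basis state in the orbit, so no stabilizer phase can kill it; here we use that $X\simeq S^d$ and that the group law Eq.~\eqref{eq:boundary-hopping-symmetry-inherited} holds strictly. Since $U(s)$ maps $\mathcal H_a$ to $\mathcal H_{a+\dd s}$, the configuration axiom says that $U(s)$ maps the $\mathcal S=1$ line of $\mathcal H_a$ to the $\mathcal S=1$ line of $\mathcal H_{a+\dd s}$.
\end{itemize}

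The harder half of ($\Rightarrow$) is promoting this statement about invariant vectors to commutation on all of $\mathcal H_a$. I would compute $\mathcal S(z)^{-1}U(s)^{-1}\mathcal S(z)U(s)$ directly on $\ket v_X$. It is diagonal, with phase
\[
c(z,v)=\int_X\bigl[L(s,v+z)-L(s,v)\bigr]+\phi(z,v+s)-\phi(z,v),
\]
where $\phi$ denotes the phase of $\mathcal S(z)$.
\begin{itemize}
\item First show that $c$ is independent of $v$ within an orbit and is a character in $z$. This uses the strict group law of $\mathcal S$: the conjugated operator $U(s)^{-1}\mathcal S(z)U(s)$ is again a regular representation on $\mathcal H_a$, differing from $\mathcal S$ by a diagonal twist.
\item A regular representation twisted by a character $\chi$ has invariant vectors only if $\chi\equiv1$. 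The configuration axiom forces the twisted representation to fix $U(s)^{-1}\Pi_{\mathrm{sym}}\ket{v+s}_X$, so $\chi=1$ and $U(s)$ commutes with $\mathcal S(z)$.
\end{itemize}
This step needs care: showing that the twist really is a $v$-independent character on each orbit is where the argument could fail. If it does, I would fall back on comparing the coefficients of $\Pi_{\mathrm{sym}}\ket v_X$ basis by basis.

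\emph{Statistics.} Once every elementary $U(s)$ is symmetric, the family Eq.~\eqref{eq:hopping operator formula} is also symmetric and lies in the same ansatz. Its commutators with any ansatz operator are local by the same argument as above, so the mutual locality relations hold. The configuration states $\Pi_{\mathrm{sym}}\ket v_X$ are common to both families. The Operator Independence Theorem then gives that $U$ has the same statistics as Eq.~\eqref{eq:hopping operator formula}. By the correspondence Eq.~\eqref{eq:spt-projected-hopping-correspondence} with the DW boundary hopping $T(t)$, whose statistics class is $[\omega]$ as established earlier, the statistics of $U$ is $[\omega]$.
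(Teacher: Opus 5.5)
Your treatment of locality, of the direction ($\Leftarrow$), and of the statistics class via the Operator Independence Theorem matches the paper's proof. The gap is in your main line for ($\Rightarrow$), which rests on two false claims.

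\textbf{First claim.} The strict group law of $\mathcal S$ does not make the diagonal twist a $v$-independent character. Let $D(z):=\mathcal S(z)^{-1}U(s)^{-1}\mathcal S(z)U(s)$ have phase $c(z,v)$ on $\ket v_X$. That $z\mapsto U(s)^{-1}\mathcal S(z)U(s)$ is a representation yields only the cocycle identity $c(z_1+z_2,v)=c(z_1,v+z_2)+c(z_2,v)$. On a free orbit its general solution is a coboundary $h(v+z)-h(v)$, not a character.

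For a concrete case, take $[\omega]=0$ (so $\mathcal S(z)$ is a pure translation), $d=q=1$, and $G=\ZZ_N$ with $N\ge 3$. Let $L(s,v)_{j,j+1}=\bar s_j\bar v_j^2/N$, so $U(s_i)\ket v=\e^{2\pi\ii\bar v_i^2/N}\ket{v+s_i}$. Then $c(k\one,v)\equiv-(2kv_i+k^2)/N$, which varies along the orbit.

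\textbf{Second claim.} Your follow-up lemma is also false. The representation $z\mapsto\chi(z)\mathcal S(z)$ on $\mathcal H_a$ always has a one-dimensional invariant subspace, namely the $\bar\chi$-eigenline of $\mathcal S$. Moreover, $U(s)^{-1}\mathcal S(z)U(s)$ fixes $U(s)^{-1}\Pi_{\mathrm{sym}}\ket{v+s}_X$ whether or not the configuration axiom holds. What the axiom actually supplies is that this vector is proportional to $\Pi_{\mathrm{sym}}\ket v_X$. Separately, the $L$-term in your formula for $c$ should carry a minus sign.

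\textbf{Repair.} The fix is your own fallback, which is exactly the paper's argument; it needs no character structure.
\begin{itemize}
\item The group commutator $[U(s),\mathcal S(z)]$ is diagonal.
\item It fixes $\Pi_{\mathrm{sym}}\ket v_X$. Indeed, $\mathcal S(z)^{\pm1}$ fix both $\Pi_{\mathrm{sym}}\ket v_X$ and $\Pi_{\mathrm{sym}}\ket{v+s}_X$. By the configuration axiom, $U(s)$ sends the first to $\lambda\,\Pi_{\mathrm{sym}}\ket{v+s}_X$, and $U(s)^{-1}$ sends the second back with the factor $\lambda^{-1}$.
\item Translation by $Z^{q-1}(X,G)$ is free, so $\Pi_{\mathrm{sym}}\ket v_X$ has a nonzero coefficient on every $\ket{v+z}_X$. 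This is all that is needed; $X\simeq S^d$ and stabilizer phases play no role.
\item A diagonal operator fixing such a vector is the identity on the whole orbit. Letting $v$ vary gives $[U(s),\mathcal S(z)]=1$.
\end{itemize}
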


\begin{proof}
Fix an elementary cochain $s$. If $U(s)$ is symmetric, it commutes with $\Pi_{\mathrm{sym}}$,
and the configuration axiom follows immediately from Eq.~\eqref{eq: ansatz}.
Conversely, if the configuration axiom holds, $U(s)$ preserves the symmetric subspace, so
\[
  [U(s),\mathcal S(z)]\Pi_{\mathrm{sym}}\ket v_X
  =\Pi_{\mathrm{sym}}\ket v_X.
\]
Since $U(s)$ and $\mathcal S(z)$ translate basis states by $s$ and $z$, respectively,
the group commutator $[U(s),\mathcal S(z)]$ is a diagonal operator.
On the other hand, $\Pi_{\mathrm{sym}}\ket v_X$ has nonzero coefficients throughout the orbit $v+Z^{q-1}(X,G)$,
so this commutator acts as the identity on that orbit.
Letting $v$ range over all cochains gives $[U(s),\mathcal S(z)]=1$.

The locality axiom follows from the locality of $L$ and the condition $L(0,v)=0$.
For elementary cochains $s_1,\ldots,s_k$, the phase of the nested group commutator is a mixed finite difference of local phases,
to which only $d$-simplices containing all the $\supp(s_i)$ can contribute.
Thus, when these supports are not jointly contained in any $d$-simplex, the nested group commutator is the identity.

Finally, compare this family with the hopping operators in Eq.~\eqref{eq:hopping operator formula}.
The two families use the same configuration states, and the locality argument above also applies to mixed nested commutators involving both families.
By the Operator Independence Theorem, they therefore have the same statistics;
the statistics of the latter have already been identified as $[\omega]$ by Eq.~\eqref{eq:relative-frame-hopping}.
\end{proof}

\subsection{The gauging perspective}
\label{subsec:gauging-perspective}

Above, we realized topological excitations in two equivalent ways: in the first, their degrees of freedom reside in the boundary states of a DW gauge theory in one higher dimension; in the second, they correspond to the symmetric subspace of the matter field. In this subsection, we couple the matter field on $X$ to the gauge field on $M$ and examine the resulting gauging procedure. We first construct the gauge action and physical hopping of the coupled system, then recover the two preceding descriptions by gauge fixing. In this way, both the relation and the distinction between symmetry and hopping arise from a single system. We continue to take $M\simeq D^{d+1}$ and $X=\partial M\simeq S^d$, omitting overall normalization factors for states.

\paragraph{Gauge action and physical hopping.}
We retain the separate degrees of freedom of the gauge and matter fields, taking
\begin{equation}
  \mathcal H_{\mathrm{DW}}
  =\operatorname{span}\{\ket A_M:A\in Z^q(M,G)\},
  \qquad
  \mathcal H_{\mathrm{matter}}
  =\operatorname{span}\{\ket v_X:v\in C^{q-1}(X,G)\},
  \label{eq:gauging-source-spaces}
\end{equation}
and the joint space
\begin{equation}
  \mathcal H_{\mathrm{total}}
  :=\mathcal H_{\mathrm{DW}}\otimes\mathcal H_{\mathrm{matter}}
  =\operatorname{span}\{\ket{A,v}:=\ket A_M\otimes\ket v_X\}.
  \label{eq:gauging-kinematic-space}
\end{equation}
The coupled gauge transformation is
\begin{equation}
  A\longmapsto A-\dd t,\qquad
  v\longmapsto v+t_X,\qquad t\in C^{q-1}(M,G),
  \label{eq:coupled-classical-gauge-transformation}
\end{equation}
whose invariant quantity is precisely the configuration of topological excitations:
\begin{equation}
  a:=A_X+\dd v\in B^q(X,G).
  \label{eq:gauge-invariant-configuration}
\end{equation}
The same $a$ can be represented by different pairs $(A,v)$, which are related by gauge transformations. Hopping acts on the configuration itself. We first use the representation that holds $A$ fixed and changes $v$ to $v+s$, defining on the entire joint space
\begin{equation}
  \mathcal U(s)\ket{A,v}
  :=\exp\left(2\pi\ii\int_X\Theta^2(s,v;A_X)\right)\ket{A,v+s},
  \qquad s\in C^{q-1}(X,G).
  \label{eq:gauged-hopping}
\end{equation}
This gives $a\mapsto a+\dd s$. The phase retains the geometric meaning of Eq.~\eqref{eq:relative-frame-hopping}: we regard $v$ as a path from $A_X$ to $a$, and hopping appends a segment $s$ at its endpoint.

A gauge transformation instead moves the starting point of this path while keeping its endpoint $a$ fixed. When the background changes from $A$ to $A-\dd t$, the change in boundary matter is given by
\begin{equation}
  R(s;A_X)\ket v_X
  :=\exp\left(2\pi\ii\int_X\Theta^2(v,s;A_X-\dd s)\right)\ket{v+s}_X
  \label{eq:boundary-compensating-operator}
\end{equation}
where $s=t_X$. The corresponding triangle is
\[
\gaugetriangle{A_X-\dd s}{A_X}{a=A_X+\dd v}
  {s}{v}{v+s}
\]
We prepend $s$ to $v$, so the phase is $\Theta^2(v,s;A_X-\dd s)$ with starting point $A_X-\dd s$.
This is naturally distinct from the action of $\mathcal U(s)$ at the endpoint of the path.

The bulk gauge transformation is implemented by $T(t)^{-1}$. Together, the two give the full Gauss action
\begin{equation}
  G(t)\bigl(\ket A_M\otimes\ket\psi_X\bigr)
  =T(t)^{-1}\ket A_M\otimes R(t_X;A_X)\ket\psi_X.
  \label{eq:coupled-gauss-factorization}
\end{equation}
We use the corresponding $R$ on each initial $A$ component and then extend linearly. Explicitly,
\begin{equation}
  \begin{aligned}
  G(t)\ket{A,v}
  ={}&\exp\left(2\pi\ii\left[
    -\int_M\Theta^1(t;A-\dd t)
    +\int_X\Theta^2(v,t_X;A_X-\dd t_X)
  \right]\right)\\
  &\hspace{20mm}\times\ket{A-\dd t,v+t_X}.
  \end{aligned}
  \label{eq:coupled-gauss-operator}
\end{equation}
Equation~\eqref{eq:theta-2-descent} gives the boundary phase from composing two $T^{-1}$ transformations,
while Eq.~\eqref{eq:theta-3-descent} makes the composition of two $R$ transformations produce the opposite phase. Hence
\begin{equation}
  G(t_2)G(t_1)=G(t_1+t_2),\qquad G(t)^\dagger=G(-t).
  \label{eq:coupled-gauss-group-law}
\end{equation}
The DW phase in one higher dimension provides a compatible joint gauge action on the matter and gauge fields.

The operations at the starting point and endpoint of the path are also compatible. Substituting $(s,v,t_X)$ into Eq.~\eqref{eq:theta-3-descent} at $A_X-\dd t_X$ and integrating over the closed space $X$, we obtain
\begin{equation}
  \mathcal U(s)G(t)=G(t)\mathcal U(s).
  \label{eq:gauged-hopping-invariance}
\end{equation}
This equality is essential to the physical meaning of hopping: performing hopping in different gauge representations of the same configuration gives different representations of the same result. It will also yield the commutation relation between hopping and symmetry obtained above.

\paragraph{The Gauss condition and gauge fixing.}
Gauging requires the joint state to satisfy all conditions $G(t)=1$. We define
\begin{equation}
  \Pi_{\mathrm{gauged}}
  :=\frac1{|C^{q-1}(M,G)|}\sum_{t\in C^{q-1}(M,G)}G(t),
  \qquad
  \mathcal H_{\mathrm{gauged}}:=\operatorname{im}\Pi_{\mathrm{gauged}}.
  \label{eq:gauged-projector}
\end{equation}
Equation~\eqref{eq:coupled-gauss-group-law} ensures that this is an orthogonal projector. It combines the various representations $(A,v)$ of the same $a$ with the gauge phases.
For $a\in B^q(X,G)$, choose a flat field $A^{(a)}$ satisfying $(A^{(a)})_X=a$ and take
\begin{equation}
  \ket a_{\mathrm{gauged}}
  \ \propto\
  \Pi_{\mathrm{gauged}}\bigl(\ket{A^{(a)}}_M\otimes\ket0_X\bigr).
  \label{eq:gauged-configuration-state}
\end{equation}
Each configuration gives a nonzero one-dimensional space of physical states, so
\begin{equation}
  \mathcal H_{\mathrm{gauged}}
  =\bigoplus_{a\in B^q(X,G)}\mathbb C\ket a_{\mathrm{gauged}},
  \qquad \dim\mathcal H_{\mathrm{gauged}}=|B^q(X,G)|.
  \label{eq:gauged-configuration-decomposition}
\end{equation}
Since $\mathcal U(s)$ commutes with $\Pi_{\mathrm{gauged}}$, it acts on this physical space; $G(t)$ acts as the identity there.

We now start from a gauge-invariant state, written as
\begin{equation}
  \ket\Psi=\sum_{A,v}\Psi(A,v)\ket{A,v}\in\mathcal H_{\mathrm{gauged}}.
  \label{eq:gauged-state-components}
\end{equation}
The Gauss condition relates different $(A,v)$ components on the same gauge orbit, so a set of components in a fixed gauge can represent the entire state. We use $A=b$ on an arrow to mean retaining the coefficients $\Psi(b,v)$ and dropping the fixed gauge-field label; similarly, $v=0$ means retaining $\Psi(A,0)$ and dropping the matter-field label:
\begin{equation}
  \ket\Psi\xrightarrow{\ A=b\ }
    \ket{\psi_b}:=\sum_v\Psi(b,v)\ket v_X,
  \qquad
  \ket\Psi\xrightarrow{\ v=0\ }
    \ket\phi:=\sum_A\Psi(A,0)\ket A_M.
  \label{eq:gauging-fixed-components}
\end{equation}
We use the same reference field $b$ as above. We now examine the constraints and hopping in these two descriptions.

\paragraph{Fixing the gauge field: the symmetric subspace.}
To extract the operators in a fixed background, we first examine $\mathcal U(s)$ and the background-preserving $G(\widetilde z)$ on the full space $\ket A_M\otimes\mathcal H_{\mathrm{matter}}$. Here $z\in Z^{q-1}(X,G)$, and $\widetilde z$ is a closed bulk extension of it. These two actions are
\begin{equation}
  \begin{aligned}
  \mathcal U(s)\bigl(\ket A_M\otimes\ket\psi_X\bigr)
    &=\ket A_M\otimes U_A(s)\ket\psi_X,\\
  G(\widetilde z)\bigl(\ket A_M\otimes\ket\psi_X\bigr)
    &=\ket A_M\otimes\mathcal S_A(z)\ket\psi_X,
  \end{aligned}
  \label{eq:gauss-residual-symmetry}
\end{equation}
where
\begin{align}
  U_A(s)\ket v_X
  &:=\exp\left(2\pi\ii\int_X\Theta^2(s,v;A_X)\right)\ket{v+s}_X,
  \label{eq:background-hopping}\\
  \mathcal S_A(z)\ket v_X
  &:=\exp\left(2\pi\ii\left[
    \int_X\Theta^2(v,z;A_X)-\int_M\Theta^1(\widetilde z;A)
  \right]\right)\ket{v+z}_X.
  \label{eq:background-global-symmetry}
\end{align}
Thus, global symmetry is the matter representation of the full gauge action when the background is preserved, while hopping is the representation of the physical operator $\mathcal U$ in that background. Equation~\eqref{eq:gauged-hopping-invariance} immediately gives the operator relation on the entire matter space
\begin{equation}
  U_A(s)\mathcal S_A(z)=\mathcal S_A(z)U_A(s)
  \label{eq:background-hopping-symmetry}
\end{equation}
This is precisely the compatibility of the left and right actions in Eq.~\eqref{eq:bimodule}; for its tetrahedral interpretation, see \hyperref[app:hopping-symmetry-coherence]{Appendix~\ref*{app:hopping-symmetry-coherence}}.

When $A=b$, $U_b$ and $\mathcal S_b$ are precisely $U$ and $\mathcal S$ in Section~\ref{subsec:symmetric-subspace-statistics}. For a gauge-invariant state, the Gauss condition preserving this background further requires
\begin{equation}
  \mathcal S_b(z)\ket{\psi_b}=\ket{\psi_b},
  \qquad \ket{\psi_b}\in\mathcal H_{\mathrm{sym}}.
  \label{eq:gauging-residual-symmetric-condition}
\end{equation}
The symmetry condition is therefore the Gauss condition that remains after fixing the gauge field. It allows the matter state to be reconstructed consistently as a full gauge-invariant state, while $U_b(s)$ preserves this condition and implements hopping within the symmetric subspace.

The same physical state can also be represented in other backgrounds. Write Eq.~\eqref{eq:gauged-state-components} as $\ket\Psi=\sum_A\ket A_M\otimes\ket{\psi_A}$. From $G(t)\ket\Psi=\ket\Psi$,
\begin{equation}
  \ket{\psi_{A'}}
  =\exp\left(-2\pi\ii\int_M\Theta^1(t;A')\right)
    R(t_X;A_X)\ket{\psi_A},
  \qquad A'=A-\dd t.
  \label{eq:gauged-background-components}
\end{equation}
Given $\ket{\psi_b}$, this relation determines the components in other backgrounds; when different parameters implement the same background change, Eq.~\eqref{eq:gauging-residual-symmetric-condition} ensures consistency.
Correspondingly, Eq.~\eqref{eq:gauged-hopping-invariance} and the group law of $G$ give
\begin{equation}
  \begin{aligned}
  U_{A'}(s)
    &=R(t_X;A_X)U_A(s)R(t_X;A_X)^{-1},\\
  \mathcal S_{A'}(z)
    &=R(t_X;A_X)\mathcal S_A(z)R(t_X;A_X)^{-1},
  \end{aligned}
  \qquad A'=A-\dd t.
  \label{eq:background-operator-conjugation}
\end{equation}
Comparing states in different backgrounds includes the bulk phase in Eq.~\eqref{eq:gauged-background-components}; this common phase cancels in the conjugation relations for operators. Thus $A=0$ is just a special case of these gauge choices, and hopping, symmetry, and their commutation relation have a consistent interpretation across backgrounds.

We can now also see why closed-support hopping and global symmetry are easily confused at the classical level. For a closed parameter $z$,
\begin{equation}
  \begin{aligned}
  G(\widetilde z):\quad&
  A\longmapsto A-\dd\widetilde z=A,\qquad v\longmapsto v+z,\\
  \mathcal U(z):\quad&
  A\longmapsto A,\hspace{27mm}v\longmapsto v+z.
  \end{aligned}
  \label{eq:closed-actions-classical}
\end{equation}
The two have the same initial and final labels, but the first line relates gauge representations of the same configuration, while the second implements closed hopping.
Equations~\eqref{eq:background-global-symmetry} and \eqref{eq:background-hopping} record the different quantum phases of these two processes: the composition order in $\Theta^2$ differs, and the former also includes the bulk $\Theta^1$ phase.
In the untwisted case, taking $\Theta^1=\Theta^2=0$ makes both pure translations; a twist can reveal the difference between the processes.
On the gauge-invariant space, $G(\widetilde z)$ already acts as the identity, while $\mathcal U(z)$ can still leave different phases on different configuration states.

\paragraph{Fixing the matter field: the DW boundary.}
The other choice is $v=0$. We reach this gauge by taking $t_X=-v$, and the residual gauge transformations satisfy $t_X=0$.
Then $R$ acts as the identity, and the Gauss action reduces to the interior $T(t)^{-1}$, so the component $\ket\phi$ satisfies
\begin{equation}
  T(t)\ket\phi=\ket\phi\quad(t_X=0),
  \qquad\ket\phi\in\mathcal H_{\mathrm{DW},\partial}.
  \label{eq:gauging-residual-DW-condition}
\end{equation}
This is precisely the DW boundary-state space of Section~\ref{subsec:boundary-excitation-statistics}; the configuration is now written as $a=A_X$.

To extract hopping in this representation, start from the gauge representative $(A,0)$. The operator $\mathcal U(s)$ takes it to $(A,s)$; we then use $G(-t)$, which acts as the identity on physical states, to return the representative to $v=0$, where $t_X=s$:
\begin{equation}
  (A,0)\xrightarrow{\ \mathcal U(s)\ }(A,s)
  \xrightarrow{\ G(-t)\ }(A+\dd t,0).
  \label{eq:gauging-hopping-restore-matter}
\end{equation}
Equation~\eqref{eq:normalized} makes the phase of the first step vanish; Eqs.~\eqref{eq:theta-2-descent} and \eqref{eq:theta-3-descent} give the phase of the second step as
\begin{equation}
  -\int_M\Theta^1(-t;A+\dd t)
  +\int_X\Theta^2(s,-s;A_X+\dd s)
  =\int_M\Theta^1(t;A).
  \label{eq:gauging-restoration-phase}
\end{equation}
This is exactly the phase of $T(t)$. Thus, in the $v=0$ representation, $\mathcal U(s)$ becomes $T(t)$.
Combining this with the result for $A=b$, we obtain the commutative diagram
\begin{equation}
  \begin{tikzcd}[column sep=4.0em,row sep=3.0em]
  \mathcal H_{\mathrm{sym}}\arrow[d,"U_b(s)"']
  &\mathcal H_{\mathrm{gauged}}
    \arrow[l,"A=b"',"\cong"]\arrow[r,"v=0","\cong"']
    \arrow[d,"\mathcal U(s)"]
  &\mathcal H_{\mathrm{DW},\partial}\arrow[d,"T(t)"]\\
  \mathcal H_{\mathrm{sym}}
  &\mathcal H_{\mathrm{gauged}}\arrow[l,"A=b"]\arrow[r,"v=0"']
  &\mathcal H_{\mathrm{DW},\partial}
  \end{tikzcd}
  \qquad s=t_X.
  \label{eq:gauging-hopping-diagram}
\end{equation}
The horizontal arrows all select components as in Eq.~\eqref{eq:gauging-fixed-components}, and the Gauss condition makes them linear isomorphisms.
For $a=b_X+\dd v$, we can consistently choose the scales and phases of the states so that
\begin{equation}
  \underbrace{\Pi_{\mathrm{sym}}\ket v_X}_{\ket a_{\mathrm{sym}}}
  \xleftarrow{\ A=b\ }
  \ket a_{\mathrm{gauged}}
  \xrightarrow{\ v=0\ }
  \underbrace{\ket{\Psi_v}}_{\ket a_{\mathrm{DW},\partial}},
  \label{eq:gauging-same-configuration}
\end{equation}
Here overall normalization factors are omitted from the state correspondence.
Thus $U_b(s)$, $\mathcal U(s)$, and $T(t)$ are three representations of the same physical hopping. The commutative diagram preserves ordered hopping processes, so closed processes in all three spaces give the same statistics.

\paragraph{Reconstructing the full state by gauge averaging.}
Selecting components extracts two descriptions from $\mathcal H_{\mathrm{gauged}}$; conversely, gauge averaging reconstructs the full state from either description:
\begin{equation}
  \ket\Psi
  \ \propto\ \Pi_{\mathrm{gauged}}\bigl(\ket b_M\otimes\ket{\psi_b}\bigr)
  \ \propto\ \Pi_{\mathrm{gauged}}\bigl(\ket\phi\otimes\ket0_X\bigr).
  \label{eq:gauging-reconstruction}
\end{equation}
For an arbitrary matter input, full gauge averaging already includes the transformations preserving $A=b$; for an arbitrary DW input, it likewise includes the interior transformations preserving $v=0$. Thus imposing the residual constraints before full gauge averaging gives the same result as full gauge averaging directly. The three original spaces and their three projectors therefore form a commutative diagram:
\begin{equation}
  \begin{tikzcd}[column sep=6em,row sep=3.2em]
  \mathcal H_{\mathrm{matter}}
    \arrow[r,"\text{adjoin }A=b"]
    \arrow[d,"\Pi_{\mathrm{sym}}"']
  &\mathcal H_{\mathrm{total}}\arrow[d,"\Pi_{\mathrm{gauged}}"]
  &\mathcal H_{\mathrm{DW}}
    \arrow[l,"\text{adjoin }v=0"']
    \arrow[d,"\Pi_{\mathrm{DW},\partial}"]\\
  \mathcal H_{\mathrm{sym}}
    \arrow[r,"\text{gauge averaging}"',"\cong"]
  &\mathcal H_{\mathrm{gauged}}
  &\mathcal H_{\mathrm{DW},\partial}
    \arrow[l,"\text{gauge averaging}","\cong"']
  \end{tikzcd}
  \label{eq:gauging-three-spaces}
\end{equation}
The upper arrows denote $\ket\psi\mapsto\ket b_M\otimes\ket\psi$ and $\ket\phi\mapsto\ket\phi\otimes\ket0_X$, respectively, while the lower arrows additionally apply $\Pi_{\mathrm{gauged}}$.
Gauge fixing and gauge averaging thus express the same correspondence in opposite directions: DW boundary states and symmetric matter states give the components of the same gauge-invariant state in two gauges.
After fixing the background, symmetry gives the residual Gauss condition that the matter states must satisfy, while hopping acts between states satisfying this condition. Their commutativity is precisely a manifestation of the compatibility of $\mathcal U$ and $G$ in the full system.

For reference, the objects introduced in this section and their physical roles are summarized below. Each projector is listed together with its image.
\begin{center}
  \small
  \begin{tabularx}{0.96\textwidth}{@{}>{\raggedright\arraybackslash}p{0.43\textwidth}
      >{\raggedright\arraybackslash}X@{}}
    \toprule
    \textbf{Object} & \textbf{Physical role}\\
    \midrule
    $[\omega]$ & DW twist, SPT boundary anomaly, and statistics of boundary excitations\\
    $A,\ b,\ a$ & $A$ is a flat gauge field, $b$ a fixed reference field, and $a$ a boundary configuration; the subscript $X$ denotes boundary restriction\\
    $t,\ r;\ s,\ v,\ z$ & $t,r$ are bulk parameters; $s,v$ are boundary cochains, $z$ is a closed parameter, and $\widetilde z$ is its closed bulk extension\\
    $\Theta^k$ & Coherence data for gauge transformations and their compositions, recording quantum phases and compatibility relations\\
    $\mathcal H_{\mathrm{DW}}$ & Kinematic gauge-field space with basis labeled by all flat $A$\\
    $\mathcal H_{\mathrm{matter}}$ & Boundary matter space with basis labeled by $v$, carrying the original symmetry $\mathcal S$\\
    $\mathcal H_{\mathrm{total}}=\mathcal H_{\mathrm{DW}}\otimes\mathcal H_{\mathrm{matter}}$ & Joint space of the gauge field and boundary matter\\
    $T(t),\ \widetilde T(t),\ U(s),\mathcal U(s)$ & Representations of the same hopping in DW, SPT matter, effective boundary matter, and the coupled system, with $t_X=s$\\
    $R(s;A_X),\ G(t)$ & $R$ translates boundary matter forward and, together with $T^{-1}$, forms the full Gauss action $G$\\
    \midrule
    \textbf{Projector and image} & \textbf{Constraint and physical meaning}\\
    \midrule
    $\mathcal H_{\mathrm{DW}}\xrightarrow{\Pi_{\mathrm{DW}}}Z(M)$ & Imposes all Gauss conditions on a closed space; the image is the physical space of DW theory\\[2pt]
    $\Pi_{\mathrm{SPT}},\quad\operatorname{im}\Pi_{\mathrm{SPT}}\cong\mathcal H_{\mathrm{DW}}$ & Imposes the SPT symmetry condition; relabeling projected states by $A=b+\dd r$ identifies them with the flat gauge-field space\\[2pt]
    $\mathcal H_{\mathrm{DW}}\xrightarrow{\Pi_{\mathrm{DW},\partial}}\mathcal H_{\mathrm{DW},\partial}$ & Imposes the interior Gauss condition; the image is the state space of the charge-condensation boundary\\[2pt]
    $\mathcal H_{\mathrm{matter}}\xrightarrow{\Pi_{\mathrm{sym}}}\mathcal H_{\mathrm{sym}}$ & Imposes $\mathcal S(z)=1$; the image is the symmetric subspace of boundary matter\\[2pt]
    $\mathcal H_{\mathrm{total}}\xrightarrow{\Pi_{\mathrm{gauged}}}\mathcal H_{\mathrm{gauged}}$ & Imposes all joint Gauss conditions $G(t)=1$; the image is the gauge-invariant space of the coupled system\\
    \bottomrule
  \end{tabularx}
\end{center}

\section{Statistical obstructions to symmetric truncation}
\label{sec:statistical-obstruction}

In the previous section, we realized hopping operators as symmetric operators under a given global symmetry. Within the cochain form of Eq.~\eqref{eq: ansatz} and the corresponding configuration states, Theorem~\ref{thm:symmetric-hopping-criterion} further shows that this symmetry condition suffices to ensure the axioms of topological excitations and to make the statistics equal to the anomaly of the original symmetry.
We now further require these hopping operators to realize the same global symmetry through lattice truncation, thereby forming a symmetric truncation.
Section~\ref{subsec:lcw-type-iii} has already presented a one-dimensional type-III $\ZZ_2^3$ anomalous symmetry that admits no such truncation; this section discusses this obstruction and its generalizations from the perspective of statistics.

The following arguments directly use the axioms of topological excitations and operator commutation relations, and therefore also apply to other realizations satisfying these conditions. Specifically, we consider a family of operators $U(s)$ such that
\begin{enumerate}[label=\textbf{(\arabic*)}]
\item $U(s)$ can serve as hopping operators for topological excitations described by the $B^qG$ fusion rule, meaning that there exists a family of configuration states that, together with these operators, satisfies the axioms of topological excitations;
\item these patch operators commute with all transformations of the same global symmetry $\mathcal S$, and each $\mathcal S(g)$ can be written as a product of hopping operators on several patches in a predetermined order, namely,
	\begin{equation}
		\mathcal S(g)\propto U(s_m)\cdots U(s_2)U(s_1).
		\label{eq:fixed-patch-factorization}
	\end{equation}
\end{enumerate}

Condition (1) gives $U(s)$ well-defined statistics, whereas, as we shall see, condition (2) restricts the possible values of these statistics. For example, for one-dimensional $\ZZ_2^3$ quasiparticles, condition (2) obstructs type-III statistics.

We present three examples in this section. They share a common structure: nontrivial statistics appears as a nontrivial (nested) commutator between an open hopping patch and a closed global operator; symmetric truncation sets the innermost commutator to \(1\), thereby excluding these statistics.

\subsection{Obstruction from one-dimensional type-III statistics}
\label{subsec:fano-cyclic-processes}

Consider point excitations in one dimension with fusion group $V=\ZZ_2^3$, whose statistics is classified by $H^3(BV,\RZ)\simeq \ZZ_2^7$. Let $a_1,a_2,a_3$ be the coordinate $1$-cocycles corresponding to the three $\ZZ_2$ factors. The seven independent generators can then be chosen as
\begin{equation}\label{eq:z2-cubed-basis}
	\frac12a_1^3,\ \frac12a_2^3,\ \frac12a_3^3,\ \frac12a_1a_2^2,\ \frac12a_1a_3^2,\ \frac12a_2a_3^2,\ \frac12a_1a_2a_3.
\end{equation}
We call a cohomology class in $H^3(BV,\RZ)$ type-III if its coefficient along $\frac12a_1a_2a_3$ is nonzero. Note that there are $2^6$ type-III cohomology classes, and they are not all equivalent.

Type-III classes also admit a more symmetric characterization. Each nontrivial element $g\in V^\times:=V\setminus\{0\}$ corresponds to a subgroup embedding $\{0,g\}\simeq \ZZ_2\subset V$, which induces a map of cohomology groups
\begin{equation}
	\varphi_g: H^3(BV,\RZ)\to H^3(B\ZZ_2,\RZ)\simeq \ZZ_2.
\end{equation}
More explicitly, for $(x_1,x_2,x_3)\in \ZZ_2^3$, this map amounts to replacing the $1$-cocycle $a_i$ in Eq.~\eqref{eq:z2-cubed-basis} by $x_ia$, where $a$ is the canonical $1$-cocycle of $B\ZZ_2$. Direct verification gives the following lemma.
\begin{lemma}
$[\omega]\in H^3(BV,\RZ)$ is type-III if and only if an odd number of the $\varphi_g([\omega])$ are nontrivial as $g$ ranges over $V^\times$.
\end{lemma}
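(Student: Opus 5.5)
Write $[\omega]$ in the basis of Eq.~\eqref{eq:z2-cubed-basis} as
\[
[\omega]=\tfrac12\Bigl(\sum_i c_i a_i^3+c_{12}a_1a_2^2+c_{13}a_1a_3^2+c_{23}a_2a_3^2+c_{123}a_1a_2a_3\Bigr),
\qquad c_\bullet\in\ZZ_2 .
\]
The statement says exactly that $c_{123}=1$ if and only if $\sum_{g\in V^\times}\varphi_g([\omega])=1$ in $\ZZ_2$. The plan is to compute $\varphi_g$ on each generator and then sum over $g$.

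\textbf{Step 1: pull back to a single $\ZZ_2$.} Fix $g=(x_1,x_2,x_3)\ne0$. The pullback sends $a_i\mapsto x_ia$, and $x_i\in\{0,1\}$ is idempotent. A cubic monomial $a_ia_ja_k$ therefore maps to $x_ix_jx_k\cdot\frac12a^3$. Since $\frac12a^3$ generates $H^3(B\ZZ_2,\RZ)\simeq\ZZ_2$, we get
\[
\varphi_g([\omega])=\sum_i c_ix_i+c_{12}x_1x_2+c_{13}x_1x_3+c_{23}x_2x_3+c_{123}x_1x_2x_3\pmod 2 .
\]
This step uses only that the coordinate cocycles pull back to $x_ia$ and that the cup products respect this substitution. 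That is the paper's own description of $\varphi_g$, so I take it as given. One should note that $a_1a_2^2$ and $a_1^2a_2$ are both covered by this formula, so the choice of basis is harmless.

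\textbf{Step 2: sum over $V^\times$.} Summing over all $g\in V$ (the term $g=0$ contributes $0$), it is enough to evaluate $\sum_{x\in\ZZ_2^3}$ of each monomial mod $2$. A monomial in $k$ of the three variables is nonzero on exactly $2^{3-k}$ points. This count is even unless $k=3$. So the linear and quadratic terms drop out, and only $c_{123}$ survives with multiplicity $1$. Hence
\[
\#\{g\in V^\times:\varphi_g([\omega])\ne0\}\equiv c_{123}\pmod 2,
\]
which is the lemma.

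\textbf{Main obstacle.} The only point needing care is Step 1, namely the well-definedness of $\varphi_g$ on classes. The formula above is a polynomial in the class coefficients, and it must hold even though, for instance, $\frac12 a_i^2a_j$ and $\frac12 a_ia_j^2$ may or may not be cohomologous in $H^3$. I would handle this by noting that $\varphi_g$ is induced by a map of spaces $B\ZZ_2\to BV$. It is therefore a homomorphism on cohomology, and it suffices to evaluate it on the chosen representatives. Alternatively, one can check directly on the normalized $3$-cocycle representatives $\frac12 a_ia_ja_k(g,h,l)=\frac12 g_ih_jl_k$ restricted to the subgroup $\{0,g\}$.
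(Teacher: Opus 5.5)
Your proof is correct and is the ``direct verification'' the paper cites without writing out. Pulling back along $a_i\mapsto x_ia$ turns the class into a Boolean polynomial in the coordinates of $g$, and summing over $V^\times$ cancels every monomial except $x_1x_2x_3$, the only one whose support has odd size; your appeal to functoriality of the pullback is the right way to handle well-definedness, and your hedge resolves cleanly, since $\tfrac12a_i^2a_j$ and $\tfrac12a_ia_j^2$ differ by $\tfrac12\mathrm{Sq}^1(a_ia_j)$, which is exact in $\RZ$-valued cochains, though the argument never needs this.
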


\begin{figure}[H]
	\centering
	\begin{tikzpicture}[x=1.15cm,y=1.15cm,every node/.style={font=\small}]
		\coordinate (a) at (0,1.8);
		\coordinate (b) at (-1.56,-0.9);
		\coordinate (c) at (1.56,-0.9);
		\coordinate (ab) at (-0.78,0.45);
		\coordinate (ac) at (0.78,0.45);
		\coordinate (bc) at (0,-0.9);
		\coordinate (o) at (0,0);
		\draw[thick,linkblue] (a)--(b)--(c)--cycle;
		\draw[thick,linkblue] (a)--(bc);
		\draw[thick,linkblue] (b)--(ac);
		\draw[thick,linkblue] (c)--(ab);
		\draw[thick,linkblue] (o) circle (0.9);
		\foreach \p in {a,b,c,ab,ac,bc,o}{\fill[red!75!black] (\p) circle (2.1pt);}
		\node[above] at (a) {$100$};
		\node[below left] at (b) {$010$};
		\node[below right] at (c) {$001$};
		\node[left] at (ab) {$110$};
		\node[right] at (ac) {$101$};
		\node[below] at (bc) {$011$};
		\node[above right,yshift=6pt] at (o) {$111$};
	\end{tikzpicture}
\caption{Geometrically, $V^\times$ is known as the Fano plane. Each point corresponds to a nontrivial excitation whose self-statistics invariant can independently take the values $\pm1$; the particles corresponding to the three points on each line fuse to zero. Any change of basis of $V$ merely permutes these seven points.}
	\label{fig:fano-statistical-process}
\end{figure}

Physically, the $7$ group elements in $V^\times:=V\setminus\{0\}$ correspond to seven types of nontrivial point excitations, and we can discuss the self-statistics of each point excitation $g\in V^\times$. The corresponding classification group is $H^3(B\ZZ_2,\U(1))\cong\ZZ_2$, and the invariant takes values in $\{\pm1\}$. Consider three points $0,1,2$ on a circle, and denote by $U_{ij}(g)$ the hopping operator for a particle of type $g$ between points $i,j$. The corresponding self-statistics invariant is
\begin{equation}
	\lambda(g)=[U_{01}(g)^2,U_{12}(g)]\in\{\pm1\}.
\end{equation}

Therefore,

\begin{theorem}\label{thm: type-III}
Point excitations described by $V$ have type-III statistics if and only if there is an odd number of $g\in V^\times$ satisfying $\lambda(g)=-1$. Equivalently, this holds if and only if
	\begin{equation}
		\Theta=\prod_{g\in V^\times}\lambda(g)=-1.
		\label{eq:fano-phase-product}
	\end{equation}
\end{theorem}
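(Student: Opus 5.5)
The plan is to reduce Theorem~\ref{thm: type-III} to the preceding Lemma by showing that $\lambda(g)$ is exactly the image $\varphi_g([\omega])\in H^3(B\ZZ_2,\RZ)\cong\ZZ_2$, written multiplicatively as $\pm1$. Here $[\omega]\in\pi_0\mathcal M^1(X,V)\simeq H^3(BV,\RZ)$ is the statistics class of the given family. Once this identification holds, the Lemma says $[\omega]$ is type-III if and only if an odd number of the $\varphi_g([\omega])$ are nontrivial. That is the same as an odd number of $g\in V^\times$ with $\lambda(g)=-1$. Since each $\lambda(g)=\pm1$, this parity condition is equivalent to $\Theta=\prod_{g}\lambda(g)=-1$.

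For the identification, I would first restrict the excitations to the subgroup $\{0,g\}\simeq\ZZ_2$. I keep only the configuration states $\confstate{a}$ with $a\in B^1(X,\{0,g\})$ and only the hopping operators $U(s)$ with coefficient $g$. These still satisfy the configuration and locality axioms, because both axioms are conditions on subfamilies. So this restriction gives a point in $\mathcal M^1(X,\ZZ_2)$.

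The next step is to check that this restriction map on solution spaces is compatible with $\varphi_g$ under the classification isomorphism $\pi_0\mathcal M^q(X,G)\simeq H^{d+2}(B^qG,\RZ)$. I would justify this by naturality of the classification in $G$. Concretely, I would evaluate both sides on a representative of each component, such as the cochain-ansatz model built from a cocycle $\omega$ as in Section~\ref{sec:what-is-hopping}. Restricting that model to coefficients in $\{0,g\}$ is the same as building the model from the pulled-back cocycle $\omega(x_i a)$.

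It then remains to show that for $\ZZ_2$ excitations the invariant $[U_{01}^2,U_{12}]$ detects the generator of $H^3(B\ZZ_2,\RZ)$. This is the statement in Section~\ref{sec:statistics} that the sign of $\lambda$ distinguishes the two components. As a check, the model of Eq.~\eqref{eq:U-s-action} gives $\lambda=-1$ for the nontrivial class, while pure translations $X(s)$ give $\lambda=+1$. Coarse-graining onto three points $0,1,2$ of the circle is harmless, because $\lambda$ is independent of the vertex chosen and is a statistical process.

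The main obstacle I expect is the naturality step: that restricting to a subgroup commutes with the classification isomorphism, so that the measured $\lambda(g)$ really equals $\varphi_g([\omega])$ and not some other linear functional of $[\omega]$. If full naturality is unavailable, I would fall back on linearity. Both $[\omega]\mapsto\lambda(g)$ and $\varphi_g$ are homomorphisms $H^3(BV,\RZ)\to\ZZ_2$, the former because stacking solutions multiplies Berry phases. So it suffices to compare them on the seven generators of Eq.~\eqref{eq:z2-cubed-basis}. Each generator has an explicit cochain-ansatz realization, and a direct computation of $[U_{01}(g)^2,U_{12}(g)]$ for it should reproduce the substitution $a_i\mapsto x_ia$.
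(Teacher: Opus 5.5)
Your proposal is correct and follows the paper's argument. The theorem is read off from the preceding Lemma once $\lambda(g)$ is identified with the self-statistics class $\varphi_g([\omega])\in H^3(B\ZZ_2,\RZ)\cong\ZZ_2$ of the type-$g$ particle; the paper simply asserts this identification, and your restriction-to-the-$\{0,g\}$-subfamily and naturality argument (checked on cochain-ansatz representatives) fills it in rather than changing the route.
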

This gives a statistical invariant that fully detects type-III statistics.

The same statistical invariant has another construction. Choose any basis $(x,y,z)$ of $V$ and any fixed multiplication order in Eq.~\eqref{eq:fixed-patch-factorization}, and multiply the $x$-patches around the entire circle to form a closed operator, denoted by $U_{\mathrm{whole}}(x)$. One can then verify that $\Theta$ in Theorem~\ref{thm: type-III} has an equivalent expression, independent of the multiplication order used to construct $U_{\mathrm{whole}}(x)$:
\begin{equation}	\label{eq:three-point-theta}
		\Theta
		=
		[U_{01}(y),[U_{12}(z),U_{\mathrm{whole}}(x)]]
		[U_{01}(z),[U_{12}(y),U_{\mathrm{whole}}(x)]]^{-1}.
\end{equation}

More precisely, following Section~\ref{sec:statistics}, we realize topological excitations on the dual cells of a polygon $X$ (with point excitations on edges and hopping operators on vertices). Denote the solution space of the axioms of topological excitations by $\mathcal{M}^1(X,V)$. We state without proof that, despite their very different forms, Eqs.~\eqref{eq:fano-phase-product} and \eqref{eq:three-point-theta} define the same function after applying identities following from the axioms of topological excitations:
\begin{equation}
	\Theta: \mathcal{M}^1(X,V)\to \{\pm1\}.
\end{equation}
Here $\Theta=-1$ corresponds to type-III statistics. From another perspective, the expression for $\Theta$ involves only mixed terms among three distinct excitations $x,y,z$, and thus agrees with our intuition for type-III statistics.

The special form of Eq.~\eqref{eq:three-point-theta} makes type-III statistics an obstruction to symmetric truncation: if $U_{\mathrm{whole}}(x)$ is a global symmetry transformation and all hopping operators commute with it, then $\Theta=1$ immediately follows. In other words, symmetric truncation cannot produce a double-role operator with statistics $=$ anomaly.

\subsection{Obstruction from braiding statistics}
\label{subsec:braiding-statistical-obstruction}

The same argument is more intuitive for braiding. As shown in Fig.~\ref{fig:ordinary-braiding-geometry}, take a closed-support hopping operator $U_B$ and an open-support hopping operator $U_A$ whose supports intersect exactly once. Here we assume that $U_B$ is a $B$-type global symmetry, while $U_A$ belongs to the patch family whose product gives an $A$-type global symmetry in Eq.~\eqref{eq:fixed-patch-factorization}; symmetric truncation requires every such open patch to commute with all global symmetries. The braiding statistical process is precisely
\begin{equation}
	[U_A,U_B]=U_A^{-1}U_B^{-1}U_AU_B.
\end{equation}
If the braiding statistics is nontrivial, this phase is not equal to $1$, so $U_A$ does not commute with the global symmetry $U_B$. This directly excludes the symmetric truncation described above. This conclusion follows solely from the braiding process itself, regardless of the anomaly of $U_B$.

\begin{figure}[H]
	\centering
	\begin{tikzpicture}[x=1.45cm,y=1.45cm]
		\draw[blue!70!black,line width=1.5pt]
		(0.50,0) ellipse (1.05 and 0.82);
		\draw[red!75!black,line width=1.7pt,line cap=round]
		(-1.62,0)--(0.72,0);
		\fill[red!75!black] (-1.62,0) circle (2.1pt);
		\fill[red!75!black] (0.72,0) circle (2.1pt);
		\fill[black] (-0.55,0) circle (1.8pt);
		\node[red!75!black,above] at (-1.20,0.04) {$U_A$};
		\node[blue!70!black,above right] at (0.92,0.55) {$U_B$};
	\end{tikzpicture}
\caption{Support geometry for ordinary braiding. $U_A$ has open support and $U_B$ has closed support; the two intersect at a single point in the interior of $U_A$.}
	\label{fig:ordinary-braiding-geometry}
\end{figure}

\subsection{Obstruction from two-dimensional loop statistics}
\label{subsec:loop-statistical-obstruction}

Another example consists of two types of $\ZZ_2$ loop excitations on the two-dimensional sphere $S^2$~\cite{Xue2026Statistics}. Divide $S^2$ into four patches $X,Y,Z,W$ with the topology of the surface of a tetrahedron, and denote the two families of loop hopping operators by $U_A(R)$ and $U_B(R)$, where $R\in\{X,Y,Z,W\}$. We assume that the two closed-support operators $U_A(S^2)$ and $U_B(S^2)$ are each formed by multiplying their respective four patch operators in a fixed order, and require every patch to commute with both global symmetries. In particular, symmetric truncation requires
\begin{align}
	U_A(X)U_B(S^2)&=U_B(S^2)U_A(X),\notag\\
	U_B(X)U_A(S^2)&=U_A(S^2)U_B(X).
	\label{eq:loop-symmetric-patch-condition}
\end{align}

The two independent loop statistics are detected by the following nested statistical processes:
\begin{align}
	&\left[U_A(Z),
	\left[U_A(Y),
	\left[U_A(X),U_B(S^2)\right]\right]\right]=\pm1,\notag\\
	&\left[U_B(Z),
	\left[U_B(Y),
	\left[U_B(X),U_A(S^2)\right]\right]\right]=\pm1.
	\label{eq:2d-loop-invariants}
\end{align}
If Eq.~\eqref{eq:loop-symmetric-patch-condition} holds, the innermost commutators in both expressions are already the identity, so both statistical phases can only be $+1$. In other words, no hopping family with nontrivial loop statistics can yield this symmetric truncation.

\section{The statistics--anomaly table of double-role operators}
\label{sec:double-role-operators}

In this section, we retain the operator form in Eq.~\eqref{eq: ansatz}, consisting of classical translations and local quantum phases, and study the relationship between statistics and anomaly when the same family of operators serves both as hopping operators and as symmetry patch operators. To determine the filling pattern calculated in this section, we further start from fixed canonical descendants, allow normalized natural cochain corrections and local diagonal rephasing, and require them to satisfy the phase compatibility condition Eq.~\eqref{eq:h-relaxed-double-role-master} below at the cochain level. Specifically, we consider a family of operators $D(s)$ labeled by $s\in C^{q-1}(X,G)$ that satisfies the axioms of topological excitations with statistics $[\omega_{\mathrm{stat}}]\in \mathsf H$, and also satisfies the definition of symmetry patch operators with anomaly $[\omega_{\mathrm{anom}}]\in \mathsf H$, where $\mathsf H=H^{d+2}(B^qG,\RZ)$. If such $D(s)$ exists within this ansatz, we call the pair $([\omega_{\mathrm{stat}}],[\omega_{\mathrm{anom}}])$ realizable, and define the filling pattern of the statistics--anomaly table by
\begin{equation}
\mathcal F:=\left\{([\omega_{\mathrm{stat}}],[\omega_{\mathrm{anom}}])\in\mathsf H\times\mathsf H:\ \text{realizable in the above ansatz}\right\}.
\end{equation}
The question in this section is to determine $\mathcal F$.

Section~\ref{sec:what-is-hopping} provides a systematic method for constructing these operators. Starting from a cohomology class $[\omega]\in H^{d+2}(B^qG,\RZ)$, Eqs.~\eqref{eq:hopping operator formula} and \eqref{eq:SPT anomalous symmetry} systematically construct, respectively, a hopping operator $U(s)$ with the corresponding statistics and a symmetry transformation $\mathcal{S}(z)$ with the corresponding anomaly.
To make the derivation in this section independent of the algebraic-topological arguments in Section~\ref{sec:what-is-hopping}, we restate the general formulas they satisfy, choosing the basepoint $b=0$ for simplicity. First, starting from the cocycle operation
\begin{equation}
	\omega: Z^q(-,G)\to Z^{d+2}(-,\RZ)
\end{equation}
we solve for the first layer of coherence data, where $t\in C^{q-1}(-,G)$ and $A\in Z^q(-,G)$:
\begin{equation}\label{eq:descendant1}
	\left\{
	\begin{aligned}
		&\dd \Theta^1(t;A)=\omega(A+\dd t)-\omega(A);\\
		&\Theta^1(0;A)=0.
	\end{aligned}
	\right.
\end{equation}
We then solve for the second layer of coherence data, where $s,v\in C^{q-1}(-,G)$:
\begin{equation}\label{eq:descendant2}
	\left\{
	\begin{aligned}
		&\dd L(s,v)=\Theta^1(s;\dd v)+\Theta^1(v;0)-\Theta^1(s+v;0);\\
		&L(s,0)=L(0,v)=0.
	\end{aligned}
	\right.
\end{equation}
This function $L$ gives constructions of both the hopping operator and the global symmetry:
\begin{equation}
	U(s)\ket{v}=\e^{2\pi\ii \int_XL(s,v)}\ket{v+s}
\end{equation} 
and
\begin{equation}\label{eq:symmetry systematic}
	\mathcal{S}(z)\ket{v}=\e^{2\pi\ii \int_XL(v,z)}\ket{v+z}.
\end{equation} 
For computational convenience, we omit the contribution of $\Theta^1$ to the overall phase in Eq.~\eqref{eq:SPT anomalous symmetry}, introducing a state-independent overall phase into the multiplication law:
\begin{equation}
	\mathcal{S}(z')\mathcal{S}(z)\propto \mathcal{S}(z'+z).
\end{equation}

Given
\begin{equation*}
 [\omega_{\mathrm{stat}}],[\omega_{\mathrm{anom}}]
 \in \mathsf H:=H^{d+2}(B^qG,\RZ),
\end{equation*}
we construct descendants
$L_{\omega_{\mathrm{stat}}}$ and $L_{\omega_{\mathrm{anom}}}$ from fixed representatives of these two classes, respectively. The hopping
operator obtained from $[\omega_{\mathrm{stat}}]$ is
\begin{equation}
 U_{\mathrm{stat}}(s)\ket v
 =\exp \left(2\pi\ii\int_X
 L_{\omega_{\mathrm{stat}}}(s,v)\right)\ket{v+s},
 \label{eq:double-role-canonical-hopping}
\end{equation}
while the canonical anomalous symmetry obtained from $[\omega_{\mathrm{anom}}]$, with the
state-independent strictification phase omitted, is written as
\begin{equation}
 \mathcal S_{\mathrm{anom}}(z)\ket v
 =\exp \left(2\pi\ii\int_X
 L_{\omega_{\mathrm{anom}}}(v,z)\right)\ket{v+z},
 \qquad \dd z=0.
 \label{eq:double-role-canonical-symmetry}
\end{equation}

We then ask whether $U_{\mathrm{stat}}(z)$ and $\mathcal S_{\mathrm{anom}}(z)$ are identical or differ at most by an overall phase. Note that $L(s,v)$ obtained from $[\omega]$ is generally not unique, leaving some freedom in the constructions of $U_{\mathrm{stat}}(z)$ and $\mathcal S_{\mathrm{anom}}(z)$. When solving
\begin{equation}
	U_{\mathrm{stat}}(z)\propto \mathcal S_{\mathrm{anom}}(z)
\end{equation}
we must fully account for this freedom. More explicitly, we may add an exact term $\dd K(t;A)$ to $\Theta^1(t;A)$ in Eq.~\eqref{eq:descendant1}, and correspondingly replace $L(s,v)$ by
\begin{equation}
	L(s,v)\mapsto L(s,v)+K(s;\dd v)+K(v;0)-K(s+v;0)
	\label{eq:correction}
\end{equation}
The normalization conditions on $\Theta^1$ and $L$ require
\begin{equation}
	K(0;\dd v)=0.
\end{equation}
With these corrections, the closed-support hopping operator and the symmetry transformation become

\begin{equation}
  \begin{aligned}
    &U_{\mathrm{stat}}(z)\ket v\\
    &\quad =\exp \left(2\pi\ii\int_X
	L_{\omega_{\mathrm{stat}}}(z,v)+K_{\mathrm{stat}}(z;\dd v)+K_{\mathrm{stat}}(v;0)-K_{\mathrm{stat}}(v+z;0)\right)\ket{v+z}.
  \end{aligned}
\end{equation}
\begin{equation}
  \begin{aligned}
    &\mathcal{S}_{\mathrm{anom}}(z)\ket v\\
    &\quad =\exp \left(2\pi\ii\int_X
	L_{\omega_{\mathrm{anom}}}(v,z)+K_{\mathrm{anom}}(v;0)+K_{\mathrm{anom}}(z;0)-K_{\mathrm{anom}}(v+z;0)\right)\ket{v+z}.
  \end{aligned}
\end{equation}

To make the two operators differ only by a state-independent phase, we impose the following matching condition at the cochain level: the difference between the differentials of the two integrand cochains
\begin{equation*}
	L_{\omega_{\mathrm{stat}}}(z,v)+K_{\mathrm{stat}}(z;\dd v)+K_{\mathrm{stat}}(v;0)-K_{\mathrm{stat}}(v+z;0)
\end{equation*}
and
\begin{equation*}
	L_{\omega_{\mathrm{anom}}}(v,z)+K_{\mathrm{anom}}(v;0)+K_{\mathrm{anom}}(z;0)-K_{\mathrm{anom}}(v+z;0)
\end{equation*}
must be independent of $v$ (a function of $z$ alone). Rearranging the equation gives
\begin{equation}
 \boxed{
 \dd\!\left(
 L_{\omega_{\mathrm{stat}}}(z,v)+K_{\mathrm{stat}}(z;\dd v)
 -L_{\omega_{\mathrm{anom}}}(v,z)
 -\phi(v+z)+\phi(v)
 \right)=h(z).}
 \label{eq:h-relaxed-double-role-master}
\end{equation}
Here $K_{\mathrm{stat}}(z;\dd v)$, $\phi(v)$, and $h(z)$ are natural transformations to be determined; $\phi$ absorbs both the difference between the $K(v;0)$ terms on the two sides and the allowed local diagonal rephasing, while terms depending only on $z$ are absorbed into $h$. By naturality, if this equation has a solution on the standard $d+1$ simplex, it automatically holds on general spaces. This specifies the ansatz used in this section: we start from fixed canonical descendants, allow the normalized natural $K$ corrections and local diagonal rephasing described above, and require a natural $h(z)$ satisfying this cochain identity. Thus, deciding whether a given pair $([\omega_{\mathrm{stat}}],[\omega_{\mathrm{anom}}])$ is realizable within this ansatz reduces to deciding whether a finite system of linear equations has a solution, which can be determined rigorously by computer for given parameters $d,q,G$. If a solution exists, we include the pair in $\mathcal F$. By linearity of the equation, the filling pattern $\mathcal F$ is a subgroup of $\mathsf H\times\mathsf H$. Some of our computational results are listed in Table~\ref{tab:h-relaxed-double-role-results}. Notably, in many cases, $\mathcal F$ is the anti-diagonal consisting of $([\omega],-[\omega])$. This minus sign is crucial: if the hopping operators are constructed by symmetric truncation, the results of Section~\ref{sec:what-is-hopping} imply that the realized pair has the form $([\omega],[\omega])$. When $[\omega]\ne-[\omega]$, the anti-diagonal filling pattern excludes the corresponding symmetric double-role realization within the ansatz of this section. This agrees with the obstructions to symmetric truncation given in Section~\ref{sec:statistical-obstruction}.

\begin{table}[H]
 \centering
 \footnotesize
 \renewcommand{\arraystretch}{1.22}
\caption{Statistics--anomaly filling patterns in the natural cochain ansatz of this paper}
 \label{tab:h-relaxed-double-role-results}
 \begin{tabularx}{0.99\textwidth}{@{}>{\raggedright\arraybackslash}p{0.30\textwidth}
   >{\raggedright\arraybackslash}p{0.27\textwidth}
   >{\raggedright\arraybackslash}X@{}}
 \toprule
Cohomology group $\mathsf H$ & Corresponding statistics & Filling pattern $\mathcal F\subseteq\mathsf H\times\mathsf H$\\
 \midrule
$H^3(BV,\RZ)\simeq\ZZ_2^7$, $V=\ZZ_2^3$
& One-dimensional particle fusion
& $\mathcal F=\{([\omega_{\mathrm{stat}}],[\omega_{\mathrm{anom}}]):[\omega_{\mathrm{anom}}]\text{ is not type-III}\}$, occupying half the table\\
 
   $H^4(B\ZZ_2^2,\RZ)\simeq\ZZ_2^2$
& Two-dimensional loop statistics
& $\mathcal F=\mathsf H\times\mathsf H$, with $16$ points\\

$H^4(B^2G,\RZ)$ (for the finite Abelian groups $G$ considered in our computations)
& Two-dimensional Abelian anyon statistics
 & $\mathcal F=\{([\omega],-[\omega])\mid[\omega]\in\mathsf H\}$\\

 $H^5(B^2\ZZ_2,\RZ)\simeq \ZZ_2$
& Fermionic loop
 & $\mathcal F=\{(0,0),(1,1)\}$\\
 
The $\ZZ_N$ subgroup of $H^{d+2}(B^d\ZZ_N\times B\ZZ_N,\RZ)$
generated by $\frac{1}{N^2}A_A\dd\bar A_B$
& Bockstein braiding of point excitations and codimension-$1$ excitations
& $\mathcal F=\mathsf H\times\mathsf H$, with $N^2$ points\\

The $\ZZ_N$ subgroup of $H^5(B^2(\ZZ_N^2),\RZ)$
generated by $\frac{1}{N^2}A_A\dd\bar A_B$
& Three-dimensional loop--loop Bockstein braiding
 & $\mathcal F=\{([\omega],-[\omega])\mid[\omega]\in\mathsf H\}$\\

The $\ZZ_N$ subgroup of $H^5(B^3\ZZ_N\times B^2\ZZ_N,\RZ)$
generated by $\frac{1}{N}A_A A_B$
& Three-dimensional particle--loop braiding
 & $\mathcal F=\{([\omega],-[\omega])\mid[\omega]\in\mathsf H\}$\\

$H^6(B^2\ZZ_N,\RZ)\simeq \ZZ_{N\gcd(N,3)}$, $N=2,3,4$
& Four-dimensional membrane statistics
 & $\mathcal F=\{([\omega],-2[\omega])\mid[\omega]\in\mathsf H\}$\\
 \bottomrule
 \end{tabularx}
\end{table}

\subsection{Two-dimensional loop statistics corresponding to \texorpdfstring{$H^4(B\ZZ_2^2,\RZ)$}{H4(BZ2²,R/Z)}}
\label{subsec:loop-statistics-anomaly-table}

As our first example, we consider the statistics and anomaly of the double-role operator $D(s)$ in two dimensions corresponding to
\begin{equation}
	\mathsf H=H^4(B(\ZZ_2\times\ZZ_2),\RZ)\cong\ZZ_2^2
\end{equation}
We find that the filling pattern is $\mathcal F=\mathsf H\times\mathsf H$. As the most nontrivial construction within it, we show how $D(s)$ can have trivial statistics and a nontrivial anomaly simultaneously.
 
Our construction is similar to that in Section~\ref{sec:relation}: we modify the ordinary translation operator
 
 \begin{equation}
 	X(s)\ket{v}=\ket{s+v},\quad s,v\in C^0(X,\ZZ_2^2)
 \end{equation}
to
 \begin{equation}
 	D(s)=X(s)\Phi(s),
 \end{equation}
where
 \begin{equation}
	\Phi(s)\ket{v}=\exp \left(2\pi\ii\int_X K(s;\dd v)\right)\ket{v}.
 \end{equation}
When $\Phi(s)$ has this form, $D(s)$ automatically has trivial statistics. Our goal is to choose an appropriate $K(s;\dd v)$ such that, for $z\in Z^0(X,\ZZ_2^2)$,
 
 \begin{equation}
	D(z)\ket{v}=\exp \left(2\pi\ii\int_X K(z;\dd v)\right)\ket{v+z}
 \end{equation}
forms an anomalous symmetry.

For any $\ZZ_2$-valued $0$-cochain $x$, let $\bar x\in C^0(X,\ZZ)$ be its standard integer lift taking values in $\{0,1\}$,
and denote the standard integer lift of $\dd x$ by $\overline{\dd x}\in C^1(X,\ZZ)$. In the integer-valued formulas below,
$x\,\dd x$ always denotes its standard integer lift taking values in $\{0,1\}$. We then have the identity
\begin{equation}
	\overline{\dd x}=\dd\bar x+2x\,\dd x.
\end{equation}
Hence
\begin{align}
	\frac{\overline{\dd x}\,\overline{\dd y}}{4}
	&=\frac{\dd\bar x\,\dd\bar y}{4}
	+\frac{x\,\dd x\,\dd\bar y+\dd\bar x\,y\,\dd y}{2}
	+x\,\dd x\,y\,\dd y\notag\\
	&\equiv
	\dd\!\left(\frac{\bar x\,\dd\bar y}{4}
	+\frac{(\bar x\bar y)\,\dd\bar y}{2}\right)
	{}+\frac{x(\dd x+\dd y)\dd y}{2}
	\pmod 1,
	\label{eq:quarter-domain-wall-product-local-identity}
\end{align}
where $\bar x\bar y$ is the pointwise product of two $0$-cochains. The last term in the first line is an integer-valued cochain, the numerator of the last term in the second line is computed modulo $2$, and the entire congruence is in $C^2(X,\RZ)$. The exact term integrates to zero on closed $X$, so
\begin{equation}
	\int_X \frac{1}{4}\overline{\dd v_A}\,\overline{\dd v_B}=\int_X \frac{1}{2}v_A(\dd v_A+\dd v_B)\dd v_B.
\end{equation}

Using this identity, for
$s=(s_A,s_B)\in C^0(X,\ZZ_2\times\ZZ_2)$ we define

\begin{equation}
	K(s;\dd v)=\frac{\overline{\dd v_A}\,\overline{\dd v_B}\,\bar s_A}{4}.
\end{equation}

Taking $s=\one_A$ (that is, $s_A=\one, s_B=0$), we have
\begin{equation}
	\int_XK(\one_A;\dd v)=\int_X \frac{1}{4}\overline{\dd v_A}\,\overline{\dd v_B}=\int_X \frac{1}{2}v_A(\dd v_A+\dd v_B)\dd v_B.
\end{equation}

The symmetry transformations corresponding to the two generating cocycles $\one_A,\one_B\in Z^0(X,\ZZ_2^2)$ of $\ZZ_2^2$ are therefore

\begin{equation}
	\left\{
	\begin{aligned}
		&D(\one_A)\ket{v_A,v_B}=(-1)^{\int_Xv_A(\dd v_A+\dd v_B)\dd v_B}\ket{v_A+\one,v_B};\\
			&D(\one_B)\ket{v_A,v_B}=\ket{v_A,v_B+\one}.
	\end{aligned}
	\right.
\end{equation}
One can verify by calculation that the same symmetry transformation can also be obtained from the following nontrivial cocycle\footnote{The term $\frac{1}{2}A_A^2A_B^2$ is a coboundary.}
\begin{equation}
	\omega=\frac{1}{2}A_A^2(A_A+A_B)A_B\in Z^4\!\left(B(\ZZ_2\times\ZZ_2),\RZ\right)
\end{equation}
by solving the descent relations Eqs.~\eqref{eq:descendant1} and \eqref{eq:descendant2}, so $D$ forms an anomalous symmetry. Thus, $D(s)$ is a double-role operator with trivial statistics and a nontrivial anomaly. Taking $K(s;\dd v)=\frac{\overline{\dd v_A}\,\overline{\dd v_B}\,\bar s_B}{4}$ or $\frac{\overline{\dd v_A}\,\overline{\dd v_B}\,(\bar s_A+\bar s_B)}{4}$ gives the other two anomaly classes.

The example in this section resembles the one-dimensional $\ZZ_2$ symmetry discussed in Section~\ref{sec:relation}, in that its filling pattern is the entire statistics--anomaly table $\mathsf H\times\mathsf H$. The difference, however, is that, for the reasons given in Section~\ref{subsec:loop-statistical-obstruction}, although diagonal pairs $([\omega],[\omega])\in \mathsf H\times\mathsf H$ are realizable, they cannot be constructed by symmetric truncation.

\subsection{Braiding statistics}
\label{subsec:braiding-anomaly-statistics}

In this section, we consider the braiding statistics of two families of excitations of dimensions $p_A,p_B$, respectively, each with fusion group $\ZZ_N$, in
$d=p_A+p_B+2$ spatial dimensions, and examine the relationship between statistics and anomaly when their hopping operators also serve as symmetry transformations. Section~\ref{subsec:braiding-statistical-obstruction} has already given the geometric picture of braiding statistics and explained why the hopping operators cannot arise from symmetric truncation of the corresponding symmetry transformations, so we do not repeat that discussion here. As usual, we denote the codimensions by $q_A=d-p_A$, $q_B=d-p_B$, which satisfy $q_A+q_B=d+2$.

Although this example involves excitations of different dimensions, the formulas for calculating hopping operators and global symmetries are similar; one need only expand
$A=(A_A,A_B)$ and $t=(t_A,t_B)$ where appropriate. Braiding statistics is classified by $\ZZ_N$, and in this notation the cocycle operation corresponding to its generator is
\begin{equation}
	\omega(A)=\frac1N A_AA_B.
	\label{eq:ordinary-cocycle-operation}
\end{equation}

Using $\dd A_A=\dd A_B=0$, the descent relation Eq.~\eqref{eq:descendant1} becomes
\begin{align}
 \omega(A+\dd t)-\omega(A)
 &=\frac1N\left(\dd t_AA_B+A_A\dd t_B+\dd t_A\dd t_B\right)\notag\\
 &=\dd\Theta^1(t;A),
 \label{eq:ordinary-first-descent}
\end{align}
where
\begin{equation}
 \Theta^1(t;A)
 =\frac1N\left(t_AA_B+(-1)^{q_A}A_At_B+t_A\dd t_B\right).
 \label{eq:ordinary-Theta}
\end{equation}
Substituting into Eq.~\eqref{eq:descendant2} gives
\begin{align}
 &\Theta^1(s;\dd v)+\Theta^1(v;0)
 -\Theta^1(s+v;0)\notag\\
 &\qquad=\frac1N\left((-1)^{q_A}\dd v_As_B-v_A\dd s_B\right)
 =\dd\!\left(\frac{(-1)^{q_A}}{N}v_As_B\right).
 \label{eq:ordinary-second-descent}
\end{align}
Thus, the second layer of coherence data can be chosen as
\begin{equation}
 \boxed{L_\omega(s,v)=\frac{(-1)^{q_A}}{N}v_As_B.}
 \label{eq:ordinary-L}
\end{equation}
This gives a global symmetry with anomaly $[\omega]$:
\begin{equation}
	\mathcal S(z_A,z_B)\ket{v_A,v_B}
	=\exp \left(\frac{2\pi\ii(-1)^{q_A}}{N}\int_Xz_Av_B\right)
	\ket{v_A+z_A,v_B+z_B}.
	\label{eq:ordinary-systematic-symmetry}
\end{equation}

The hopping operators, on the other hand, are
\begin{align}
 U(s_A,s_B)\ket{v_A,v_B}
 &=\exp \left(\frac{2\pi\ii(-1)^{q_A}}{N}\int_Xv_As_B\right)
 \ket{v_A+s_A,v_B+s_B}.
 \label{eq:ordinary-hopping-operators}
\end{align}

Setting the hopping parameter $s$ to the cocycle
$z=(z_A,z_B)$ gives
\begin{equation}
 U(z_A,z_B)\ket{v_A,v_B}
 =\exp \left(\frac{2\pi\ii(-1)^{q_A}}{N}\int_Xv_Az_B\right)
 \ket{v_A+z_A,v_B+z_B}.
 \label{eq:ordinary-closed-hopping}
\end{equation}

This formula closely resembles Eq.~\eqref{eq:ordinary-systematic-symmetry}, so it clearly forms an anomalous symmetry with anomaly $\pm[\omega]$. However, we must carefully determine the sign. We construct a finite-depth unitary

\begin{equation}
	\Phi\ket{v_A,v_B}=\exp \left(2\pi\ii\frac{(-1)^{q_A}}{N}\int_Xv_Av_B\right)\ket{v_A,v_B},
\end{equation} 

and define
\begin{equation}
	U'(s)=\Phi^{-1} U(s) \Phi.
\end{equation}
By definition, $U'(s)$ has the same anomaly as $U(s)$, and a direct calculation gives

\begin{equation}
	U'(z_A,z_B)\ket{v_A,v_B}
	=\exp \left(-\frac{2\pi\ii(-1)^{q_A}}{N}\int_Xz_Av_B\right)
	\ket{v_A+z_A,v_B+z_B}.
\end{equation}
Comparison with Eq.~\eqref{eq:ordinary-systematic-symmetry} shows that both $U'(z),U(z)$ have anomaly $-[\omega]$. Thus, this construction realizes all anti-diagonal pairs $([\omega],-[\omega])\in \mathsf H\times\mathsf H$. For low-dimensional examples, we solve the equation for the statistics--anomaly table, Eq.~\eqref{eq:h-relaxed-double-role-master}, and find that the filling pattern is precisely this anti-diagonal.

For two-dimensional anyon statistics, we have calculated the filling pattern for several more general finite Abelian groups $G$, again obtaining
\begin{equation}
	\{([\omega],-[\omega]):[\omega]\in H^4(B^2G,\RZ)\}.
\end{equation}
This covers anyon self-statistics and includes the special case of the semion. Thus, under the condition of being a double-role operator, anyon statistics does indeed correspond one-to-one with the anomaly of a $1$-form symmetry (see also Ref.~\cite{FengEtAl2025HigherFormAnomalies}), but care is needed with the sign. Moreover, since the fermion corresponds to the order-two element of $H^4(B^2\ZZ_2,\RZ)\simeq \ZZ_4$, we have $[\omega]=-[\omega]$, so its statistics and anomaly appear to agree.

\subsection{Bockstein braiding statistics}
\label{subsec:bockstein-braiding}

We consider two types of $\ZZ_N$ excitations of codimensions $q_A,q_B$, respectively, and fix
\begin{equation}
	1\le q_B\le q_A,
	\qquad d=q_A+q_B-1.
\end{equation}

Let $A_A\in Z^{q_A}(-,\ZZ_N)$ and $A_B\in Z^{q_B}(-,\ZZ_N)$, and denote
chosen integer lifts of these two cochains by $\bar A_A,\bar A_B$, for example,
$0\mapsto 0,\ldots,N-1\mapsto N-1$. The Bockstein homomorphism $\frac{\dd \bar A}{N}$ then gives a nontrivial cocycle operation
\begin{equation}
	\omega(A)=\frac1{N^2}A_A\,\dd\bar A_B \in Z^{d+2}(-,\RZ).
\end{equation}
It generates a subgroup $\mathsf H\simeq \ZZ_N$ of $H^{d+2}\!\left(B^{q_A}\ZZ_N\times B^{q_B}\ZZ_N,\RZ\right)$. Refs.~\cite{xue2026bocksteinbraidingstatisticsversus,HsinChen2026Bockstein} studied the corresponding statistics, called Bockstein braiding statistics, and constructed a statistical process that detects it. Here we do not elaborate on the geometric picture of these statistics, but focus on the filling pattern of double-role operators. We find:

\begin{itemize}
\item If $q_B=1$, then $\mathcal F=\mathsf H\times\mathsf H$.
\item If $q_B>1$, then $\mathcal F$ is the anti-diagonal generated by $([\omega],-[\omega])$.
\end{itemize}

We first construct $L(s,v)$ from $\omega$ by solving the descent relations Eqs.~\eqref{eq:descendant1} and \eqref{eq:descendant2}. The relevant calculations have already been carried out in Ref.~\cite{xue2026bocksteinbraidingstatisticsversus}. For
$v_A,s_A\in C^{q_A-1}(X,\ZZ_N)$ and
$v_B,s_B\in C^{q_B-1}(X,\ZZ_N)$, the second layer of coherence data can be chosen as
\begin{equation}
	L_0(s,v)=(-1)^{q_A}v_A
	\frac{\overline{\dd v_B+\dd s_B}-\overline{\dd v_B}-\dd\bar s_B}{N^2}.
	\label{eq:bockstein-L-zero}
\end{equation}
The subscript $0$ here indicates that no correction term has yet been added, rather than a trivial statistics class.

We directly replace $s$ by the cocycle $z=(z_A,z_B)$. The closed-support hopping operator corresponding to Eq.~\eqref{eq:bockstein-L-zero} is
\begin{equation}
	U_0(z_A,z_B)\ket{v_A,v_B}
	=\exp \left(-\frac{2\pi\ii}{N^2}\int_X\dd\bar v_A\,\bar z_B\right)
	\ket{v_A+z_A,v_B+z_B}.
	\label{eq:bockstein-unmodified-closed-hopping}
\end{equation}

When $q_B=1$, $\bar z_B$ remains a $0$-cocycle. This gives $\int_X\dd \bar v_A\bar z_B=0$, so $U_0$ is the simplest anomaly-free symmetry. In other words, as a double-role operator, $U_0$ occupies $([\omega],0)\in \mathsf H\times\mathsf H$.

When $q_B>1$, these operators generally fail to satisfy the projective multiplication law:
\begin{equation}
	U_0(z')U_0(z)\not\propto U_0(z'+z).
\end{equation}
Thus, it is not a symmetry at all, and whether it is anomalous or anomaly-free cannot yet be discussed. To address this, we add the correction term prescribed by Eq.~\eqref{eq:correction},
\begin{equation}
	K(s;\dd v)=\frac{\overline{\dd v_A}\,\bar s_B}{N^2},
	\label{eq:bockstein-K-correction}
\end{equation}
obtaining
\begin{equation}
	\boxed{
	L_{\mathrm{anom}}(s,v)=(-1)^{q_A}v_A
	\frac{\overline{\dd v_B+\dd s_B}-\overline{\dd v_B}-\dd\bar s_B}{N^2}
	+\frac{\overline{\dd v_A}\,\bar s_B}{N^2}.}
	\label{eq:bockstein-L}
\end{equation}
For a closed cochain $z=(z_A,z_B)$, the corresponding closed-support hopping operator is
\begin{equation}
	U_{\mathrm{anom}}(z_A,z_B)\ket{v_A,v_B}
	=\exp \left(
	\frac{2\pi\ii}{N^2}\int_X
	(\overline{\dd v_A}-\dd\bar v_A)\,\bar z_B
	\right)
	\ket{v_A+z_A,v_B+z_B}.
	\label{eq:bockstein-anomalous-closed-hopping}
\end{equation}
This formula satisfies the multiplication law and can therefore be regarded as a symmetry. For comparison, the anomalous global symmetry canonically constructed from $\omega$ is
\begin{equation}
	\mathcal{S}(z_A,z_B)\ket{v_A,v_B}
	=\exp \left(\frac{2\pi\ii}{N^2}\int_X
	(-1)^{q_A}\bar z_A
	(\overline{\dd v_B}-\dd\bar v_B)
	\right)\ket{v_A+z_A,v_B+z_B}.
	\label{eq:bockstein-systematic-symmetry}
\end{equation}

The two are clearly similar, and we must determine their relative sign. When the order of the two factors is exchanged, the integer Alexander--Whitney identity
\begin{equation}
	\dd(\bar A_A\bar A_B)
	=\dd\bar A_A\,\bar A_B+(-1)^{q_A}\bar A_A\,\dd\bar A_B
	\label{eq:bockstein-AW-sign}
\end{equation}
and graded commutativity give
\begin{equation}
	\left[\frac1{N^2}A_B\,\dd\bar A_A\right]
	=(-1)^{(q_A-1)(q_B-1)}[\omega].
	\label{eq:bockstein-order-swap}
\end{equation}
To compare these two orders at the operator level, we first make a local change of basis. Define the integer cochain
\begin{equation}
	c_A(v_A):=\frac{\overline{\dd v_A}-\dd\bar v_A}{N},
	\qquad \dd c_A(v_A)=\frac{\dd\overline{\dd v_A}}{N}.
	\label{eq:bockstein-integer-carry}
\end{equation}
When $q_B>1$, integrating the cup-$1$ identity over closed $X$ and using the divisibility of $\dd\bar z_B$ by $N$ gives
\begin{equation}
	\begin{aligned}
	&\frac1N\int_X\left(c_A\bar z_B-(-1)^{q_A(q_B-1)}\bar z_Bc_A\right)\\
	&\qquad\equiv\frac{(-1)^{q_A+1}}N\int_X\dd c_A\smile_1\bar z_B
	\pmod{\ZZ}.
	\end{aligned}
	\label{eq:bockstein-cup-one-exchange}
\end{equation}
The right-hand side can be absorbed into a local diagonal unitary: define
\begin{equation}
	\Phi\ket{v_A,v_B}
	:=\exp\left(\frac{2\pi\ii(-1)^{q_A+1}}N
	\int_X\dd c_A(v_A)\smile_1\bar v_B\right)\ket{v_A,v_B}.
	\label{eq:bockstein-local-rephasing}
\end{equation}
Since $\dd c_A(v_A)$ depends only on $\dd v_A$, it is invariant under translation by closed $z_A$; moreover,
$\overline{v_B+z_B}-\bar v_B$ differs from $\bar z_B$ only by a multiple of $N$.
Thus, the right-hand side of Eq.~\eqref{eq:bockstein-cup-one-exchange} is precisely the difference between the phases of $\Phi$ at $v+z$ and $v$. We obtain
\begin{equation}
	\Phi^{-1}U_{\mathrm{anom}}(z)\Phi\ket{v_A,v_B}
	=\exp\left(\frac{2\pi\ii(-1)^{q_A(q_B-1)}}N
	\int_X\bar z_Bc_A\right)\ket{v_A+z_A,v_B+z_B}.
	\label{eq:bockstein-rephased-closed-hopping}
\end{equation}
When $q_B=1$, $\bar z_B$ is a constant $0$-cochain, so the two cup-product orders already agree, and taking $\Phi=1$ gives the same formula.

A local change of basis does not change the anomaly. After exchanging $A,B$ and expressing the result in terms of $c_A$, the phase coefficient in Eq.~\eqref{eq:bockstein-systematic-symmetry} is $(-1)^{q_B}/N$, whereas the coefficient in Eq.~\eqref{eq:bockstein-rephased-closed-hopping} is $(-1)^{q_A(q_B-1)}/N$. They differ by a factor of $(-1)^{(q_A-1)(q_B-1)+1}$; together with Eq.~\eqref{eq:bockstein-order-swap}, this gives
\begin{equation}
	[\omega_{\mathrm{anom}}]
	=(-1)^{(q_A-1)(q_B-1)+1}
	\left[\frac1{N^2}A_B\,\dd\bar A_A\right]
	=-[\omega].
	\label{eq:bockstein-fixed-anomaly-sign}
\end{equation}
Thus, Eq.~\eqref{eq:bockstein-anomalous-closed-hopping} is a double-role operator realizing the pair
$([\omega],-[\omega])$. This construction applies equally to $q_B>1$ and $q_B=1$.

When $q_B>1$, the filling pattern $\mathcal F$ is the anti-diagonal generated by $([\omega],-[\omega])$; when $q_B=1$, this pair and the pair $([\omega],0)$ corresponding to $U_0$ together generate $\mathcal F=\mathsf H\times\mathsf H$.

This result sharply conflicts with the conclusion of Ref.~\cite{HsinChen2026Bockstein}. Its authors claim that closing the support of a hopping operator yields a global symmetry, and that the statistics of topological excitations reflects the symmetry anomaly. In the language of this paper, this amounts to claiming $[\omega_{\mathrm{stat}}]=[\omega_{\mathrm{anom}}]$ for double-role operators, precisely the claim that this paper disputes and repeatedly argues against. For $q_B=1$, $[\omega_{\mathrm{stat}}],[\omega_{\mathrm{anom}}]$ can take values independently in $\mathsf H$, without any condition relating them. Ref.~\cite{HsinChen2026Bockstein} in fact uses symmetric truncation to select several special cases with $[\omega_{\mathrm{stat}}]=[\omega_{\mathrm{anom}}]$. When $q_B>1$, however, the computed filling pattern is the anti-diagonal; for classes with $2[\omega]\ne0$, it does not contain the corresponding diagonal pair. This excludes the corresponding symmetric double-role realization satisfying $[\omega_{\mathrm{stat}}]=[\omega_{\mathrm{anom}}]$ within the ansatz of this section. Indeed, Ref.~\cite{HsinChen2026Bockstein} does not construct corresponding examples with $q_B>1$ either.

\subsection{Four-dimensional membrane statistics corresponding to \texorpdfstring{$H^6(B^2\ZZ_N,\RZ)$}{H6(B2ZN,R/Z)}}
\label{subsec:membrane-cubic-anomaly}

The filling pattern of $H^6(B^2\ZZ_N,\RZ)$ has yet another structure. Four-dimensional $\ZZ_N$ membrane statistics is classified by~\cite{Feng_2026}
\begin{equation}
	H^6(B^2\ZZ_N,\RZ)\cong\ZZ_{N\gcd(N,3)}.
	\label{eq:degree-six-b2zn-group}
\end{equation}
Let $A\in Z^2(-,\ZZ_N)$, and define the ordinary cubic cocycle operation
\begin{equation}
	\omega(A)=\frac1N A^3
	\label{eq:ordinary-cubic-class}
\end{equation}
This class has order $N$. When $3\nmid N$, $[\omega]$ generates the entire group $H^6(B^2\ZZ_N,\RZ)\simeq\ZZ_N$; when $3\mid N$, it generates only a subgroup of index $3$.

Take a closed oriented four-dimensional combinatorial sphere $X$, and let $s,v\in C^1(X,\ZZ_N)$. Solving the descent relations \eqref{eq:descendant1} and \eqref{eq:descendant2} for Eq.~\eqref{eq:ordinary-cubic-class} gives one solution
\begin{equation}
	L_\omega(s,v)=\frac1N v\bigl(s\,\dd v+(\dd v)s+s\,\dd s\bigr).
	\label{eq:cubic-second-descendant}
\end{equation}

Following Eq.~\eqref{eq:correction}, add the correction term
\begin{equation}
	K(s;\dd v)=-\frac1N\left[
		(\dd v)(\dd v\smile_1s)+2(\dd v\smile_1s)(\dd v)
	\right].
	\label{eq:cubic-normalized-correction}
\end{equation}

We now compute the corrected closed-support hopping operator. Replace $s$ by $z\in Z^1(X,\ZZ_N)$. Ordered cup products satisfy
\begin{align}
	\dd\bigl((\dd v)\smile_1z\bigr)&=(\dd v)z-z\,\dd v,\notag\\
	\dd(v\smile_1z)&=((\dd v)\smile_1z)-vz-zv.
	\label{eq:cubic-cup-one-identities}
\end{align}
The first identity and Stokes' formula give
\begin{equation}
	\int_XL_\omega(z,v)
	=\frac1N\int_X\left(2vz\,\dd v+(\dd v)\bigl((\dd v)\smile_1z\bigr)\right).
	\label{eq:cubic-closed-descendant-step-one}
\end{equation}
Multiplying the second identity in Eq.~\eqref{eq:cubic-cup-one-identities} by $\dd v$ on the right and integrating gives
\begin{equation}
	\int_X\bigl((\dd v)\smile_1z\bigr)\dd v
	=\int_X\bigl(vz\,\dd v+zv\,\dd v\bigr).
	\label{eq:cubic-closed-descendant-step-two}
\end{equation}
Combining Eqs.~\eqref{eq:cubic-normalized-correction}--\eqref{eq:cubic-closed-descendant-step-two}, we obtain
\begin{align}
	\int_X\bigl[L_\omega(z,v)+K(z;\dd v)\bigr]
	&=\frac1N\int_X
	\left(2vz\,\dd v-2\bigl((\dd v)\smile_1z\bigr)\dd v\right)\notag\\
	&=-\frac2N\int_Xzv\,\dd v.
	\label{eq:cubic-repaired-closed-phase}
\end{align}

The corrected closed-support hopping operator is then
\begin{equation}
	\widetilde U_\omega(z)\ket{v}
	=\exp \left(-\frac{4\pi\ii}{N}\int_Xzv\,\dd v\right)
	\ket{v+z}.
	\label{eq:cubic-repaired-closed-hopping}
\end{equation}

On the other hand, substituting Eq.~\eqref{eq:cubic-second-descendant} directly into Eq.~\eqref{eq:symmetry systematic} gives the symmetry transformation
\begin{equation}
	\mathcal S_\omega(z)\ket{v}
	=\exp \left(\frac{2\pi\ii}{N}\int_Xzv\,\dd v\right)
	\ket{v+z}.
	\label{eq:cubic-systematic-symmetry}
\end{equation}
Because it is constructed in the standard form of Section~\ref{sec:what-is-hopping}, its anomaly is precisely $[\omega]$ by definition. Comparing the two formulas shows that $\widetilde U_\omega(z)$ is a symmetry transformation with anomaly $-2[\omega]$. We have therefore explicitly realized the pair

\begin{equation}
	\boxed{
		\bigl([\omega_{\mathrm{stat}}],[\omega_{\mathrm{anom}}]\bigr)
		=\bigl([\omega],-2[\omega]\bigr).}
	\label{eq:cubic-statistics-anomaly-pair}
\end{equation}

Computer calculations for $N=2,3,4$ show that
\begin{equation}
	\mathcal F=\left\{([\omega],-2[\omega])\mid[\omega]\in H^6(B^2\ZZ_N,\RZ)\right\}
	\label{eq:cubic-statistics-anomaly-graph}
\end{equation}
gives the entire filling pattern for $N=2,3,4$. In particular, for $N=2$, $[\omega_{\mathrm{stat}}]$ can be either trivial or nontrivial, while the corresponding $[\omega_{\mathrm{anom}}]$ is always trivial. When $N=3$, a generator of $H^6(B^2\ZZ_N,\RZ)\simeq \ZZ_9$
is
\begin{equation}
	p_3(A):=
	\frac19\left[A^3+(A\smile_1\dd A)A-A(A\smile_1\dd A)\right]
	\in Z^6(-,\RZ)
	\label{eq:z3-primitive-generator}
\end{equation}
The calculation in this example is rather involved. We used Codex to construct $L(s,v)$ directly through the prism integral method described in Section~\ref{subsec:dw-gauge-transformations} and to write code solving Eq.~\eqref{eq:h-relaxed-double-role-master}. The result agrees with Eq.~\eqref{eq:cubic-statistics-anomaly-graph}.

\section{Conclusion}
\label{sec:conclusion}

Using the axiomatic definitions of topological excitations and statistics in Ref.~\cite{Xue2026Statistics}, we have compared these notions carefully with symmetry and anomaly. We argue for the following interpretation:

\begin{itemize}
	\item From the perspective of symmetry defects, a global symmetry transformation can be truncated to a patch. The boundary coherence data for multiplication of symmetry patch operators reflects the anomaly of the global symmetry.
	\item From the perspective of topological excitations, hopping operators are particular elements of the symmetric operator algebra of a global symmetry. Their statistics likewise reflects its anomaly.
	\item The superselection sector determined by the conservation law of topological excitations corresponds to the symmetric subspace of the global symmetry. From a holographic perspective, topological excitations can be viewed as gauge fluxes on a charge-condensation boundary of DW theory in one higher dimension, and their statistics corresponds to the DW twist.
\end{itemize}

From the perspective of gauging, we couple boundary matter to a DW gauge field in one higher dimension and obtain the same gauge-invariant space. After fixing the gauge field, global symmetry is the matter-space representation of the Gauss action that preserves the background, and the symmetric conditions are precisely the remaining Gauss conditions. Hopping is instead the representation of a gauge-invariant physical operator. Their commutativity follows from that of $G$ and $\mathcal U$ in the full system.

In contrast, we argue that identifying symmetry defects with topological excitations, and symmetry patch operators with hopping operators, is incorrect: the statistics and anomaly of the resulting double-role operator are generally different. We have studied the filling patterns of double-role operators within the natural cochain ansatz above for various dimensions and symmetries, but have not found a universal rule. We have also observed that, when symmetric truncation of a global symmetry does produce double-role operators, statistics and anomaly do generally agree. Such symmetric truncations, however, do not always exist.

Several aspects of this work remain incomplete. First, Theorem~\ref{thm:symmetric-hopping-criterion} derives the topological excitation axioms for hopping operators from the symmetric condition. Although this statement applies to arbitrary $H^{d+2}(B^qG,\RZ)$, it still assumes that the Hilbert space and operators have the standard form in Eq.~\eqref{eq: ansatz}. For a general Hilbert space, we do not know what conditions should single out the hopping operators.

Second, the intricate properties of filling patterns within the natural cochain ansatz remain to be understood. For the realizable pairs $([\omega_{\mathrm{stat}}],[\omega_{\mathrm{anom}}])\in\mathcal F$, we conjecture that
\begin{itemize}
	\item the projection of $\mathcal F$ onto the statistics factor covers all of $\mathsf H=H^{d+2}(B^qG,\RZ)$;
	\item when $q>1$, $\mathcal F$ is the graph of $[\omega]\mapsto k[\omega]$ for some $k\in \ZZ$.
\end{itemize}
We have neither proved these conjectures nor found counterexamples. A further open question is whether the complexity of double-role operators makes them a rich subject of study or instead points to a misguided concept that is not worth pursuing. There is no objective answer to this question, but it determines the direction of future work.

Third, our definition of topological excitations follows Ref.~\cite{Xue2026Statistics}. Its relation to other theories deserves careful comparison, especially to the entanglement bootstrap, which is also a microscopic axiomatic approach. We are particularly interested in whether the differences between these theories are closely related to the intricate properties of double-role operators.

Despite these limitations, we believe the main significance of this work is to call for care when relating statistics and anomaly. There is broad agreement that the two concepts are related, but without fully established definitions, it is difficult to explain precisely how that relation should be formulated. We have shown that two ways of establishing the relation can both appear intuitive, while only one seems to be correct. A careful distinction between the concepts is therefore essential.

\section*{Acknowledgments}
\addcontentsline{toc}{section}{Acknowledgments}

We thank Yitao Feng, Shenghan Jiang, Bowen Shi, Xiao-Gang Wen, and Carolyn Zhang for helpful discussions.

\clearpage
\appendix
\numberwithin{equation}{section}

\section{Tetrahedral interpretation of the boundary-operator relations}
\label{app:tetrahedron-coherence}

In the main text, translations and $\widetilde T$ in the SPT matter space gave the group law for $\mathcal S$ and the commutativity of $U$ with $\mathcal S$. Here we compare the boundary phases directly and explain how these two relations arise by assembling the four faces of a tetrahedron. We use the descent relations of Section~\ref{subsec:dw-gauge-transformations}, take $M\simeq D^{d+1}$ and $X=\partial M$, and label the left, top, right, and bottom vertices in each diagram by $0,1,2,3$, respectively.

\subsection{The symmetry group law}
\label{app:symmetry-group-law}

Let $z,z'\in Z^{q-1}(X,G)$, choose closed bulk extensions $\widetilde z,\widetilde z'$, and use $\widetilde z+\widetilde z'$ as the closed extension of $z+z'$. We first consider a single symmetry action. It takes the gauge path $b\xrightarrow{r}b+\dd r$ to $b\xrightarrow{r+\widetilde z}b+\dd r$, corresponding to right composition $r\mapsto r\circ\widetilde z$:
\[
\gaugetriangle{b}{b}{b+\dd r}
  {\widetilde z}{r}{\widetilde z+r}
\]
The phase in Eq.~\eqref{eq:SPT anomalous symmetry} is the sum of the phase for traversing $b\xrightarrow{\widetilde z}b$ in reverse and the integral of the triangle phase over $X$. The states $\mathcal S(z')\mathcal S(z)\ket v_X$ and $\mathcal S(z'+z)\ket v_X$ have the same label. Their phase difference involves three such triangles and three bulk gauge paths. Together they form the following tetrahedron, with $r_X=v$:
\[
\begin{gaugetetrahedron}{b}{b}{b}{b+\dd r}
  \gaugetetrahedronedges{\widetilde z'}{\widetilde z}{r}
    {\widetilde z'+\widetilde z}{\widetilde z+r}{\widetilde z'+\widetilde z+r}
\end{gaugetetrahedron}
\]
First, the three $\Theta^1$ terms correspond to the three edges of the face $[012]$. Equation~\eqref{eq:theta-2-descent} gives
\begin{equation}
  \int_M\!\left[
    \Theta^1(\widetilde z';b)+\Theta^1(\widetilde z;b)
    -\Theta^1(\widetilde z'+\widetilde z;b)\right]
  =\int_X\Theta^2(z,z';b_X).
\end{equation}
The phase difference between the two operator actions, divided by $2\pi$, is therefore exactly the oriented sum of the other three faces and $[012]$:
\begin{align}
  &\int_X\!\left[
    \Theta^2(v,z;b_X)-\Theta^2(v,z'+z;b_X)
    +\Theta^2(v+z,z';b_X)-\Theta^2(z,z';b_X)
  \right]\notag\\
  &\hspace{25mm}=\int_X\dd\Theta^3(v,z,z';b_X)=0.
\end{align}
The last equality uses $\partial X=\varnothing$. This yields
\begin{equation}
  \mathcal S(z')\mathcal S(z)=\mathcal S(z'+z).
\end{equation}
We can also see here that the $v$-independent bulk phase $-\int_M\Theta^1(\widetilde z;b)$ in $\mathcal S$ supplies the fourth face, allowing the three boundary phases to form a closed tetrahedron.

\clearpage
\subsection{Commutativity of hopping and symmetry}
\label{app:hopping-symmetry-coherence}

For $s\in C^{q-1}(X,G)$ and $z\in Z^{q-1}(X,G)$, $U(s)$ corresponds to left composition of gauge paths, while $\mathcal S(z)$ corresponds to right composition. The two orders are related by compatibility of composition:
\begin{equation}
  (s\circ v)\circ z\simeq s\circ(v\circ z).
\end{equation}
More explicitly, both sides of Eq.~\eqref{eq:bimodule} are phase multiples of $\ket{v+s+z}_X$. The bulk phases in $\mathcal S$ cancel in the comparison, and the remaining boundary phases form the four faces of the following tetrahedron:
\[
\begin{gaugetetrahedron}{b_X}{b_X+\dd z=b_X}{b_X+\dd v}{b_X+\dd(v+s)}
  \gaugetetrahedronedges{z}{v}{s}{z+v}{v+s}{z+v+s}
\end{gaugetetrahedron}
\]
The four oriented triangles give, respectively,
\begin{equation}
  \begin{array}{c@{\quad\longleftrightarrow\quad}l}
    \lbrack123\rbrack & \Theta^2(s,v;b_X),\\
    \lbrack023\rbrack & \Theta^2(s,v+z;b_X),\\
    \lbrack013\rbrack & \Theta^2(v+s,z;b_X),\\
    \lbrack012\rbrack & \Theta^2(v,z;b_X).
  \end{array}
  \label{eq:theta3-face-identification}
\end{equation}
By Eq.~\eqref{eq:theta-3-descent}, the total phase difference is
\begin{align}
  \dd\Theta^3(s,v,z;b_X)
  &={}\Theta^2(s,v;b_X)-\Theta^2(s,v+z;b_X)\notag\\
  &\quad+\Theta^2(v+s,z;b_X)-\Theta^2(v,z;b_X).
  \label{eq:boundary-theta3-coherence}
\end{align}
Its integral over the closed boundary $X$ vanishes. Hence
\begin{equation}
  U(s)\mathcal S(z)=\mathcal S(z)U(s),
  \qquad
  U(s)\Pi_{\mathrm{sym}}=\Pi_{\mathrm{sym}}U(s).
\end{equation}
This gives a geometric expression of the operator argument in the main text: closed translations commute with $\widetilde T$ in the SPT matter space, while on the effective boundary degrees of freedom, the same relation becomes the compatibility of phases carried by the four faces of a tetrahedron.

\clearpage
\bibliographystyle{unsrt}
\bibliography{references}

\begin{thebibliography}{10}

\bibitem{Adler1969AxialAnomaly}
Stephen~L. Adler.
\newblock Axial-vector vertex in spinor electrodynamics.
\newblock {\em Physical Review}, 177(5):2426--2438, 1969.

\bibitem{BellJackiw1969PCAC}
J.~S. Bell and R.~Jackiw.
\newblock A {PCAC} puzzle: {$\pi^0\to\gamma\gamma$} in the {$\sigma$}-model.
\newblock {\em Il Nuovo Cimento A}, 60(1):47--61, 1969.

\bibitem{tHooft1980AnomalyMatching}
Gerard {'t Hooft}.
\newblock Naturalness, chiral symmetry, and spontaneous chiral symmetry
  breaking.
\newblock In Gerard {'t Hooft}, C.~Itzykson, A.~Jaffe, H.~Lehmann, P.~K.
  Mitter, I.~M. Singer, and R.~Stora, editors, {\em Recent Developments in
  Gauge Theories}, volume~59 of {\em NATO Advanced Study Institutes Series B:
  Physics}, pages 135--157. Springer US, 1980.

\bibitem{GaiottoEtAl2015GeneralizedSymmetries}
Davide Gaiotto, Anton Kapustin, Nathan Seiberg, and Brian Willett.
\newblock Generalized global symmetries.
\newblock {\em Journal of High Energy Physics}, 2015(2):172, 2015.

\bibitem{ChenGuLiuWen2013}
Xie Chen, Zheng-Cheng Gu, Zheng-Xin Liu, and Xiao-Gang Wen.
\newblock Symmetry protected topological orders and the group cohomology of
  their symmetry group.
\newblock {\em Physical Review B}, 87(15):155114, 2013.

\bibitem{ElseNayak2014}
Dominic~V. Else and Chetan Nayak.
\newblock Classifying symmetry-protected topological phases through the
  anomalous action of the symmetry on the edge.
\newblock {\em Physical Review B}, 90:235137, 2014.

\bibitem{Wilczek1982FractionalSpin}
Frank Wilczek.
\newblock Quantum mechanics of fractional-spin particles.
\newblock {\em Physical Review Letters}, 49(14):957--959, 1982.

\bibitem{Kitaev2003ToricCode}
A.~Yu. Kitaev.
\newblock Fault-tolerant quantum computation by anyons.
\newblock {\em Annals of Physics}, 303(1):2--30, 2003.

\bibitem{Kitaev2006Anyons}
Alexei Kitaev.
\newblock Anyons in an exactly solved model and beyond.
\newblock {\em Annals of Physics}, 321(1):2--111, 2006.

\bibitem{WangLevin2014LoopBraiding}
Chenjie Wang and Michael Levin.
\newblock Braiding statistics of loop excitations in three dimensions.
\newblock {\em Physical Review Letters}, 113(8):080403, 2014.

\bibitem{Fidkowski_2022}
Lukasz Fidkowski, Jeongwan Haah, and Matthew~B. Hastings.
\newblock Gravitational anomaly of $(3+1)$-dimensional {$\mathbb{Z}_2$} toric
  code with fermionic charges and fermionic loop self-statistics.
\newblock {\em Physical Review B}, 106(16):165135, October 2022.

\bibitem{Kobayashi_2026}
Ryohei Kobayashi, Yuyang Li, Hanyu Xue, Po-Shen Hsin, and Yu-An Chen.
\newblock Generalized statistics on lattices.
\newblock {\em Physical Review X}, 16(1), January 2026.

\bibitem{Xue2026Statistics}
Hanyu Xue.
\newblock Statistics of {A}belian topological excitations.
\newblock {\em Physical Review B}, 113:045143, 2026.

\bibitem{Feng_2026}
Yitao Feng, Hanyu Xue, Yuyang Li, Meng Cheng, Ryohei Kobayashi, Po-Shen Hsin,
  and Yu-An Chen.
\newblock Anyonic membranes and pontryagin statistics.
\newblock {\em Physical Review Letters}, 136(8), February 2026.

\bibitem{xue2026bocksteinbraidingstatisticsversus}
Hanyu Xue.
\newblock Bockstein braiding statistics versus three-loop braiding, 2026.

\bibitem{FengEtAl2025HigherFormAnomalies}
Yitao Feng, Ryohei Kobayashi, Yu-An Chen, and Shinsei Ryu.
\newblock Higher-form anomalies on lattices.
\newblock {\em Physical Review Letters}, 136(4):046504, 2026.

\bibitem{LevinWen2003}
Michael Levin and Xiao-Gang Wen.
\newblock Fermions, strings, and gauge fields in lattice spin models.
\newblock {\em Physical Review B}, 67(24):245316, 2003.

\bibitem{KawagoeLevin2020}
Kyle Kawagoe and Michael Levin.
\newblock Microscopic definitions of anyon data.
\newblock {\em Physical Review B}, 101(11):115113, 2020.

\bibitem{XueWen2026Holographic}
Hanyu Xue and Xiao-Gang Wen.
\newblock Holographic theory of mixed-dimensional statistics and
  conservation-encoding hopping-operator algebras, 2026.

\bibitem{EilenbergMacLane1954}
Samuel Eilenberg and Saunders Mac~Lane.
\newblock On the groups {$H(\Pi,n)$}, {II}: Methods of computation.
\newblock {\em Annals of Mathematics}, 60(1):49--139, 1954.

\bibitem{JoyalStreet1993}
Andr\'e Joyal and Ross Street.
\newblock Braided tensor categories.
\newblock {\em Advances in Mathematics}, 102(1):20--78, 1993.

\bibitem{ChatterjeeWen2023}
Arkya Chatterjee and Xiao-Gang Wen.
\newblock Symmetry as a shadow of topological order and a derivation of
  topological holographic principle.
\newblock {\em Physical Review B}, 107:155136, 2023.

\bibitem{HsinChen2026Bockstein}
Po-Shen Hsin and Yu-An Chen.
\newblock Bockstein braiding statistics, 2026.

\bibitem{LessaChengWang}
Leonardo~A. Lessa, Meng Cheng, and Chong Wang.
\newblock Mixed-state quantum anomaly and multipartite entanglement.
\newblock {\em Physical Review X}, 15:011069, 2025.

\bibitem{feng2026paulistabilizerformalismtopological}
Yitao Feng, Hanyu Xue, Ryohei Kobayashi, Po-Shen Hsin, and Yu-An Chen.
\newblock Pauli stabilizer formalism for topological quantum field theories and
  generalized statistics, 2026.

\bibitem{ShiKatoKim2020}
Bowen Shi, Kohtaro Kato, and Isaac~H. Kim.
\newblock Fusion rules from entanglement.
\newblock {\em Annals of Physics}, 418:168164, 2020.

\bibitem{Shi2020Verlinde}
Bowen Shi.
\newblock Verlinde formula from entanglement.
\newblock {\em Physical Review Research}, 2(2):023132, 2020.

\bibitem{KimRanard2024}
Isaac~H. Kim and Daniel Ranard.
\newblock Classifying 2d topological phases: Mapping ground states to
  string-nets, 2024.

\end{thebibliography}

\end{document}